\documentclass[%
reprint,
superscriptaddress,
 amsmath,amssymb,
 aps,
 pra,
]{revtex4-2}

\usepackage{graphicx}
\usepackage{dcolumn}
\usepackage{bm}
\usepackage{float}
\usepackage{comment}
\usepackage{hyperref}

\usepackage[T1]{fontenc}

\usepackage[caption=false]{subfig}

\usepackage{amsmath}
\usepackage{amssymb}
\usepackage{mathtools}

\usepackage{amsthm}
\newtheorem{theorem}{Theorem}[section]

\newtheorem{proposition}[theorem]{Proposition}
\newtheorem*{remark}{Remark}

\makeatletter
\newsavebox{\@brx}
\newcommand{\llangle}[1][]{\savebox{\@brx}{\(\m@th{#1\langle}\)}%
  \mathopen{\copy\@brx\kern-0.5\wd\@brx\usebox{\@brx}}}
\newcommand{\rrangle}[1][]{\savebox{\@brx}{\(\m@th{#1\rangle}\)}%
  \mathclose{\copy\@brx\kern-0.5\wd\@brx\usebox{\@brx}}}
\makeatother

\begin{document}

\preprint{APS/123-QED}

\title{Entanglement Purification in Quantum Networks:\\ Guaranteed Improvement and Optimal Time}

\author{Allen Zang}
\affiliation{Pritzker School of Molecular Engineering, University of Chicago, Chicago, IL, USA}

\author{Xin-An Chen}
\affiliation{Department of Electrical and Computer Engineering, University of Illinois Urbana-Champaign, Urbana, IL, USA}

\author{Eric Chitambar}
\affiliation{Department of Electrical and Computer Engineering, University of Illinois Urbana-Champaign, Urbana, IL, USA}

\author{Martin Suchara}
\affiliation{Microsoft Azure Quantum, Microsoft Corporation, Redmond, WA, USA}

\author{Tian Zhong}
\affiliation{Pritzker School of Molecular Engineering, University of Chicago, Chicago, IL, USA}

\date{\today}

\begin{abstract}
    While the concept of entanglement purification protocols (EPPs) is straightforward, the integration of EPPs in network architectures requires careful performance evaluations and optimizations that take into account realistic conditions and imperfections, especially probabilistic entanglement generation and quantum memory decoherence. 
    It is important to understand what is guaranteed to be improved from successful EPP with arbitrary non-identical input, which determines whether we want to perform the EPP at all. When successful EPP can offer improvement, the time to perform the EPP should also be optimized to maximize the improvement. 
    In this work, we study the guaranteed improvement and optimal time for the CNOT-based recurrence EPP, previously shown to be optimal in various scenarios. 
    We firstly prove guaranteed improvement for multiple figures of merit, including fidelity and several entanglement measures when compared to practical baselines as functions of input states. However, it is noteworthy that the guaranteed improvement we prove does not imply the universality of the EPP as introduced in \href{https://arxiv.org/abs/2407.21760}{arXiv:2407.21760}. Then we prove robust, parameter-independent optimal time for typical error models and figures of merit. We further explore memory decoherence described by continuous-time Pauli channels, and demonstrate the phenomenon of optimal time transition when the memory decoherence error pattern changes.
    Our work deepens the understanding of EPP performance in realistic scenarios and offers insights into optimizing quantum networks that integrate EPPs.
\end{abstract}

\maketitle

\section{Introduction}\label{sec:intro}
Quantum networks~\cite{kimble2008quantum,wehner2018quantum} promise to drive new scientific and technological advances in distributed quantum information processing, such as quantum key distribution~\cite{bennett2014quantum}, distributed quantum computing~\cite{gottesman1999demonstrating, jiang2007distributed, monroe2014large,cacciapuoti2019quantum,cuomo2020towards,barral2024review}, and distributed quantum sensing~\cite{komar2014quantum,proctor2018multiparameter,zhang2021distributed}. In the near term, the first generation (1G) quantum repeater~\cite{briegel1998quantum,munro2015inside,muralidharan2016optimal,azuma2023quantum} is the most suitable architecture for realizing quantum networks, where heralded entanglement generation~\cite{moehring2007entanglement,ritter2012elementary,hofmann2012heralded,bernien2013heralded}, entanglement swapping~\cite{zukowski1993event,pan1998experimental}, and entanglement purification protocols (EPPs)~\cite{bennett1996purification,deutsch1996quantum,dur2007entanglement} are utilized to distribute high-fidelity entanglement between quantum network nodes. Recent experimental advances allowed laboratory-scale~\cite{pompili2021realization,hermans2022qubit} and even metropolitan-scale~\cite{knaut2024entanglement,liu2024creation,stolk2024metropolitan} quantum network demonstrations. 
Despite these experimental milestones, the achievable fidelity of entanglement distribution still requires significant improvements to enable practical applications. Therefore, it is critically important to integrate and optimize the operation of EPPs in practical quantum networks beyond conceptual demonstrations~\cite{pan2001entanglement,pan2003experimental,hu2021long,ecker2021experimental,reichle2006experimental,kalb2017entanglement,yan2022entanglement}. 
While crucial, the optimization of the EPP policy in quantum networks is also extremely difficult due to the size of the state space~\cite{khatri2021towards,khatri2021policies,khatri2022design}. Specifically, there is a temporal degree of freedom, i.e. the EPP can be performed at different times between the time when input states become available and the time when entangled states need to be consumed, such as in quantum repeaters with a buffer time~\cite{santra2019quantum,zang2023entanglement} or with continuous entanglement generation~\cite{chakraborty2019distributed,kolar2022adaptive,inesta2023performance,ghaderibaneh2022pre,zhan2025design}, where there is almost always a time period between successful entanglement generation and utilization of the entanglement. Also, the probabilistic nature of entanglement generation together with quantum memory decoherence introduces time-dependent and non-identical input entangled states. 

Different from prior studies that evaluated the overall performance of specific families of quantum repeater network architectures with EPP~\cite{dur1999quantum,ladd2006hybrid,hartmann2007role,razavi2009quantum,razavi2009physical,bratzik2013quantum,zang2023entanglement,victora2023entanglement,zang2024quantum,mantri2024comparing,haldar2025reducing}, we aim at further understanding the properties of EPPs in practical quantum network scenarios to guide the optimization of EPP operation. This has remained largely unexplored, but it is practically important as the study of universality for EPPs~\cite{zang2024no} has revealed fundamental limits in EPP performance. Our two central questions are:
\begin{enumerate}
    \item Can the EPP guarantee any type of improvement conditioned on success, when the input states are arbitrary and thus generally non-identical (for instance, the states can have different fidelities or differ in other figures of merit)?
    \item When entangled states are generated successively due to probabilistic entanglement generation, and quantum memories also decohere over time, what is the best scheduling strategy to perform the EPP?
\end{enumerate}
Both questions emerge naturally from quantum network scenarios with probabilistic entanglement generation and noisy quantum memories. Moreover, they are practically important because the answer to the first question determines whether we want to perform the EPP at all in each given scenario, while the answer to the second question offers us the insight on how to improve the quality of entanglement distribution through the timing of the EPP. For the first question, we perform extensive analytical studies of different forms of noisy input states and figures of merit, and prove various practical quantities that are guaranteed to be improved conditioned on success. It is noteworthy that the baselines for guaranteed improvement do not necessarily involve only the better input state, so that the no-go theorems of universal EPPs~\cite{zang2024no} are not violated. For the second question, we combine analytics with numerics to prove parameter-independent optimal time for performing the EPP, and demonstrate the intriguing phenomenon of optimal time transition when the decoherence model varies.

Among the various types of EPPs, we focus on the canonical BBPSSW~\cite{bennett1996purification}/DEJMPS~\cite{deutsch1996quantum} (recurrence) protocol based on bilocal CNOT because it is arguably the most experiment friendly EPP~\cite{pan2001entanglement,pan2003experimental,hu2021long,ecker2021experimental,reichle2006experimental,kalb2017entanglement,yan2022entanglement}. In addition, recent analytical and computational studies~\cite{rozpkedek2018optimizing,preti2022optimal,jansen2022enumerating} imply its optimality for identical input states of different forms, and there is also evidence of its optimality with non-identical input states~\cite{chen_in_prep}. We note that if input states are known, additional ad hoc optimizations of EPPs~\cite{krastanov2019optimized} can improve the output. However, in practical quantum networks the input states are not static but dynamic, which means that the exact characterization of input states for each EPP shot is impractical. This further justifies our choice of focusing on a fixed EPP.

The paper is organized as follows. In Sec.~\ref{sec:prelim}, we review the necessary background and the chosen figures of merit that will be used for performance evaluations. In Sec.~\ref{sec:improv}, we demonstrate guaranteed improvement from the EPP by comparing the successful output of EPP given non-identical input states, with varying baselines constructed from the input states, using the figures of merit mentioned in Sec.~\ref{sec:prelim}. We further study the optimal time to perform the EPP under a prototypical, realistic quantum network scenario with probabilistic entanglement generation and quantum memory decoherence in Sec.~\ref{sec:opt_time}. Conclusion and discussion are in Sec.~\ref{sec:discussion} and proofs are in the appendices.

\section{Preliminaries}\label{sec:prelim}

\subsection{Recurrence entanglement purification protocol}
The EPP we investigate takes two noisy two-qubit entangled states between Alice and Bob, $\rho_1$ and $\rho_2$, as input. Alice and Bob apply the CNOT gate to the two qubits each of them holds, treating one qubit from $\rho_1$ as the control and the other from $\rho_2$ as the target. Then both parties measure the target qubit in the computational basis and communicate the classical measurement result to each other. The purification process is successful if both measurement results have equal parity, and unsuccessful otherwise. If unsuccessful, the unmeasured qubit pair is also discarded. The circuit diagram of this EPP is illustrated in Fig.~\ref{fig:recurrence_EPP}. 
\begin{figure}[t]
    \centering
    \includegraphics[width=0.5\columnwidth]{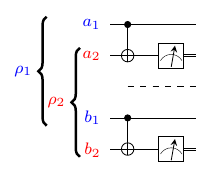}
    \caption{Circuit of the recurrence EPP.}
    \label{fig:recurrence_EPP}
\end{figure}

We will restrict ourselves to Bell diagonal states (BDS) whose density matrices are diagonal in the Bell basis, as any two-qubit states can be converted to a BDS with fidelity unchanged via Pauli twirling~\cite{dur2005standard,emerson2007symmetrized,dankert2009exact}. By tracking how the input states are transformed throughout the circuit,  the output fidelity conditioned on success and the success probability can be derived~\cite{dur2007entanglement} for BDS. For general BDS in the form of $\rho_\mathrm{BDS}(\vec{\lambda})=\lambda_1\Phi^+ + \lambda_2\Phi^- + \lambda_3\Psi^+ + \lambda_4\Psi^-$, $\lambda_1+\lambda_2+\lambda_3+\lambda_4=1$ as input, one characterized by $\vec{\lambda}=(\lambda_1,\lambda_2,\lambda_3,\lambda_4)$ and the other by $\vec{\lambda}'=(\lambda_1',\lambda_2',\lambda_3',\lambda_4')$, the successful output state is described by the following transformed parameters
\begin{subequations}\label{eqn:bds_elems_dejmps}
\begin{eqnarray}
&&\lambda_\mathrm{1,succ} = \frac{\lambda_1\lambda_1' + \lambda_2\lambda_2'}{p_\mathrm{BDS}} = F_\mathrm{BDS},\\
&& \lambda_\mathrm{2,succ} = \frac{\lambda_1\lambda_2' + \lambda_2\lambda_1'}{p_\mathrm{BDS}},\\
&& \lambda_\mathrm{3,succ} = \frac{\lambda_3\lambda_3' + \lambda_4\lambda_4'}{p_\mathrm{BDS}},\\
&& \lambda_\mathrm{4,succ} = \frac{\lambda_3\lambda_4' + \lambda_4\lambda_3'}{p_\mathrm{BDS}},
\end{eqnarray}
\end{subequations}
where the first diagonal element of the output state in the Bell basis $\lambda_\mathrm{1,succ}$ is the output fidelity $F_\mathrm{BDS}$, and the denominator is the success probability
\begin{equation}\label{eqn:bds_succ_prob}
    p_\mathrm{BDS} = (\lambda_1 + \lambda_2)(\lambda_1' + \lambda_2') + (\lambda_3 + \lambda_4)(\lambda_3' + \lambda_4').
\end{equation}

\subsection{Figures of merit}\label{sec:fom}
Here we describe the figures of merit we use throughout the work. We focus on metrics that quantify entanglement quality. Quantum communication protocols such as gate and qubit teleportation require entangled states be in specific Bell state forms. The first and perhaps the most important figure of merit for EPP performance is the fidelity $F$ of successful output state $\rho$ w.r.t. a specific \textit{pure} Bell state $\psi_\mathrm{Bell}$, i.e. $F=\mathrm{Tr}(\rho \psi_\mathrm{Bell})$, where $\psi_\mathrm{Bell}$ denotes the density matrix of a pure Bell state (one of $\Phi^+,\Phi^-,\Psi^+,\Psi^-$). Other metrics we consider are entanglement measures (monotones), including concurrence, (logarithmic) negativity, and distillable entanglement. In the following we provide their explicit formulae for BDS.

\subsubsection{Concurrence}
Concurrence~\cite{hill1997entanglement,wootters1998entanglement} for a two-qubit state $\rho$, $\mathcal{C}(\rho)$, is defined as $\mathcal{C}(\rho) = \max\{0, e_1 - e_2 - e_3 - e_4\}$, where $e_i$ are eigenvalues in decreasing order of operator $R(\rho) = \sqrt{\sqrt{\rho}\tilde{\rho}\sqrt{\rho}}$ with $\tilde{\rho}= (\sigma_y\otimes\sigma_y)\rho^*(\sigma_y\otimes\sigma_y)$ being the spin-flipped state of $\rho$, and $\rho^*$ being complex conjugate of $\rho$ written in $\sigma_z$ basis. For BDS we have the following expression of concurrence
\begin{equation}\label{eqn:concurrence}
    \mathcal{C}(\rho_\mathrm{BDS}) = \max\{0, \lambda_1-\lambda_2-\lambda_3-\lambda_4\} = 2\lambda_1 - 1,
\end{equation}
which is still linear in $F$ as $\lambda_1$ equals to fidelity $F$, thus qualitatively the same as the fidelity for BDS.

\subsubsection{Logarithmic negativity}\label{sec:negativity}
(Logarithmic) negativity~\cite{plenio2005logarithmic} is derived from the positive partial transpose (PPT) criterion~\cite{peres1996separability,horodecki2001separability}. The negativity $\mathcal{N}$ and logarithmic negativity $E_N$ are defined as $\mathcal{N}(\rho_{AB}) = \left\vert\sum_{e_i<0}e_i\right\vert = \sum_i\frac{\vert e_i\vert-e_i}{2}$ and $E_N(\rho_{AB}) = \log_2\left[2\mathcal{N}(\rho_{AB}) + 1\right]$, respectively, where $\rho_{AB}^\mathrm{T_A}$ is the partial transpose of the bi-partite state $\rho_{AB}$ w.r.t. the subsystem $A$, and $e_i$ are eigenvalues of $\rho_{AB}^\mathrm{T_A}$. Then for all BDS we have its negativity
\begin{equation}
    \mathcal{N}(\rho_\mathrm{BDS}) = \frac{\lambda_1-\lambda_2-\lambda_3-\lambda_4}{2} = \frac{2\lambda_1 - 1}{2},
\end{equation}
which is linear in fidelity $F$, and only differs from concurrence by a factor of 1/2, thus these two are equivalent. Correspondingly, the logarithmic negativity is
\begin{equation}\label{eqn:log_neg}
    E_N(\rho_\mathrm{BDS}) = \log_2(2F - 1 + 1) = 1 + \log_2(F),
\end{equation}
which equals 1 when $F=1$, i.e. pure Bell state $\Psi_{AB}^+$, as expected. 

\subsubsection{Distillable entanglement}
Now we consider an operational entanglement measure, the distillable entanglement $E_D(\rho)$ which quantifies the rate of maximally entangled state that can be distilled from the ensemble of state $\rho$. In general, distillable entanglement is difficult to compute, but there exist lower and upper bounds. For lower bound we consider the (one-way) coherent information~\cite{schumacher1996quantum}, and for upper bound we consider the Rains' bound~\cite{rains1999bound,rains2001semidefinite}. For BDS with fidelity $F$, we can calculate the Rains' bound according to Theorem 8 in~\cite{rains1999bound} as $R(F) = 1 - H_b(F)$. For two-qubit state $\rho_{AB}$, the coherent information is defined as $I_{A\rangle B} = -H(A\vert B) = H(\rho_B) - H(\rho_{AB})$, where $\rho_B = \mathrm{tr}_A(\rho_{AB})$, and $H(\rho) = -\mathrm{tr}[\rho\log(\rho)]$ is the von Neumann entropy. Then for all BDS the coherent information is
\begin{equation}
    I_{A\rangle B}^\mathrm{BDS} = 1 + \sum_{i=1}^4\lambda_i\log_2 \lambda_i.
\end{equation}

Now we consider two examples of BDS. For rank-2 BDS with fidelity $F$ we have 
\begin{equation}
    I_{A\rangle B}^{(2)} = 1 - H_b(F),
\end{equation}
where $H_b(p) = -p\log_2 p - (1-p)\log_2(1-p)$ is the binary entropy function, and the superscript denotes rank-2. Similarly for the Werner state we have
\begin{equation}
    I_{A\rangle B}^{W} = 1 + \left[F\log_2 F + (1-F)\log_2\frac{1-F}{3}\right].
\end{equation}
Therefore, the distillable entanglement for rank-2 BDS is exactly $E_D(F) = 1 - H_b(F)$, while for the Werner state the upper and lower bounds are not tight.

\subsection{Dynamics of BDS stored in noisy quantum memory}
To study the optimal time of entanglement purification, it is essential to know how the entangled state stored in the quantum memories evolves over time under memory decoherence. As we have mentioned, this work focuses on BDS whose density matrix has at most four non-zero elements. We are able to derive analytical formulae of how the four diagonal density matrix elements of a noisy EPR pair evolve under two independent single-qubit continuous-time Pauli channels that describe the decoherence process of two remote quantum memories which are used to store one EPR pair over time. There are two main factors for this error model, namely the \textit{error pattern} which represents the ratio between three Pauli error components, and the \textit{decoherence rate} which determines the time scale of fidelity decay from the decoherence channel, and can be understood as the probability of having some error within a unit time. The explicit formulae and detailed derivations are presented in Appendix~\ref{sec:bds_dynamics}. Note that the independence of qubit error channels is justified by the fact that qubits (quantum memories) on different network nodes are far away from each other, thus there is effectively zero physical interaction that can generate correlation.

\section{Guaranteed Improvement}\label{sec:improv}
To explore what is guaranteed to be improved from successful entanglement purification, we consider all possible combinations of two input states with fidelity above 1/2 (to make sure that there is entanglement), for rank-2 BDS, Werner state, and general BDS. We assume that the two input states have identical parametrization and only differ in fidelities.

\subsection{Rank-2 BDS}
Without loss of generality, for the specific protocol chosen for this work, we choose rank-2 BDS which corresponds to $|\phi^+\rangle$ undergoing bit-flip channel (Pauli $X$ error), i.e. $\rho_\mathrm{2-BDS}(F)=F\Phi^+ + (1-F)\Psi^+$. We are interested in the output fidelity w.r.t. the fidelities of the two input states. For this class of input states, as long as the input fidelities are above 1/2, the output fidelity conditioned on success is guaranteed to be no lower than the higher input fidelity.
\begin{proposition}\label{thm:rank2bds_fidincr}
    Let $F^{(2)}(F_1,F_2)$ be the successful output fidelity of the recurrence EPP. Given two rank-2 BDS input states $\rho_\mathrm{2-BDS}(F_1)$ and $\rho_\mathrm{2-BDS}(F_2)$ with $F_1,F_2\geq 1/2$, we have $F^{(2)}(F_1,F_2)\geq\max\{F_1,F_2\}$.
\end{proposition}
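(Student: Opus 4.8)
The plan is to derive a closed-form expression for the successful output fidelity by specializing the general BDS transformation rules in Eqs.~\eqref{eqn:bds_elems_dejmps} and~\eqref{eqn:bds_succ_prob} to the rank-2 family. Writing $\rho_\mathrm{2-BDS}(F)$ in the $\vec\lambda$ parametrization gives $\vec\lambda = (F,0,1-F,0)$, so the two inputs are $(F_1,0,1-F_1,0)$ and $(F_2,0,1-F_2,0)$. Substituting into the formula for $\lambda_\mathrm{1,succ}$ and the success probability, all cross terms involving $\lambda_2,\lambda_4$ vanish, and I expect to obtain
\begin{equation*}
F^{(2)}(F_1,F_2) = \frac{F_1 F_2}{F_1 F_2 + (1-F_1)(1-F_2)}.
\end{equation*}

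Next I would reduce to the case $F_1 \geq F_2$ without loss of generality, since the expression is symmetric in $F_1,F_2$, so that $\max\{F_1,F_2\} = F_1$ and the goal becomes the single inequality $F^{(2)}(F_1,F_2) \geq F_1$. Because the denominator $D := F_1 F_2 + (1-F_1)(1-F_2)$ is strictly positive whenever $F_1,F_2 \geq 1/2$ (indeed $F_1 F_2 \geq 1/4$), the cleanest route is to clear denominators and prove the equivalent polynomial inequality $F_1 F_2 \geq F_1 D$, i.e.\ $F_2 \geq D$ after cancelling the positive factor $F_1$.

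The crux is then a short factorization: I would compute $F_2 - D$ and expect it to collapse to $(1-F_1)(2F_2-1)$. Both factors are manifestly nonnegative under the hypotheses --- $1-F_1 \geq 0$ since a fidelity never exceeds $1$, and $2F_2-1 \geq 0$ precisely because $F_2 \geq 1/2$ --- which yields $F_2 \geq D$ and hence the claim. There is no genuine obstacle here beyond bookkeeping; the one point that warrants care is confirming the denominator is nonzero so that the cancellation step is legitimate, and observing that the factorization pinpoints exactly where the threshold hypothesis $F \geq 1/2$ is used. This framing also makes transparent that equality holds iff $F_1 = 1$ or $F_2 = 1/2$, a convenient sanity check against the expected boundary behavior.
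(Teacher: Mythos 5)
Your proposal is correct and follows essentially the same route as the paper's proof: both derive $F^{(2)}(F_1,F_2)=F_1F_2/[F_1F_2+(1-F_1)(1-F_2)]$, reduce to $\max\{F_1,F_2\}=F_1$ by symmetry, and rest on the same key factorization (the paper writes the numerator of the difference as $F_1(1-F_1)(2F_2-1)$, which is your $(1-F_1)(2F_2-1)$ before cancelling the positive factor $F_1$). Your added remarks on the positivity of the denominator and the equality cases are consistent with, though not stated in, the paper's argument.
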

This result suggests that the 2-to-1 recurrence EPP is \textit{universal}~\cite{zang2024no} for rank-2 BDS with fidelity threshold 1/2. This fact can be well understood by considering the error detection mechanism of this specific EPP, which can detect a Pauli $X$ error on the Bell pair. Thus post-selection allows the improvement of fidelity.

Recall other figures of merit other than fidelity, such as concurrence and negativity which are linear in BDS fidelity, and also those nonlinear in fidelity such as logarithmic negativity and distillable entanglement. All the aforementioned entanglement measures monotonically increase with increasing fidelity, and thus from Prop.~\ref{thm:rank2bds_fidincr} we conclude that the recurrence EPP guarantees the successfully purified state improved concurrence, (logarithmic) negativity, and distillable entanglement in comparison to both input states.

\subsection{Werner state}
Rank-2 BDS represent Bell states under completely biased noise with only one Pauli component. Now we switch to Werner state which corresponds to the unbiased depolarizing channel. First of all, it can be explicitly verified that with Werner states as input of the recurrence EPP, the output fidelity upon success is not guaranteed to be at least as high as the higher input fidelity for an arbitrary input fidelity pair $(F_1,F_2)\in[1/2,1]\times[1/2,1]$. For instance for two two-qubit Werner states with fidelities 0.9 and 0.95, the successful output fidelity is approximately $0.946<0.95$. This means that the recurrence EPP is \textit{not universal}~\cite{zang2024no} for Werner states, or general BDS. This is because the specific purification circuit can only detect a Pauli $X$ error but will fail for a Pauli $Z$ error, while a Werner state is a Bell state subject to not only $X$ (and $Y\simeq XZ$) error but also $Z$ error with equal strength. This is exactly why entanglement pumping~\cite{dur2003entanglement} cannot arbitrarily improve entanglement fidelity. 

Having made this observation, we ask if for Werner state inputs there is any quantity that is guaranteed to be improved. In fact, we can prove that the output fidelity is guaranteed to be higher than at most the average input fidelity. Specifically, we consider a parametrized baseline which is a convex combination of the higher input fidelity and the lower input fidelity: $F_\mathrm{baseline}(\lambda,F_1,F_2)=\lambda\max\{F_1,F_2\}+(1-\lambda)\min\{F_1,F_2\}$, and average input fidelity is equal to $F_\mathrm{baseline}(1/2,F_1,F_2)$.
\begin{proposition}\label{thm:werner_fidincr}
    Let $F^{(W)}(F_1,F_2)$ be the successful output fidelity of the recurrence EPP. Given two input Werner states $\rho_\mathrm{W}(F_1)$ and $\rho_\mathrm{W}(F_2)$ with $F_1,F_2\geq 1/2$, we have $F^{(W)}(F_1,F_2)\geq F_\mathrm{baseline}(\lambda,F_1,F_2)$ for $\lambda\in[0,1/2]$, while for $\lambda\in(1/2,1]$ there always exist $F_1,F_2\geq 1/2$ s.t. $F^{(W)}(F_1,F_2)< F_\mathrm{baseline}(\lambda,F_1,F_2)$.
\end{proposition}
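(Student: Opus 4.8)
The plan is to reduce the entire $\lambda$-family of inequalities to a single explicit rational-function comparison and then analyze one polynomial. First I would substitute the Werner parametrization $\vec{\lambda}=(F,\tfrac{1-F}{3},\tfrac{1-F}{3},\tfrac{1-F}{3})$ into Eqs.~\eqref{eqn:bds_elems_dejmps} and \eqref{eqn:bds_succ_prob}. A short computation yields the closed form
\[
F^{(W)}(F_1,F_2)=\frac{10F_1F_2-F_1-F_2+1}{8F_1F_2-2F_1-2F_2+5},
\]
whose denominator equals $9\,p_\mathrm{BDS}$ and is therefore manifestly positive on $[1/2,1]^2$. Without loss of generality I assume $F_1\le F_2$, so that $\min=F_1$, $\max=F_2$, and $F_\mathrm{baseline}(\lambda,F_1,F_2)=F_1+\lambda(F_2-F_1)$ is linear and nondecreasing in $\lambda$.

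The key structural observation is that over $\lambda\in[0,1/2]$ the baseline is maximized at $\lambda=1/2$, so it suffices to prove the single average-fidelity bound $F^{(W)}(F_1,F_2)\ge\tfrac12(F_1+F_2)$; the claim for all smaller $\lambda$ then follows from monotonicity. To prove this I would clear the positive denominator and examine the sign of the numerator. Passing to the symmetric variables $s=F_1+F_2$ and $p=F_1F_2$, the numerator becomes $N=2s^2-7s+2+4(5-2s)\,p$. Since $s\le 2$ forces the coefficient $4(5-2s)\ge 4>0$, $N$ is strictly increasing in $p$, so over the slice $\{F_1,F_2\in[1/2,1],\,F_1+F_2=s\}$ it is minimized where the two fidelities are maximally separated. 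I would then split into the cases $s\le 3/2$ (boundary $p=\tfrac{s}{2}-\tfrac14$, i.e.\ $F_1=\tfrac12$) and $s\ge 3/2$ (boundary $p=s-1$, i.e.\ $F_2=1$); the minimized numerators factor cleanly as $-(2s-3)(s-1)$ and $-3(2s-3)(s-2)$ respectively, each nonnegative on its $s$-interval. This establishes Part~1.

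For Part~2 I would simply exhibit a boundary witness: taking $F_1=\tfrac12$, $F_2=1$ gives $F^{(W)}=\tfrac34$, while $F_\mathrm{baseline}(\lambda,\tfrac12,1)=\tfrac{1+\lambda}{2}$, so $F^{(W)}<F_\mathrm{baseline}(\lambda)$ holds precisely when $\lambda>1/2$. This single pair therefore witnesses the failure for every $\lambda\in(1/2,1]$ at once; if strictly entangled inputs with $F_i>1/2$ are preferred, continuity of both sides lets me perturb slightly into the interior while preserving the strict inequality.

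The main obstacle is Part~1: controlling the bilinear numerator $N$ over a two-dimensional feasible region. The decisive simplifications are (i) recognizing that monotonicity of the baseline collapses the whole $\lambda$-family to a single average-fidelity comparison, and (ii) rewriting $N$ in $(s,p)$ so that its linearity in $p$ reduces the two-dimensional optimization to a one-dimensional boundary check admitting an explicit factorization. The edge points $(F_1,F_2)\in\{(\tfrac12,\tfrac12),(\tfrac12,1),(1,1)\}$, where $N$ vanishes and $F^{(W)}$ equals the average exactly, are precisely what make the threshold $\lambda=1/2$ sharp and simultaneously supply the Part~2 counterexample.
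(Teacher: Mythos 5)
Your proposal is correct, and its skeleton matches the paper's: reduce the whole $\lambda$-family to the single average-fidelity bound via monotonicity of $F_\mathrm{baseline}$ in $\lambda$, clear the manifestly positive denominator $8F_1F_2-2F_1-2F_2+5=9p_\mathrm{BDS}$, prove nonnegativity of the resulting numerator, and exhibit the boundary pair $(F_1,F_2)=(1,\tfrac12)$ as the witness for $\lambda>\tfrac12$ (your difference $\tfrac{1-2\lambda}{4}$ is in fact the correct value there; the paper's stated $\tfrac34-\tfrac{3\lambda}{2}$ is off by a harmless factor but has the same sign threshold). Where you genuinely diverge is the positivity argument for the numerator. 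The paper views $N$ as a concave quadratic in $F_1$ (leading coefficient $2-8F_2<0$), pushes the minimum to $F_1\in\{\tfrac12,1\}$, observes each restriction is again a concave quadratic in $F_2$, and finishes by checking that all four corners of the square vanish. You instead pass to the symmetric variables $s=F_1+F_2$, $p=F_1F_2$, use that $N=2s^2-7s+2+4(5-2s)p$ is increasing in $p$ to push the minimum to the maximally separated boundary, and factor the two resulting one-variable restrictions as $-(2s-3)(s-1)$ and $-3(2s-3)(s-2)$. The paper's nested-concavity argument is slightly leaner (it only ever evaluates at four points), while your $(s,p)$ reduction makes the vanishing locus and the sharpness of $\lambda=\tfrac12$ more transparent, since the explicit factorizations show exactly where equality occurs; both are complete and rigorous.
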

The best baseline $F_\mathrm{baseline}(1/2,F_1,F_2)$ equals to the fidelity of the average input state (AIS) $\overline{\rho^\mathrm{in}}=[\rho_\mathrm{W}(F_1)+\rho_\mathrm{W}(F_2)]/2$, which describes the ensemble if we randomly pick $\rho_\mathrm{W}(F_1)$ or $\rho_\mathrm{W}(F_2)$ with equal probability. This corresponds to the scenario where we have no information about the input state and thus the only strategy is to pick randomly with no bias. We can thus operationally understand the result as that, with recurrence purification protocol, the performance of quantum information processing using the successful output is guaranteed to be better than using the input states directly, no matter what the input states are, because any state can be twirled into Werner form. In fact, we can further consider $\lambda>1/2$ for $F_\mathrm{baseline}(\lambda,F_1,F_2)$, and we discover that although there is no guaranteed improvement with respect to $F_\mathrm{baseline}(\lambda,F_1,F_2)$ on the entire $(F_1,F_2)\in[1/2,1]\times[1/2,1]$ plane, we still have guaranteed improvement for higher fidelity threshold as long as $\lambda<2/3$. The details can be found in Appendix~\ref{sec:proof_improv}.

Combining the above result and the monotonicity with BDS fidelity of the entanglement measures considered in this work, the concurrence (logarithmic) negativity, and distillable entanglement (upper and lower bound) of the output state are guaranteed to be improved w.r.t., at most, those of the AIS conditioned on success for Werner states. Note that for the distillable entanglement lower bound we assume that the output state of successful entanglement purification has also been twirled to maintain the Werner form, as in the original BBPSSW paper~\cite{bennett1996purification}.

As the upper and lower bounds of distillable entanglement considered in this work generally do not collapse to each other, the above statement does not guarantee the improvement of the exact distillable entanglement. However, it is possible to examine the improvement of exact distillable entanglement given explicit upper and lower bounds, according to the straightforward argument that if the lower bound of the output quantity is higher than the upper bound of the input quantity then the output is surely higher than the input. Indeed, we are able to analytically determine fidelity threshold of the input Werner state for guaranteed distillable entanglement improvement w.r.t. distillable entanglement of AIS as follows. We note that for this specific proposition we consider the twirled output state which has lower distillable entanglement than the untwirled one, and the guaranteed improvement in the case where the output state is not twirled into the Werner form can be studied similarly.
\begin{proposition}\label{thm:werner_de_improv}
    The distillable entanglement of the successful output state (after twriling) from the recurrence EPP with two input Werner states $\rho_\mathrm{W}(F_1)$ and $\rho_\mathrm{W}(F_2)$, is guaranteed to be higher than the distillable entanglement of the AIS $\overline{\rho^\mathrm{in}}=[\rho_\mathrm{W}(F_1)+\rho_\mathrm{W}(F_2)]/2$, if $F_1,F_2\geq F_\mathrm{th}$ with $F_\mathrm{th}\approx0.939$ being the fidelity threshold.
\end{proposition}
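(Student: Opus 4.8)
The plan is to exploit the fact that every density matrix entering the comparison is a Werner state, reducing the problem to a one-parameter family, and then to sandwich the (uncomputable) distillable entanglements between the coherent-information lower bound and the Rains upper bound from Sec.~\ref{sec:fom}. First I would observe that the AIS $\overline{\rho^\mathrm{in}}=[\rho_\mathrm{W}(F_1)+\rho_\mathrm{W}(F_2)]/2$ is itself the Werner state $\rho_\mathrm{W}(\bar{F})$ with $\bar{F}=(F_1+F_2)/2$, since Werner states form a line segment parametrized by fidelity, and that the successful output, after twirling, is the Werner state $\rho_\mathrm{W}(F_\mathrm{out})$ whose fidelity $F_\mathrm{out}=F^{(W)}(F_1,F_2)$ follows by substituting the Werner parameters into Eq.~\eqref{eqn:bds_elems_dejmps}, yielding the explicit rational function $F_\mathrm{out}=(10F_1F_2-F_1-F_2+1)/(8F_1F_2-2F_1-2F_2+5)$.

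Because $E_D$ of a Werner state has no closed form, I would invoke $E_D(\rho_\mathrm{W}(F_\mathrm{out}))\geq I_{A\rangle B}^{W}(F_\mathrm{out})$ and $E_D(\rho_\mathrm{W}(\bar{F}))\leq R(\bar{F})=1-H_b(\bar{F})$, so that it suffices to establish the gap inequality $G(F_1,F_2):=I_{A\rangle B}^{W}(F_\mathrm{out})-R(\bar{F})>0$ throughout the region $F_1,F_2\geq F_\mathrm{th}$. The key simplification is to collapse this two-variable inequality onto the diagonal. Holding $\bar{F}$ fixed and writing $F_{1,2}=\bar{F}\mp\delta$, one finds $F_1F_2=\bar{F}^2-\delta^2$, and a short computation shows that the derivative of $F_\mathrm{out}$ with respect to $\delta^2$ has the constant sign of $24\bar{F}-42<0$ (the $\delta^2$ terms cancel in the numerator of the quotient rule). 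Hence $F_\mathrm{out}$ strictly decreases with the spread; since $I_{A\rangle B}^{W}$ is increasing for fidelity above $1/4$ — and $F_\mathrm{out}\geq\bar{F}$ by Prop.~\ref{thm:werner_fidincr}, placing us in that regime — while $R(\bar{F})$ is independent of $\delta$, the gap $G$ is monotone in the spread.

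Combined with the symmetry $G(F_1,F_2)=G(F_2,F_1)$, this pushes the minimum of $G$ over $[F_\mathrm{th},1]^2$ onto the boundary, and a check of the edges $F_1=F_\mathrm{th}$ and $F_2=1$ then shows $G$ is smallest at the diagonal corner $(F_\mathrm{th},F_\mathrm{th})$, the only nontrivial candidate (at the pure-state corner $(1,1)$ both bounds equal $1$, giving the trivial equality $E_D=1$ for input and output alike). It therefore remains to analyze the single-variable function $g(F):=I_{A\rangle B}^{W}(F_\mathrm{out}(F,F))-R(F)$ with $F_\mathrm{out}(F,F)=(10F^2-2F+1)/(8F^2-4F+5)$, to show that $g$ changes sign exactly once on $(1/2,1)$, and to locate that crossing numerically at $F_\mathrm{th}\approx0.939$, so that $g\geq0$ — and thus the claimed distillable-entanglement improvement over the AIS — holds for all $F\geq F_\mathrm{th}$.

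The hard part will be this final step: $g$ is transcendental (a difference of binary-entropy-type expressions), its root has no closed form, and $g$ is extremely flat near the crossing, which makes a purely numerical sign claim fragile. I would therefore certify the sign of $g$ rigorously by bounding the logarithmic terms with explicit rational estimates on a small neighbourhood of the crossing, confirming $g<0$ just below $F_\mathrm{th}$ and $g>0$ on $[F_\mathrm{th},1)$. The other point demanding care is the reduction itself: one must verify cleanly that the spread-monotonicity together with the boundary analysis genuinely identifies the diagonal corner as the global worst case, so that the one-dimensional threshold governs the full two-parameter statement.
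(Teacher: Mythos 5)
Your setup coincides with the paper's: both arguments compare the coherent-information lower bound $I_{A\rangle B}^{W}(F_{\mathrm{out}})$ of the twirled output against the Rains upper bound $R(\bar F)$ of the AIS, and both reduce the two-variable problem using the fact that the gap decreases as the spread $\delta=|F_1-F_2|/2$ grows at fixed $\bar F$. Your derivation of that monotonicity is in fact cleaner than the paper's: observing that only $F_{\mathrm{out}}$ depends on $\delta$, that $dF_{\mathrm{out}}/d(\delta^2)\propto 24\bar F-42<0$ (your quotient-rule computation checks out), and that $I_{A\rangle B}^{W}$ is increasing above fidelity $1/4$, is more transparent than the paper's direct sign analysis of $\partial E_D^{\mathrm{low\text{-}up}}/\partial F_2'$, which involves a logarithmic factor whose positivity must be chased separately.

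The difficulty is in your endgame, and it is twofold. First, spread-monotonicity holds only at \emph{fixed} $F_1+F_2$, so it pushes the minimum to the boundary of the square but says nothing about where the minimum sits \emph{along} an edge such as $F_1=F_{\mathrm{th}}$, where the sum and the spread vary together; your claim that ``a check of the edges'' locates the minimum at the diagonal corner $(F_{\mathrm{th}},F_{\mathrm{th}})$ is exactly the transcendental one-dimensional inequality you have not supplied. The paper deliberately avoids this: it analyzes only the edge $F_2=1$ (where one input is pure and the gap function simplifies enough to prove concavity, hence at most two roots, the nontrivial one at $F_{1,\mathrm{root}}\approx 0.879$), and then handles every other boundary point $(F_{\mathrm{th}},F_2)$ by projecting it along its anti-diagonal onto $(F_{\mathrm{th}}+F_2-1,\,1)$, which has larger spread and hence smaller gap. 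Second, and more seriously, your numerical anchor is wrong: the diagonal function $g(F)=I_{A\rangle B}^{W}\bigl(\tfrac{10F^2-2F+1}{8F^2-4F+5}\bigr)-R(F)$ does not vanish near $0.939$ — a direct evaluation gives $g(0.92)\approx-0.008$, $g(0.93)\approx+0.0001$, $g(0.939)\approx+0.006$, so its root is near $0.930$. The paper's threshold $0.939$ is not a root of $g$ at all; it is $(1+F_{1,\mathrm{root}})/2\approx(1+0.879)/2$, the value forced by the worst-case projection of the corner $(F_{\mathrm{th}},F_{\mathrm{th}})$ onto the line $F_2=1$. If you rigorously completed your edge analysis you would obtain a \emph{sharper} sufficient threshold ($\approx 0.930$), which still implies the proposition, but your proof as written both leaves that step unjustified and misidentifies where the number $0.939$ comes from.
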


Since any two-qubit states can be twirled into the Werner form without changing the fidelity, the guaranteed improvement for Werner states proved above stands as the lower bound of the recurrence EPP's capability with potentially non-identical input states. In the next section we consider the case where arbitrary BDS's are input to the EPP while twirling is not performed on the input state.

\subsection{General BDS}
Now we would like to examine if the above improvement is still guaranteed for other forms of BDS as input for the recurrence purification protocol. As usual, we consider general Bell diagonal states in the form of $\rho_\mathrm{BDS}(\vec{\lambda}) = \lambda_1\Phi^+ + \lambda_2\Phi^- + \lambda_3\Psi^+ + \lambda_4\Psi^-$ where $\lambda_1$ is equal to the fidelity $F$. We use $\lambda_1$ and $F$ interchangeably. We will consider input states with potentially different fidelities in identical form, i.e. with identical ratio between the last three components $\lambda_2:\lambda_3:\lambda_4$ corresponding to an identical error model. According to Eqn.~\ref{eqn:bds_elems_dejmps} and Eqn.~\ref{eqn:bds_succ_prob}, we have that for BDS with $\lambda_2=a(1-F)$ and $\lambda_3+\lambda_4 = (1-a)(1-F)$ where $a\in[0,1]$, the successful output fidelity and the success probability are only functions of $a$ and input fidelities $F_1$ and $F_2$.

We have seen that for both rank-2 BDS with $a=0$ and Werner states with $a=1/3$ fidelity improvement w.r.t. the fidelity of the AIS (equal to the average input fidelity) is guaranteed. We are thus curious if such guaranteed improvement still holds for other BDS with different $a$. We derive the following results.
\begin{proposition}\label{thm:general_bds_guarantee}
    Conditioned on successful recurrence EPP with two Bell diagonal input states $\rho_\mathrm{BDS}(\vec{\lambda}) = \lambda_1\Phi^+ + \lambda_2\Phi^- + \lambda_3\Psi^+ + \lambda_4\Psi^-$ and $\rho_\mathrm{BDS}(\vec{\lambda}') = \lambda_1'\Phi^+ + \lambda_2'\Phi^- + \lambda_3'\Psi^+ + \lambda_4'\Psi^-$ which satisfy $\lambda_2:\lambda_3:\lambda_4 = \lambda_2':\lambda_3':\lambda_4'$ and $\lambda_2/(1-\lambda_1)=\lambda_2'/(1-\lambda_1')=a$, the following statements hold:
    \begin{enumerate}
        \item The output fidelity $F_\mathrm{BDS}$ is always greater than the fidelity of AIS $(F_1+F_2)/2$ for all $F_1,F_2\in[1/2,1]$, if $a\leq 1/3$.
        \item For $a>1/3$ there exists $(F_1,F_2)\in[1/2,1]\times[1/2,1]$ s.t. $F_\mathrm{BDS}<(F_1+F_2)/2$.
        \item For $a>1/2$ there does not exist $(F_1,F_2)\in[1/2,1]\times[1/2,1]$ s.t. $F_\mathrm{BDS}\geq (F_1+F_2)/2$.
    \end{enumerate}
\end{proposition}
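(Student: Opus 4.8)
The plan is to reduce all three claims to the sign of a single polynomial that is quadratic in the parameter $a$. First I would use Eqs.~(\ref{eqn:bds_elems_dejmps}) and~(\ref{eqn:bds_succ_prob}) with the imposed form $\lambda_2=a(1-\lambda_1)$, $\lambda_3+\lambda_4=(1-a)(1-\lambda_1)$ to write
\begin{equation*}
F_\mathrm{BDS}=\frac{F_1F_2+a^2(1-F_1)(1-F_2)}{p_\mathrm{BDS}},
\end{equation*}
and observe that both $F_\mathrm{BDS}$ and $p_\mathrm{BDS}$ depend only on $a$ (and $F_1,F_2$), not on the split of $\lambda_3:\lambda_4$, since only the sum $\lambda_3+\lambda_4$ enters. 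Because $p_\mathrm{BDS}\geq F_1F_2>0$, the comparison $F_\mathrm{BDS}\gtrless(F_1+F_2)/2$ is equivalent to the sign of $2[F_1F_2+a^2(1-F_1)(1-F_2)]-(F_1+F_2)p_\mathrm{BDS}$. Substituting $x=2F_1-1,\ y=2F_2-1\in[0,1]$ and writing $s=x+y,\ p=xy$, this reduces (after clearing a positive constant) to the sign of
\begin{equation*}
\tilde D(a)=\gamma a^2+\beta a+\alpha,\qquad \alpha=s(1-p),\quad \gamma=-s(1-x)(1-y),
\end{equation*}
with $\beta=-2s-s^2+4p+2ps$. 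The two structural facts I would rely on are $\alpha\geq0$ (so $\tilde D(0)\geq0$) and $\gamma\leq0$ on $[0,1]^2$ (so $\tilde D$ is concave in $a$).

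For statement 1, concavity in $a$ means $\tilde D\geq0$ throughout $[0,1/3]$ follows once the two endpoints are checked. At $a=0$ we have $\tilde D=\alpha\geq0$ (the rank-2 regime). At $a=1/3$ the output formula coincides with the Werner one (only $a$ enters), so $\tilde D(1/3)\geq0$ is precisely Prop.~\ref{thm:werner_fidincr} at the AIS baseline $\lambda=1/2$; alternatively one checks directly that $9\tilde D(1/3)=2[s(1-s)+2p(3-s)]$, whose bracket has positive $p$-coefficient $2(3-s)$, so its minimum over the feasible range $p\in[\max(0,s-1),s^2/4]$ is at $p=\max(0,s-1)$, equal to $s(1-s)\geq0$ for $s\leq1$ and $3(s-1)(2-s)\geq0$ for $s\geq1$. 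Concavity then closes the interval.

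Statements 2 and 3 are the boundary cases. For statement 2 it suffices to exhibit one violating pair: $(F_1,F_2)=(1/2,1)$ gives $F_\mathrm{BDS}=1/(1+a)$ while the AIS fidelity is $3/4$, and $1/(1+a)<3/4\Leftrightarrow a>1/3$ (a strictly-interior example follows by continuity). For statement 3 I would show $\tilde D(1/2)\leq0$ on all of $[0,1]^2$: a short computation gives $4\tilde D(1/2)=-[s^2+s+(s-8)p]$, and since $s\leq2<8$ the bracket is decreasing in $p$, hence minimized at $p=s^2/4$ (i.e.\ $x=y$), where it equals $\tfrac14 s(s-2)^2\geq0$; thus $\tilde D(1/2)\leq0$. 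Combined with $\tilde D(0)\geq0$ and concavity, the larger root of the downward parabola $\tilde D(\cdot)$ is at most $1/2$, so $\tilde D(a)<0$ for every $a>1/2$ and every $(x,y)$ off the degenerate corners $\{(0,0),(1,1)\}$ (where $\gamma=0$ and equality is checked directly). Those corners are the trivial cases $F_1=F_2\in\{1/2,1\}$ in which the EPP acts trivially, so no nondegenerate input pair attains $F_\mathrm{BDS}\geq(F_1+F_2)/2$.

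The main obstacle is the bivariate inequality underlying statement 3 (and, symmetrically, the $a=1/3$ endpoint of statement 1): a priori $\tilde D(1/2;x,y)\leq0$ is a genuine two-variable constraint. The decisive simplification is that, at these fixed values of $a$, $\tilde D$ is \emph{linear} in $p=xy$ for fixed $s=x+y$, so the extremum over the admissible $p$-range is attained at an endpoint, collapsing the problem to a one-variable polynomial in $s$ that factors as a manifest square. Recognizing the quadratic-in-$a$ structure together with its concavity is what reduces all three parts to these one-dimensional checks, and I expect the bookkeeping in deriving $\alpha,\beta,\gamma$ and the identity $\gamma=-s(1-x)(1-y)$ to be the only delicate part.
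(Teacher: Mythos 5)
Your proposal is correct, and it takes a genuinely different route from the paper's proof. The paper fixes the error parameter $a$ and sweeps over the fidelity plane: it rotates to diagonal/anti-diagonal coordinates $(F_1',F_2')$, uses the sign of $\partial_{F_2'}F^{\mathrm{avg}}_{\mathrm{incr,BDS}}$ to locate extrema on either the boundary or the ridge $F_1=F_2$, and then performs a fairly intricate case analysis (including the auxiliary threshold $a=2-\sqrt{3}$ and separate monotonicity-in-$a$ arguments on each boundary piece and on the ridge). You instead fix $(F_1,F_2)$ and observe that the sign condition is governed by a polynomial $\tilde D(a)=\gamma a^2+\beta a+\alpha$ that is \emph{concave} in $a$ (since $\gamma=-s(1-x)(1-y)\le 0$) with $\tilde D(0)=s(1-p)\ge 0$; this collapses statement 1 to the two endpoint checks $a=0$ and $a=1/3$, and statement 3 to the single check $\tilde D(1/2)\le 0$ plus the root-location argument for the downward parabola. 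The remaining bivariate inequalities at $a=1/3$ and $a=1/2$ are dispatched by noting linearity in $p=xy$ at fixed $s=x+y$, so the extremum sits at an endpoint of the admissible $p$-range and the problem factors into manifest one-variable squares — I verified the identities $9\tilde D(1/3)=2[s(1-s)+2p(3-s)]$ and $4\tilde D(1/2)=-[s^2+s+(s-8)p]$, and the endpoint evaluations $3(s-1)(2-s)$ and $\tfrac14 s(s-2)^2$ are as you state. Your statement 2 example coincides with the paper's. Your approach is shorter, avoids the $2-\sqrt{3}$ case split entirely, and makes the origin of the thresholds $1/3$ and $1/2$ transparent; the paper's approach yields extra geometric information (where in the $(F_1,F_2)$ plane the improvement is extremal), which it reuses in the remark following the proof for the regime $a<1/2$. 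One shared caveat, which you handle more explicitly than the paper: at the degenerate corners $F_1=F_2\in\{1/2,1\}$ the output fidelity \emph{equals} the AIS fidelity for every $a$, so both proofs really establish the non-strict inequalities (statement 1 with ``$\ge$'' and statement 3 with ``$F_\mathrm{BDS}\le(F_1+F_2)/2$''); this is a wording issue in the proposition, not a gap in your argument.
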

The above proposition characterizes the performance for the recurrence EPP with varying input states by providing the transition points of $a$ where the guaranteed improvement of output fidelity w.r.t. the fidelity of the AIS starts to break down (at $a=1/3$, with the Werner states as an example which corresponds to this point) and where any improvement completely vanishes (at $a=1/2$). Moreover, by noticing that the parameter $a$ represents the relative strength of the Pauli $Z$ error that the Bell pair experiences, this result again demonstrates the effect of error detection capability of the recurrence purification protocol by showing that when the Pauli $Z$ error in the input increases, the obtainable improvement from purification diminishes. 

In addition, when $a>0$ there is no guaranteed fidelity improvement w.r.t. the higher fidelity of the two input states. It is sufficient to find $(F_1,F_2)$ s.t. the successful output fidelity is lower than the higher input fidelity for each $a>0$. A simple example is to consider when one input state is noiseless, e.g. $F_1=1,F_2=1/2$, then the difference between the successful output fidelity and the higher input fidelity is $\frac{1}{1+a} - 1$, which is obviously negative for $a<0$.

\section{Optimal Time}\label{sec:opt_time}
Now we embed all the processes, including entanglement generation, EPP, and memory decoherence, into a unified timeline, as illustrated in Fig.~\ref{fig:opt_t_scenario}. We assume that the first Bell pair is generated at time 0, and the second Bell pair is obtained at time $t_1\geq 0$ due to the probabilistic nature of entanglement generation. The first Bell pair undergoes decoherence starting at time 0. We further assume that only \textit{one} Bell pair will be utilized later at $t_2\geq t_1$, and the EPP can be performed at time $t\in[t_1,t_2]$ in the time interval. Both generated Bell pairs will undergo decoherence between time $t_1$ and $t$, and the purified output state will keep decohering from $t$ to $t_2$. 

\begin{figure}[t]
    \centering
    \includegraphics[width=\columnwidth]{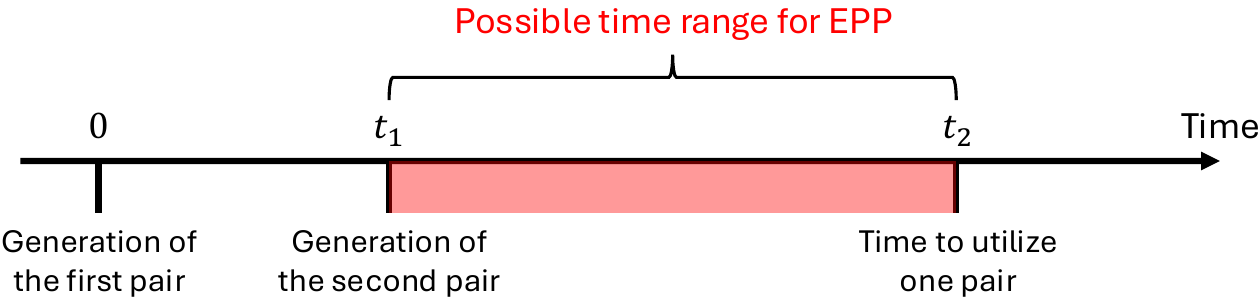}
    \caption{Visualization of EPP time optimization.}
    \label{fig:opt_t_scenario}
\end{figure}

Before presenting the technical results, we define the key system parameters. The \textit{raw fidelity} $F_0$ of entangled states immediately after successful generation is assumed to be a constant determined by details of the underlying physical system. Furthermore, we assume standardized quantum memories so that we have \textit{identical single qubit error channels}. The results in this section are robust against reasonable parameter fluctuation, i.e. as long as the raw fidelities do not deviate too much among different successful generation attempts and the memory decoherence rates are not too different from memory to memory, the optimal time results still hold, as demonstrated in Appendix~\ref{sec:robust_opt_time}.

\subsection{Rank-2 BDS under a single Pauli error}\label{sec:opt_time_rank2}
We first consider that the initial noisy Bell pair is a rank-2 BDS, and quantum memories are also subject to the same Pauli channel corresponding to the single error component in the rank-2 BDS which can be detected by the EPP. Without loss of generality, we consider the $X$ error, so that the raw Bell pair is $\rho=F_0\Phi^+ + (1-F_0)\Psi^+$ which we call a ``bit-flipped Bell state'', and memories are subject to identical bit-flip channels with rate $\kappa$, so that the fidelity evolves over time as $F_\mathrm{bit-flip}(F_0, t) = F_0e^{-2\kappa t} + \frac{1-e^{-2\kappa t}}{2}$. 

Our first figure of merit is the fidelity of the Bell pair which is utilized at $t_2$. Based on analytical BDS dynamics derived in Appendix~\ref{sec:bds_dynamics}, we obtain the following result.
\begin{proposition}\label{thm:rank2_fid_opt_time}
    Consider the quantum network scenario with bit-flipped Bell state under memory bit-flip channel. To obtain the highest fidelity Bell state to utilize conditioned on successful purification, the recurrence EPP should always be performed at the latest possible time.
\end{proposition}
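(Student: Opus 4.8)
The plan is to reduce the claim to the monotonicity of a single scalar function of the purification time $t$, exploiting a cancellation that is special to the rank-2 bit-flip setting. First I would record the structural fact that the rank-2 subspace spanned by $\Phi^+$ and $\Psi^+$ is invariant under everything in the timeline: a single-qubit bit-flip on either party's memory merely exchanges $\Phi^+\leftrightarrow\Psi^+$, and by Eqn.~\eqref{eqn:bds_elems_dejmps} two inputs of the form $F\Phi^+ + (1-F)\Psi^+$ produce a successful output of the same form (the $\lambda_2,\lambda_4$ components stay zero). Hence the post-EPP state is again a bit-flipped Bell state, and its fidelity continues to obey the scalar law $F_\mathrm{bit-flip}$ as it decoheres from $t$ to $t_2$.

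Next I would write the two input fidelities at the purification time $t$. The first pair has decohered for total time $t$ and the second for total time $t-t_1$, so $F_1(t) = F_\mathrm{bit-flip}(F_0,t)$ and $F_2(t) = F_\mathrm{bit-flip}(F_0,t-t_1)$. Specializing Eqn.~\eqref{eqn:bds_elems_dejmps} to rank-2 inputs gives $F^{(2)}(F_1,F_2) = F_1F_2/[F_1F_2 + (1-F_1)(1-F_2)]$. The key simplification is to pass to the deviation variables $x = F_1-\tfrac{1}{2}$ and $y = F_2 - \tfrac{1}{2}$, in which a short computation yields
\begin{equation*}
F^{(2)} - \tfrac{1}{2} = \frac{x+y}{1+4xy}.
\end{equation*}
Because the bit-flip evolution is affine in the fidelity deviation, $x(t) = \delta_0\, e^{-2\kappa t}$ and $y(t) = \delta_0\, e^{-2\kappa(t-t_1)}$ with $\delta_0 = F_0 - \tfrac{1}{2}\ge 0$.

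The heart of the argument is then to propagate the output through the remaining decoherence from $t$ to $t_2$ and watch a cancellation occur. Applying $F_\mathrm{bit-flip}(\cdot,\,t_2-t)$ to the output gives a final fidelity $F_\mathrm{final}(t)$ with
\begin{equation*}
F_\mathrm{final}(t) - \tfrac{1}{2} = \bigl(F^{(2)} - \tfrac{1}{2}\bigr)\, e^{-2\kappa(t_2-t)} = \frac{x(t)+y(t)}{1+4\,x(t)\,y(t)}\, e^{-2\kappa(t_2-t)}.
\end{equation*}
The numerator $x(t)+y(t) = \delta_0\,(1+e^{2\kappa t_1})\,e^{-2\kappa t}$ carries exactly one factor $e^{-2\kappa t}$, which combines with $e^{-2\kappa(t_2-t)}$ into the $t$-independent constant $e^{-2\kappa t_2}$. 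All dependence on $t$ therefore collapses into the denominator, leaving
\begin{equation*}
F_\mathrm{final}(t) - \tfrac{1}{2} = \frac{\delta_0\,(1+e^{2\kappa t_1})\,e^{-2\kappa t_2}}{1 + 4\delta_0^2\, e^{-2\kappa(2t-t_1)}}.
\end{equation*}
Since $\kappa>0$, the term $e^{-2\kappa(2t-t_1)}$ in the denominator strictly decreases in $t$, so $F_\mathrm{final}(t)$ is non-decreasing on $[t_1,t_2]$ (strictly increasing whenever $F_0>\tfrac{1}{2}$) and is maximized at the latest admissible time $t=t_2$, which is the claim.

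I expect the main obstacle to be recognizing and justifying this cancellation rather than any individual computation. A priori one anticipates a genuine trade-off: purifying early uses fresher, higher-fidelity inputs, while purifying late shortens the post-EPP decoherence window. The non-obvious content is that for the bit-flip/rank-2 model these two effects exactly offset in the linear (numerator) part, so the residual monotonicity is dictated solely by the quadratic cross term $4xy$ in the denominator. The one technical point that licenses the entire reduction is the invariance of the rank-2 subspace under both the EPP and the memory channel; without it the purified state would not obey the scalar evolution $F_\mathrm{bit-flip}$ and the clean cancellation would not be available.
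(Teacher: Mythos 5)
Your proof is correct and follows essentially the same route as the paper: both compute the closed-form final fidelity at $t_2$ as a function of the purification time $t$ and show it is increasing in $t$, and your expression $\tfrac{1}{2} + \delta_0(1+e^{2\kappa t_1})e^{-2\kappa t_2}\big/\bigl[1+4\delta_0^2 e^{-2\kappa(2t-t_1)}\bigr]$ is algebraically identical to the paper's $F_\mathrm{succ}^{(2)}(t_2)$. The only difference is in execution: your deviation variables $x=F-\tfrac{1}{2}$ make the cancellation of the $t$-dependence in the numerator, and hence the monotonicity, visible by inspection, whereas the paper reaches the same conclusion by explicitly differentiating the rational expression and checking the sign of $\frac{d}{dt}F_\mathrm{succ}^{(2)}(t_2)$.
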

Since all other figures of merit considered in this work monotonically increase when fidelity increases, the above result implies that, under the same considered error models, to obtain a Bell pair with the highest concurrence, (logarithmic) negativity, or distillable entanglement, the optimal time for the EPP is also the lastest possible time.

Proposition~\ref{thm:rank2_fid_opt_time} implies that for all bit-flipped Bell state pairs (both with fidelity above 1/2) and for all durations of memory decoherence, if the pairs are first purified and then undergo decoherence (FPTD), the final fidelity will be \textit{always lower than} in the case if both pairs first undergo decoherence and are purified at the end (FDTP). We can show that the fidelity difference between FDTP and FPTD ($F_\mathrm{FDTP}-F_\mathrm{FPTD}$) with decoherence duration $t$ is $$\frac{(2F_1-1)(2F_2-1)(F_1+F_2-1)\sinh(2\kappa t)}{(2F_1F_2-F_1-F_2+1)\left[e^{4\kappa t} + (2F_1-1)(2F_2-1)\right]},$$ which is greater than zero for all $t\geq 0$ and $F_1,F_2\geq 1/2$. These results demonstrate the effect of error detection capability of the recurrence EPP. 

Additionally, according to the monotonicity of entanglement measures with BDS fidelity, the above result can be generalized to figures of merit beyond fidelity. When we change the figure of merit to concurrence, (logarithmic) negativity, or distillable entanglement at $t_2$ conditioned on successful purification, the optimal time is also the latest possible time. Note that we ignore the two-way classical communication time in recurrence purification, and it will be omitted in the following as well to facilitate analytical studies. The effect of the classical communication delay is left for future work.

In the above we have not considered the probabilistic success of entanglement purification which affects the effective rate of entanglement distribution, because in our scenario if the purification fails there is no Bell pair to utilize at $t_2$. Hence, in order to incorporate the effect of purification success probability, we consider a \textit{normalized} figure of merit $\tilde{f}$ which is defined as the product of the success probability of purification performed at $t$ and the original figure of merit at $t_2$ conditioned on successful purification, i.e. $\tilde{f}=p_\mathrm{succ}(t)f(t_2)$. The definition of the normalized quantity aims at incorporating the trade-off between the output state quality and the success probability of getting the output state, similar to the spirit of figures of merit that include the network link active probability defined in Ref.~\cite{khatri2021towards,khatri2021policies,khatri2022design}.

We start with examining normalized fidelity and concurrence (negativity) which is linear in fidelity, and discover that in contrast to Prop.~\ref{thm:rank2_fid_opt_time} the optimal time is the earliest possible one.
\begin{proposition}\label{thm:rank2_normfid_opt_time}
    Consider the quantum network scenario with bit-flipped Bell state under memory bit-flip channel. To obtain the Bell state with the highest normalized fidelity, concurrence or negativity at $t_2$, the recurrence EPP should always be performed at the earliest possible time.
\end{proposition}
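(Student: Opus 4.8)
The plan is to parametrize everything by the single free variable, the purification time $t\in[t_1,t_2]$, and then reduce the optimization to an elementary monotonicity check. First I would write the two input fidelities at the moment of purification as $f_1=F_\mathrm{bit-flip}(F_0,t)$ (the first pair, decohering since time $0$) and $f_2=F_\mathrm{bit-flip}(F_0,t-t_1)$ (the second pair, decohering since $t_1$). The decisive simplification is to pass to the variables $m=2f_1-1=(2F_0-1)e^{-2\kappa t}$ and $n=2f_2-1=(2F_0-1)e^{-2\kappa(t-t_1)}$, in which bit-flip decoherence acts as pure exponential decay. I would also record that, by Eq.~\eqref{eqn:bds_elems_dejmps}, the successful output has $\lambda_{2,\mathrm{succ}}=\lambda_{4,\mathrm{succ}}=0$, i.e. it is again a rank-2 BDS of the same type, so its subsequent decoherence from $t$ to $t_2$ is itself a simple bit-flip and merely multiplies its concurrence by $e^{-2\kappa(t_2-t)}$.

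Next I would evaluate the two ingredients of the normalized figure of merit in the $(m,n)$ variables. Substituting into Eq.~\eqref{eqn:bds_succ_prob} gives $p_\mathrm{succ}(t)=\tfrac12(1+mn)$, while Eq.~\eqref{eqn:bds_elems_dejmps} yields the velocity-addition-like form $2F^{(2)}-1=(m+n)/(1+mn)$. Combining with the post-purification decoherence gives $2F(t_2)-1=\frac{m+n}{1+mn}e^{-2\kappa(t_2-t)}$. Forming the normalized concurrence then cancels the denominator, $\tilde{\mathcal C}(t)=p_\mathrm{succ}(t)\,(2F(t_2)-1)=\tfrac12(m+n)e^{-2\kappa(t_2-t)}$, and since $m+n=(2F_0-1)(1+e^{2\kappa t_1})e^{-2\kappa t}$ the factor $e^{-2\kappa t}$ exactly cancels the $e^{+2\kappa t}$ coming from $e^{-2\kappa(t_2-t)}$. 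Hence $\tilde{\mathcal C}$, and likewise normalized negativity (which differs only by the constant factor $1/2$), is \emph{independent} of $t$, so the earliest time $t=t_1$ is optimal (indeed all times tie).

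Finally, for the normalized fidelity I would use $F=\tfrac12\bigl(1+(2F-1)\bigr)$ to split $\tilde F(t)=p_\mathrm{succ}(t)F(t_2)=\tfrac12 p_\mathrm{succ}(t)+\tfrac12\tilde{\mathcal C}(t)$. Because $\tilde{\mathcal C}$ is constant, the entire $t$-dependence sits in $p_\mathrm{succ}(t)=\tfrac12\bigl(1+(2F_0-1)^2e^{2\kappa t_1}e^{-4\kappa t}\bigr)$, which is strictly decreasing in $t$ whenever $F_0>1/2$. Thus $\tilde F$ is strictly maximized at $t=t_1$, the earliest time, which establishes the claim and contrasts cleanly with the ``latest time'' answer of Prop.~\ref{thm:rank2_fid_opt_time} for the unnormalized fidelity.

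I expect the main obstacle to be conceptual rather than computational: recognizing that the right variable is $2F-1$, in which the purification map becomes the rational map $(m,n)\mapsto(m+n)/(1+mn)$ and the success probability becomes $(1+mn)/2$, so that their product telescopes and the two-stage (purify-then-decohere) timeline collapses into closed form. Once this is seen, the cancellation in normalized concurrence and the reduction of normalized fidelity to the monotonicity of $p_\mathrm{succ}$ are immediate. The only point needing care is a mild mismatch between the statement and the computation for concurrence and negativity: these turn out to be exactly constant in $t$, so ``earliest'' is \emph{an} optimal choice but not the unique maximizer, whereas strict optimality holds only for the fidelity; I would flag this explicitly in the write-up.
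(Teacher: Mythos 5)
Your proof is correct, and although it follows the same basic template as the paper's argument (obtain the normalized figure of merit in closed form as a function of the purification time $t$, then check the sign of its $t$-derivative), your change of variables to $m=2f_1-1$, $n=2f_2-1$ is a genuinely cleaner organization: it turns the success probability from Eq.~(\ref{eqn:bds_succ_prob}) into $(1+mn)/2$ and the conditional output concurrence from Eq.~(\ref{eqn:bds_elems_dejmps}) into $(m+n)/(1+mn)$, so the product telescopes and the two-stage purify-then-decohere timeline collapses. For the normalized fidelity your conclusion coincides exactly with the paper's: both reduce to $d\tilde F/dt=-\kappa(2F_0-1)^2e^{-2\kappa(2t-t_1)}<0$. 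For the normalized concurrence, however, your computation exposes an algebra slip in the paper's own proof. The correct quantity is
\begin{equation*}
\tilde{\mathcal C}(t)=p_\mathrm{succ}(t)\bigl(2F(t_2)-1\bigr)=\frac{(2F_0-1)\bigl(e^{-2\kappa t_2}+e^{-2\kappa(t_2-t_1)}\bigr)}{2},
\end{equation*}
which is exactly independent of $t$ (equivalently $\tilde{\mathcal C}=2\tilde F-p_\mathrm{succ}$, and the two derivatives cancel), whereas the paper's displayed expression equals $p_\mathrm{succ}(t)\bigl(2F(t_2)+1\bigr)$ --- note its leading ``$1+$'', which already exceeds the maximal possible value $1$ of a normalized concurrence --- so its strictly negative derivative is an artifact of a sign error. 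Your closing caveat is therefore warranted and should indeed be stated: for concurrence and negativity the earliest time is \emph{an} optimizer but every $t\in[t_1,t_2]$ ties, and strict optimality of the earliest time holds only for the normalized fidelity. The remaining ingredients you use (the successful output of two $\Phi^+/\Psi^+$ rank-2 inputs is again of that form, and the bit-flip channel preserves this form while contracting $2F-1$ by $e^{-2\kappa\tau}$) are all verifiable from Eq.~(\ref{eqn:bds_elems_dejmps}) and Appendix~\ref{sec:bds_dynamics}, so your proof is complete as written.
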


We also examine normalized distillable entanglement and logarithmic negativity which are non-linear in fidelity, and observe that unlike for normalized fidelity and concurrence (negativity), the optimal time is the earliest possible one for normalized logarithmic negativity, while for normalized distillable entanglement it is the latest possible time. Such a stark difference is due to the opposite convexity (concavity) of the two specific figures of merit as functions of fidelity.
\begin{proposition}\label{thm:norm_DELN_rank2_opt_time}
    Consider the quantum network scenario with bit-flipped Bell state under memory bit-flip channel. To obtain Bell state with the highest normalized distillable entanglement at $t_2$, the recurrence EPP should always be performed at the latest possible time. To obtain Bell state with the highest normalized logarithmic negativity at $t_2$, the recurrence EPP should always be performed at the earliest possible time.
\end{proposition}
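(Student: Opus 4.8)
The plan is to collapse the two competing $t$-dependences of the normalized figure of merit $\tilde{f}(t)=p_\mathrm{succ}(t)\,f(t_2)$ onto a single parameter, and then let the convexity or concavity of $f$ decide the direction of optimization. The natural variable is the Bell-diagonal bias $x:=2F-1$, in which the bit-flip dynamics acts purely multiplicatively, $x(\tau)=x_0\,e^{-2\kappa\tau}$ with $x_0=2F_0-1$, while the recurrence EPP on two rank-2 inputs acts by the $\tanh$-addition rule $x^{(2)}=(x_1+x_2)/(1+x_1x_2)$ with success probability $p_\mathrm{succ}=(1+x_1x_2)/2$ (both read off from Eqns.~\ref{eqn:bds_elems_dejmps} and~\ref{eqn:bds_succ_prob} specialized to $\lambda_2=\lambda_4=0$). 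First I would write the input biases at EPP time $t$ as $x_1=x_0\,e^{-2\kappa t}$ and $x_2=x_0\,e^{-2\kappa(t-t_1)}$, apply the addition rule, and propagate the output through the residual decoherence, $x_\mathrm{out}(t)=x^{(2)}\,e^{-2\kappa(t_2-t)}$.

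The key step is the observation that the two $t$-dependences cancel: a direct computation gives $p_\mathrm{succ}(t)\,x_\mathrm{out}(t)=\tfrac12 x_0(1+e^{2\kappa t_1})e^{-2\kappa t_2}=:C$, a constant independent of $t$, because the $(1+x_1x_2)$ factor cancels and the surviving exponentials combine into a $t$-independent product. Hence on the admissible interval the final fidelity is tied to the success probability by $F_\mathrm{out}=\tfrac12+C/(2p_\mathrm{succ})$, while $p_\mathrm{succ}(t)=\tfrac12\bigl(1+x_0^2 e^{-2\kappa(2t-t_1)}\bigr)$ is itself strictly decreasing in $t$. The whole problem thus becomes one-dimensional: maximize $\tilde{f}(p)=p\,f\!\left(\tfrac12+C/(2p)\right)$ over the interval of attainable $p$-values, remembering that larger $p$ corresponds to \emph{earlier} EPP time.

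Differentiating and using $F_\mathrm{out}-\tfrac12=C/(2p)$ yields the compact form $\tfrac{d\tilde{f}}{dp}=G(F_\mathrm{out})$ with $G(F):=f(F)-(F-\tfrac12)f'(F)$, so the sign of $G$ on $[\tfrac12,1]$ decides everything. I would then use two facts: $G'(F)=-(F-\tfrac12)f''(F)$, and $G(\tfrac12)=f(\tfrac12)=0$ for both measures, since entanglement vanishes at the separability boundary. For $E_D(F)=1-H_b(F)$ one has $f''>0$ on $(\tfrac12,1)$, so $G$ decreases from $0$, giving $G\le0$ and $\tilde{f}$ decreasing in $p$, i.e. increasing in $t$: the optimum is the \emph{latest} time. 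For $E_N(F)=1+\log_2 F$ one has $f''<0$, so $G$ increases from $0$, giving $G\ge0$ and $\tilde{f}$ increasing in $p$, i.e. decreasing in $t$: the optimum is the \emph{earliest} time. This is exactly the ``opposite convexity'' dichotomy asserted in the statement.

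The main obstacle is establishing the $t$-independence of $p_\mathrm{succ}\,x_\mathrm{out}$ cleanly; once that identity is in hand the remaining sign analysis is routine, resting only on the shared boundary value $f(\tfrac12)=0$ and the definite sign of $f''$ on $(\tfrac12,1)$. I would additionally check the degenerate cases ($t_1=t_2$, or $x_0=0$) so that the global monotonicity, rather than an interior critical point, genuinely identifies the optimizer, and confirm that the lone boundary equality $G(\tfrac12)=0$ occurs only at $F_\mathrm{out}=\tfrac12$ (a nonentangled endpoint) and so introduces no spurious flat direction in the interior.
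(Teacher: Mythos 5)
Your proposal is correct, and it reaches both conclusions of the proposition by a genuinely different and cleaner route than the paper. The paper proceeds by brute force: it writes out $\tilde{E}_D(t)$ and $\tilde{E}_N(t)$ explicitly in terms of $\kappa,t_1,t_2,F_0$, differentiates in $t$, and establishes the sign of each derivative by ad hoc estimates (for $\tilde{E}_D$, showing that a certain ratio never reaches $1$ so a logarithm stays negative; for $\tilde{E}_N$, bounding a bracketed term by further differentiation and a limiting argument). Your argument instead isolates the structural reason behind the dichotomy: in the bias variable $x=2F-1$ the bit-flip channel is multiplicative and the EPP is a $\tanh$-addition with $p_\mathrm{succ}=(1+x_1x_2)/2$, so the product $p_\mathrm{succ}(t)\,x_\mathrm{out}(t)=\tfrac12 x_0(1+e^{2\kappa t_1})e^{-2\kappa t_2}$ is an exact invariant of the EPP time (I verified this identity; it is correct and is essentially equivalent to the paper's own closed form for the normalized fidelity). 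This collapses the optimization to the one-dimensional problem $\max_p\, p\,f(\tfrac12+C/2p)$, whose derivative $G(F)=f(F)-(F-\tfrac12)f'(F)$ satisfies $G(\tfrac12)=f(\tfrac12)=0$ and $G'(F)=-(F-\tfrac12)f''(F)$, so the sign of $f''$ on $(\tfrac12,1)$ alone decides the optimizer. What your approach buys is generality and explanatory power: it turns the paper's informal remark that the ``stark difference is due to the opposite convexity'' into an actual theorem valid for any figure of merit vanishing at $F=\tfrac12$ with definite convexity, and it recovers Prop.~\ref{thm:rank2_normfid_opt_time} for fidelity ($G\equiv\tfrac12>0$) as a byproduct. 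One side effect worth noting: your invariant implies the normalized concurrence $p_\mathrm{succ}\cdot(2F_\mathrm{out}-1)$ is exactly constant in $t$ (the degenerate case $f''\equiv 0$, $f(\tfrac12)=0$), which is slightly at odds with the strictly negative derivative the paper reports for $\tilde{\mathcal{C}}$ in the proof of Prop.~\ref{thm:rank2_normfid_opt_time}; your computation appears to be the correct one there, though this does not affect the proposition under review. The only steps you defer — the two-line verification of the invariant and the degenerate cases $x_0=0$, $t_1=t_2$ — are routine, so I see no gap.
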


\subsection{Werner state under depolarizing channel}\label{sec:opt_time_werner}
In this section we consider that the initial Bell pair is in the Werner form and both quantum memories undergo an identical depolarizing channel with rate $\kappa$, so that the fidelity dynamics is $F_\mathrm{Werner}(F_0,t) = F_0 e^{-2\kappa t} + \frac{1-e^{-2\kappa t}}{4}$. Initially we study the optimal time to obtain the highest fidelity at $t=t_2$ conditioned on successful purification. Unlike for the single-Pauli-error case above, the optimal time is parameter dependent (especially on $t_1$), and we obtain its explicit expression.
\begin{proposition}\label{thm:werner_fid_opt_time}
    Consider the quantum network scenario with Werner state under memory depolarizing channel. To obtain the highest fidelity Bell state at $t_2$, the recurrence EPP should always be performed at the earliest possible time if $t_1\geq t_1^*$, and should be performed at at $t^*$ if $t_1<t_1^*$, where
    \begin{align}
        & t_1^* = \frac{1}{2\kappa}\ln\left[\frac{3(4F_0-1)}{20 - 4F_0 - 16F_0^2}\right],\\
        & t^* = \frac{1}{2\kappa}\ln\left[\frac{(4F_0-1)^2e^{2\kappa t_1} + 3(4F_0-1)(1 + e^{2\kappa t_1})}{18}\right].
    \end{align}
\end{proposition}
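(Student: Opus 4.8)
The plan is to collapse this two-parameter scheduling problem to a one-dimensional optimization and then locate the maximizer by a single sign analysis. First I would fix the purification instant $t\in[t_1,t_2]$ and record the two input fidelities there. Since a Werner state remains Werner under the depolarizing channel, the first pair (decohering over $[0,t]$) and the second pair (decohering over $[t_1,t]$) are Werner states with fidelities $F_1=F_\mathrm{Werner}(F_0,t)$ and $F_2=F_\mathrm{Werner}(F_0,t-t_1)$, so the conditional output fidelity is $F^{(W)}(F_1,F_2)$ obtained from the transformation rules~\eqref{eqn:bds_elems_dejmps} and~\eqref{eqn:bds_succ_prob}. The EPP output is no longer Werner, but the only thing I need to propagate from $t$ to $t_2$ is its fidelity with respect to the target Bell state. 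A short preliminary computation in the Pauli/Bloch-tensor basis shows that under two independent single-qubit depolarizing channels the Bell-state fidelity of \emph{any} two-qubit state obeys $F\mapsto Fe^{-2\kappa\tau}+(1-e^{-2\kappa\tau})/4$, independent of the remaining components (so whether or not one twirls the output is immaterial). This removes the need to track the full output density matrix.

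The decisive simplification is to pass to the coordinate $G=4F-1$, in which decoherence over a time $\tau$ is just multiplication by $e^{-2\kappa\tau}$ and the maximally mixed state sits at $G=0$. With $G_0=4F_0-1$, the two inputs at time $t$ are $G_1=G_0e^{-2\kappa t}$ and $G_2=G_0e^{-2\kappa(t-t_1)}$, the EPP output reads as the rational function $G^{(W)}_\mathrm{out}=[4G_1G_2+3(G_1+G_2)]/(G_1G_2+9)$, and the quantity to maximize becomes $G_\mathrm{final}=e^{-2\kappa(t_2-t)}G^{(W)}_\mathrm{out}$. Setting $u=e^{-2\kappa t}$ and $w=e^{-2\kappa t_1}$, the admissible window $t\in[t_1,t_2]$ maps to $u\in[e^{-2\kappa t_2},w]$ (earliest time at the right endpoint $u=w$), and $G_\mathrm{final}$ becomes $e^{-2\kappa t_2}$ times a rational function of the single variable $u$. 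Because the $t_2$-dependent factor is a $u$-independent positive prefactor, it drops out of the optimization, which already explains why the interior optimizer is independent of $t_2$.

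I would then differentiate. After clearing positive denominators, the sign of $dG_\mathrm{final}/du$ is governed by a single upward-opening quadratic in $u$, of the form $2G_0^2u^2+3G_0(w+1)u-18w$, which has exactly one positive root $u^\ast$. Hence $G_\mathrm{final}$ is unimodal in $u$ (increasing for $u<u^\ast$, decreasing for $u>u^\ast$), and the interior critical point gives the candidate optimal time $t^\ast=-\tfrac{1}{2\kappa}\ln u^\ast$; solving the quadratic and re-exponentiating is the routine algebra that yields the explicit expression for $t^\ast$.

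Finally, unimodality reduces the constrained maximization to comparing $u^\ast$ with the endpoints of $[e^{-2\kappa t_2},w]$. If $u^\ast\ge w$, the objective is increasing across the whole window and the optimum is the earliest time $t_1$; the borderline $u^\ast=w$ defines the threshold, found most cleanly by substituting $u=w$ directly into the quadratic and solving the resulting linear equation for $w$ (equivalently for $t_1^\ast$). If instead $u^\ast<w$, the interior point $t^\ast$ is optimal. The main difficulty lies precisely in this boundary bookkeeping: one must (i) turn the borderline condition into the closed-form threshold $t_1^\ast$ and verify it separates the two regimes monotonically in $t_1$, and (ii) confirm that the other endpoint (the latest time $t_2$) never wins in the regime of interest, i.e.\ that $u^\ast\ge e^{-2\kappa t_2}$ whenever $u^\ast<w$, so that $t^\ast\le t_2$ stays feasible. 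The unimodal structure established in the previous step is exactly what makes this a finite set of endpoint comparisons rather than a genuine two-parameter search.
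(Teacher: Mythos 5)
Your reduction is essentially the same argument the paper runs---write the end-time fidelity as an explicit function of the single purification time $t$, differentiate, locate the unique interior critical point, and compare it with the endpoints of $[t_1,t_2]$. The substitution $G=4F-1$, under which the bilateral depolarizing channel acts as multiplication by $e^{-2\kappa\tau}$ on \emph{any} two-qubit state (so the non-Werner EPP output needs no special treatment), is a genuinely cleaner bookkeeping device and makes the $t_2$-independence of the interior optimum manifest, but it does not change the mathematical content. I verified your intermediate formulas: $G^{(W)}_{\mathrm{out}}=[4G_1G_2+3(G_1+G_2)]/(G_1G_2+9)$ is correct (it reproduces the paper's $0.9,\,0.95\mapsto 0.946$ example), and with $u=e^{-2\kappa t}$, $w=e^{-2\kappa t_1}$, $G_0=4F_0-1$ the objective is $e^{-2\kappa t_2}\bigl[4G_0^2u+3G_0(w+1)\bigr]/\bigl[G_0^2u^2+9w\bigr]$, whose $u$-derivative has the sign of $-\bigl[2G_0^2u^2+3G_0(w+1)u-18w\bigr]$.

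The important point is what this produces: the positive root of your quadratic does \emph{not} coincide with the $t^*$ in the statement. The discrepancy is not yours. The paper's proof mis-executes the quotient rule---the leading term of its derivative's numerator reads $(4F_0-1)^2e^{-2\kappa(2t+t_2-2t_1)}$ where it should be $2(4F_0-1)^2e^{-2\kappa(3t+t_2-2t_1)}$---which collapses the genuine quadratic in $e^{-2\kappa t}$ into a linear equation and yields the stated $t^*$ and, downstream, the stated $t_1^*$. A spot check with $\kappa=1$, $F_0=0.9$, $t_1=0.1$, $t_2=0.5$, evaluating the paper's own closed form for $F'(t_2)$, gives a maximum at $t\approx 0.220$ (your root) with fidelity $\approx 0.53570$, while the stated $t^*\approx 0.1757$ gives only $\approx 0.53506$; the numerical derivative at $t=0.2$ is $+0.013$, matching your expression and contradicting the paper's (which gives $-0.011$ there). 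Carrying your boundary analysis through, the corrected threshold is $e^{2\kappa t_1^*}=\frac{(4F_0-1)(8F_0+1)}{3(7-4F_0)}$ rather than the one in the statement. So your approach, executed to the end, proves a corrected version of the proposition and disproves the stated formulas.

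One caveat on your step (ii): the claim that the latest endpoint ``never wins'' cannot be confirmed in general. The interior root can satisfy $t^*>t_2$ when $t_2$ is small (e.g.\ $t_1=0$ and $F_0$ near $1$ give $t^*\approx 0.23\kappa^{-1}$, so any $t_2$ below that puts the whole window in the increasing region of the unimodal objective), in which case the optimum is $t_2$. Both the statement and the paper's proof silently ignore this third regime; a complete proof must either add the hypothesis $t^*\le t_2$ or state the optimum as $\min\{t^*,t_2\}$ in the $t_1<t_1^*$ branch.
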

Again, according to the argument that the considered entanglement measures monotonically increase with fidelity, the above optimal time also applies to concurrence, (logarithmic) negativity, or distillable entanglement.

Similar to the previous study of the rank-2 BDS, we take the success probability into account and evaluate the normalized fidelity and concurrence (negativity) at $t_2$, and see that the optimal time is also the earliest possible one. 
\begin{proposition}\label{thm:werner_normfid_opt_time}
    Consider the quantum network scenario with the Werner state under memory depolarizing channel. To obtain the Bell state with the highest normalized fidelity, concurrence or negativity at $t_2$, the recurrence EPP should always be performed at the earliest possible time. 
\end{proposition}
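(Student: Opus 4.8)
The plan is to collapse the entire timeline into a single scalar optimization over the purification instant $t$, exploiting the fact that the depolarizing channel acts diagonally on Bell-diagonal states. First I would pass to the variable $x=4F-1$, under which the given dynamics $F_{\mathrm{Werner}}(F_0,t)=F_0e^{-2\kappa t}+(1-e^{-2\kappa t})/4$ becomes the purely multiplicative decay $x(t)=(4F_0-1)e^{-2\kappa t}$. The key structural fact is that the depolarizing channel relaxes every Bell component toward $1/4$ at the common rate appearing in this formula, so the fidelity component of the (non-Werner) EPP output also obeys $x(t_2)=x_{\mathrm{EPP}}\,e^{-2\kappa(t_2-t)}$; the output fidelity at $t_2$ therefore depends on the intermediate state only through its own $x_{\mathrm{EPP}}$. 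Writing the two inputs at the purification instant as $x_1=(4F_0-1)e^{-2\kappa t}$ and $x_2=(4F_0-1)e^{-2\kappa(t-t_1)}$ and feeding the Werner parametrization into Eqs.~\eqref{eqn:bds_elems_dejmps} and~\eqref{eqn:bds_succ_prob}, I would reduce the success probability and output to the compact rational forms $p_{\mathrm{succ}}=(x_1x_2+9)/18$ and $x_{\mathrm{EPP}}=[4x_1x_2+3(x_1+x_2)]/(x_1x_2+9)$.

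The optimization variable I would then use is $u=e^{-2\kappa t}$, so that $t\in[t_1,t_2]$ corresponds to $u\in[u_2,u_1]$ with $u_i=e^{-2\kappa t_i}$, and the \emph{earliest} time is the right endpoint $u=u_1$. Substituting $x_1=x_0u$, $x_2=x_0u/u_1$, and $e^{-2\kappa(t_2-t)}=u_2/u$ (with $x_0=4F_0-1$) makes the products telescope. For the normalized fidelity, the crucial simplification is that multiplying by $p_{\mathrm{succ}}$ cancels the denominator of $x_{\mathrm{EPP}}$, so $\tilde F=p_{\mathrm{succ}}\,[x_{\mathrm{EPP}}(u_2/u)+1]/4$ becomes, up to a positive prefactor, the quadratic $(x_0^2/u_1)u^2+(4x_0^2u_2/u_1)u+\text{const}$. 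Both the quadratic and linear coefficients are positive, so the vertex sits at $u=-2u_2<0$ and $\tilde F$ is strictly increasing on $u>0$; hence its maximum over $[u_2,u_1]$ is at $u=u_1$, i.e. the earliest time. This settles the fidelity case cleanly, and it is precisely the normalization that removes the interior optimum present in the un-normalized Proposition~\ref{thm:werner_fid_opt_time}.

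For concurrence and negativity I would invoke their affine relation to fidelity for BDS, $\mathcal C=2F-1$ and $\mathcal N=(2F-1)/2$, valid while the output remains entangled and the fidelity component stays dominant (it does, since all components relax at equal rate, preserving the ordering). This gives $\tilde{\mathcal C}=2\tilde F-p_{\mathrm{succ}}$ and $\tilde{\mathcal N}=\tilde F-p_{\mathrm{succ}}/2$. Expanding again in $u$ with $p_{\mathrm{succ}}=(x_0^2u^2/u_1+9)/18$, the normalized concurrence reduces to a quadratic $-(x_0^2/u_1)u^2+(4x_0^2u_2/u_1)u+\text{const}$ that now \emph{opens downward}, with vertex at $u=2u_2$.

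This last feature is exactly where I expect the main obstacle to lie. Unlike the un-normalized figures of merit, where monotonicity in fidelity transfers the optimal time verbatim, the normalization by $p_{\mathrm{succ}}$ breaks that inheritance: $\tilde F$ pushes the optimum toward large $u$ (earliest time), while the subtracted, increasing term $p_{\mathrm{succ}}$ pulls it toward small $u$. The claim that the earliest time is optimal for $\tilde{\mathcal C}$ and $\tilde{\mathcal N}$ therefore reduces to showing the interval $[u_2,u_1]$ lies entirely on the increasing branch of the downward parabola, equivalently $u_1\le 2u_2$, i.e. $2\kappa(t_2-t_1)\le\ln 2$. The decisive step is thus to control the sign of $\partial_u(36\tilde{\mathcal C})\big|_{u=u_1}=2x_0^2(2u_2/u_1-1)$ over the whole admissible interval. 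I would first establish this sign condition and then check whether additional structure of the scenario forces $2u_2\ge u_1$; I flag that, in contrast to the clean fidelity argument, this monotonicity is not automatic and is the genuinely delicate part of the proof, since without such a constraint the normalized concurrence and negativity can be extremized at an interior time rather than at $t_1$.
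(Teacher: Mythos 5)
Your fidelity argument is correct and reaches the paper's conclusion by an equivalent route: the paper writes out $\tilde F^{(W)}_{\mathrm{succ}}(t_2)$ explicitly and checks that $\tfrac{d}{dt}\tilde F=-\tfrac{\kappa(4F_0-1)^2}{18}\bigl(2e^{-2\kappa(t-t_1+t_2)}+e^{-2\kappa(2t-t_1)}\bigr)<0$, while you reorganize the same expression as the upward parabola $(x_0^2/u_1)u^2+(4x_0^2u_2/u_1)u+\mathrm{const}$ in $u=e^{-2\kappa t}$ with vertex at $u=-2u_2<0$. Your expressions $p_{\mathrm{succ}}=(x_1x_2+9)/18$, $x_{\mathrm{EPP}}=[4x_1x_2+3(x_1+x_2)]/(x_1x_2+9)$, and the multiplicative decay of $x=4F-1$ under the bilateral depolarizing channel all check out, and your quadratic agrees term by term with the paper's $\tilde F^{(W)}_{\mathrm{succ}}(t_2)$. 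That part is complete.

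The obstacle you flag for concurrence and negativity is not a gap in your reasoning; it exposes an error in the paper's own proof. The correct normalized concurrence is $\tilde{\mathcal C}=p_{\mathrm{succ}}\bigl(2F(t_2)-1\bigr)=2\tilde F-p_{\mathrm{succ}}$, which gives, as you computed,
\begin{equation*}
36\,\tilde{\mathcal C}=-\frac{x_0^2}{u_1}u^2+\frac{4x_0^2u_2}{u_1}u+3x_0u_2\Bigl(1+\frac{1}{u_1}\Bigr)-9,
\end{equation*}
a downward parabola with vertex at $u=2u_2$. The paper's stated $\tilde{\mathcal C}^{(W)}_{\mathrm{succ}}(t_2)$ instead equals $p_{\mathrm{succ}}\bigl(2F(t_2)+1\bigr)=2\tilde F+p_{\mathrm{succ}}$ — a sign error one can detect immediately by plugging in two perfect pairs at $t=t_1=t_2=0$, where the paper's formula returns $3$ even though $p_{\mathrm{succ}}\,\mathcal C\le 1$. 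That error flips the $u^2$ coefficient from $-x_0^2/u_1$ to $+3x_0^2/u_1$ and is what makes the paper's $t$-derivative come out unconditionally negative. With the correct sign, the earliest time is optimal only when the interval $[u_2,u_1]$ lies left of the vertex, i.e. $e^{2\kappa(t_2-t_1)}\le 2$; otherwise the maximum sits at the interior point $t=t_2-\ln 2/(2\kappa)$. A concrete counterexample: $F_0=1$, $t_1=0$, $\kappa t_2=0.5$ gives $\tilde{\mathcal C}\approx 0.052$ when purifying at $t=t_1$ but $\tilde{\mathcal C}\approx 0.069$ when purifying at $\kappa t\approx 0.153$ (both output fidelities exceed $1/2$, so $\mathcal C=2F-1$ applies throughout). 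So the concurrence/negativity clause of the proposition is false as stated and needs the additional hypothesis $t_2-t_1\le\ln 2/(2\kappa)$; your proposal correctly proves the fidelity clause and correctly isolates exactly the condition under which the remaining clauses hold.
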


For normalized quantities we skip distillable entanglement as the exact expression of Werner state's distillable entanglement is not yet known, and we focus on logarithmic negativity only. It turns out that the dynamics of the normalized logarithmic negativity with the EPP time $t$ is not always monotonic, and the closed form expression of the optimal time is difficult to obtain due to the transcendental nature of the involved functions. Therefore, we visualize the optimal time instead in Fig.~\ref{fig:opt_puri_werner_normEN}. We see that if our objective is to obtain the highest normalized entanglement negativity, then for most practical $t_1$ and $t_2$ pairs we should purify immediately. However, if the time to utilize the Bell pair $t_2$ is late, we may want to perform the EPP after waiting for some time after the generation of the second entangled pair.
\begin{figure}[t]
    \centering
    \includegraphics[width=0.9\linewidth]{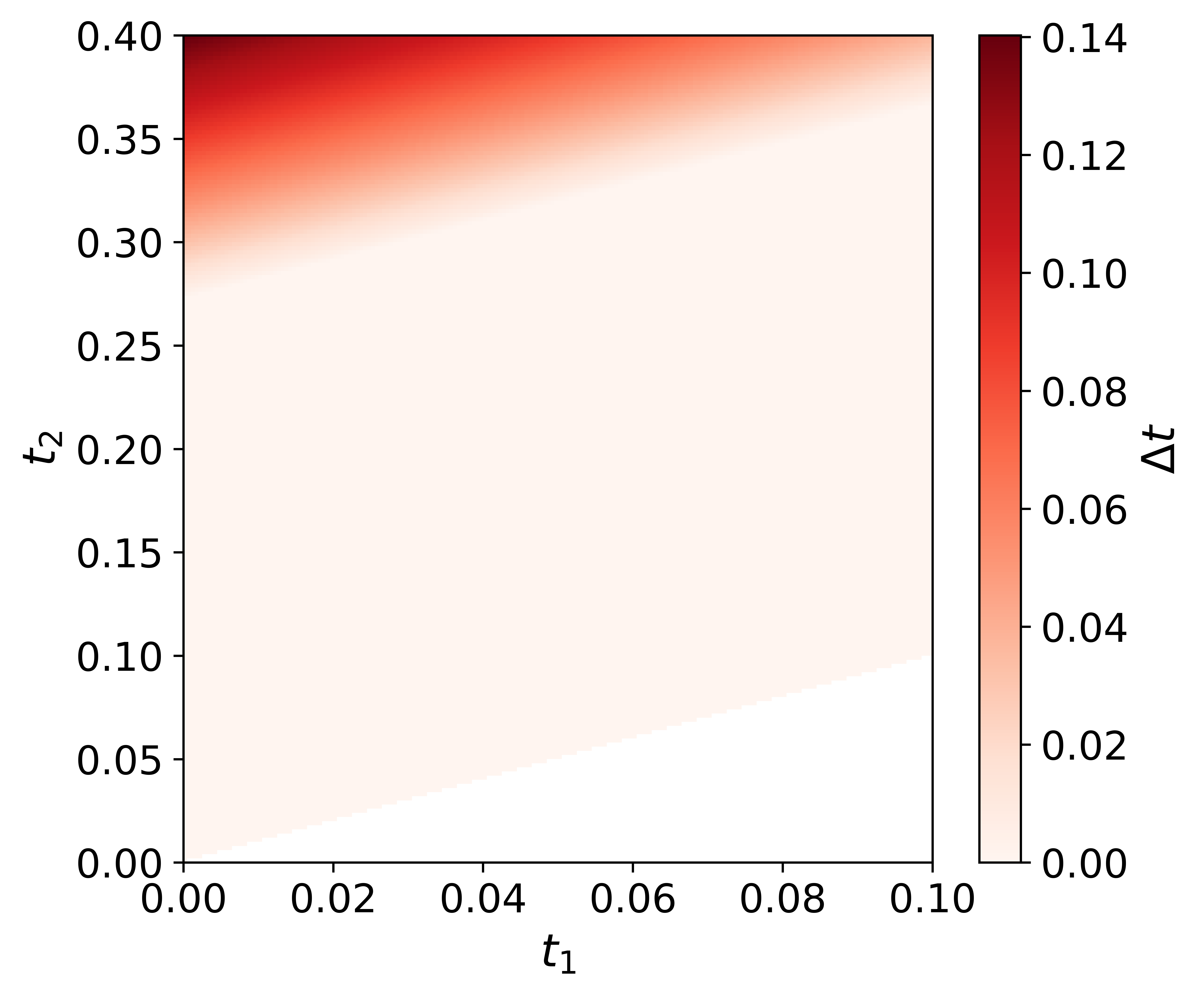}
    \caption{Optimal times for normalized entanglement negativity $\tilde{E}_N(t)$ under depolarizing channel with Werner initial states. Color represents the difference between the optimized EPP time $t$ and the start time $t_1$, i.e. $\Delta t=t-t_1$.}
    \label{fig:opt_puri_werner_normEN}
\end{figure}

\subsection{Optimal time transition}
\begin{figure*}[t]
    \centering
    \includegraphics[width=\linewidth]{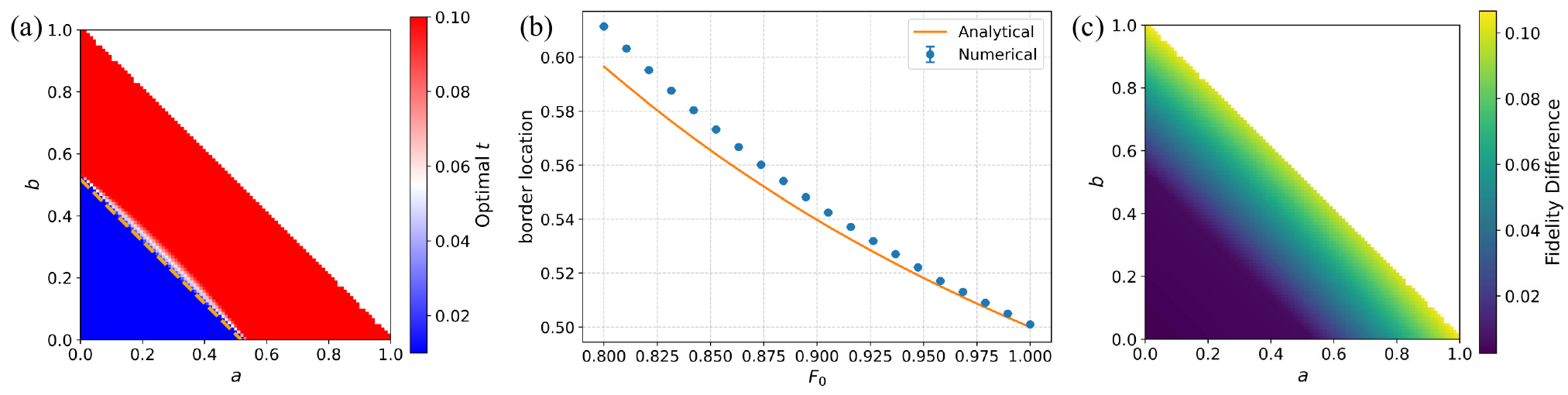}
    \caption{Numerical results for general continuous-time Pauli channels. (a) Optimal time for fidelity at $t_2$ conditioned on success under different Pauli memory decoherence channels. The optimal time $t$ is mapped to colors: blue represents early optimal time (close to $t_1$) and red represents late optimal time (close to $t_2$). The orange dashed line corresponds to the analytical estimation of transition border location Eqn.~\ref{eqn:transit_border}. We consider $t_1 = 0.01\gamma^{-1}$ and $t_2 = 0.1\gamma^{-1}$ in this case. (b) Comparison between the analytical approximation of optimal time transition border location and the numerically obtained transition border location for different raw fidelity $F_0$ of the initial Werner state. (c) Difference between the fidelity of the output state at $t_2$ after a successful purification at the optimal time and the fidelity of the later generated pair at $t_2$ assuming no EPP. Parameters $t_1,t_2,F_0$ are the same as for (a). Green means that performing the EPP leads to higher fidelity at $t_2$.}
    \label{fig:opt_time_transit}
\end{figure*}

We have studied two specific memory decoherence models. Here we generalize the discussion to an arbitrary continuous-time Pauli channel which describes memory decoherence. We present results for Werner initial states, while the phenomenon is general for different BDS as the initial states. 

To demonstrate how the optimal time for the EPP may depend on memory decoherence error models, we fix the following parameters, $t_1=0.01\gamma^{-1}$, $t_2=0.1\gamma^{-1}$, and the raw fidelity $F_0=0.95$, where $\gamma$ is the memory decoherence rate~\footnote{Here we use $\gamma$ to denote decoherence rate of a continuous-time Pauli channel with an arbitrary error pattern, different from the previous $\kappa$ for bit-flip and depolarizing channels, because of subtle differences in parametrization. See Appendix~\ref{sec:bds_dynamics} for the discussion on the equivalence of different ways of parametrization for bit-flip and depolarizing channels.}. According to recent experimental studies on various physical platforms including NV centers~\cite{hermans2022qubit}, neutral atoms~\cite{singh2023mid}, and rare-earth ions~\cite{ranvcic2018coherence}, which are all considered suitable for distributed quantum information processing, the reported qubit (memory) coherence times have reached the order of hundreds of milliseconds. Thus if we assume the coherence time of quantum memories for future quantum network implementations to be on the order of 1 second, $t_1=0.01\gamma^{-1}$ means that the second pair is generated roughly 10 milliseconds after the generation of the first pair. This time range can support approximately tens of consecutive heralded entanglement generation trials over a distance of 20km between repeater nodes, thus justifying the choice of the timing parameters.

Assuming both quantum memories undergo an identical Pauli channel with rate $\gamma$, we numerically calculate the optimal time to maximize the fidelity at $t_2$ conditioned on successful EPP, for all possible Pauli channels with different $X$ and $Y$ error rates, i.e. $\gamma_x=a\gamma$ and $\gamma_y=b\gamma$, s.t. $a,b\geq 0$ and $a+b\leq 1$. Note that the $Z$ error rate can be determined as $\gamma_z = (1-a-b)\gamma$ according to normalization. For numerical calculations, we always treat time in the unit of $\gamma^{-1}$ so that the quantities stay dimensionless. This treatment also make our observations independent of the absolute decoherence rate.

The optimal times for different $(a,b)$ pairs are visualized in Fig.~\ref{fig:opt_time_transit}(a), where only the lower triangle is a valid parameter space because $a+b\leq 1$. A remarkable phenomenon is observed. When moving from the upper right to the lower left of the $(a,b)$ plane, the optimal time experiences a drastic transition from the latest possible time to the earliest possible time, where the border of transition is demonstrated by the thin white stripe along the anti-diagonal direction. We call this phenomenon the \textit{optimal time transition}. 

Notice that the amount of the $X$ error from memory decoherence decreases while the $Z$ error increases when we move from the upper right to the lower left. Therefore, we could qualitatively argue that when the amount of the $X$ error is large the optimal time for fidelity at the end conditioned on success tends to be the latest possible time. In fact, this phenomenon is compatible with Prop.~\ref{thm:rank2_fid_opt_time} where we have proved that when the noise in the input states and memories is just the Pauli $X$, the optimal time for fidelity conditioned on success is always the latest possible time, independent of $t_1,t_2,F_0$. The existence of the optimal time transition re-emphasizes the nature of the error detecting capability of the recurrence EPP, while also revealing the recurrence EPP's sensitivity to the memory decoherence model.

After observing the clear border of transition, we would like to know where the border is located. Intuitively, the optimum being either the latest possible time or the earliest possible time suggests the fidelity at $t_2$ conditioned on success be a monotonically increasing or decreasing function with the EPP time $t$. Thus, the existence of the transition implies that for certain Pauli channels the leading order term that is linear in $t$ vanishes in $F_\mathrm{succ}(t_2)$. Following this direction, we are able to find a $t_1,t_2$-independent approximate expression of the location of the transition border $B$, which only depends on the raw fidelity $F_0$
\begin{equation}\label{eqn:transit_border}
    B\approx \frac{8F_0^2 -4F_0 + 5}{20F_0^2 - 4F_0 + 2}.
\end{equation}
The detailed derivation is provided in Appendix~\ref{sec:approx_border}. We can examine the accuracy of this approximation by comparing the analytical result and the numerically determined transition border location for different $F_0$, where the numerical results are averaged for different choices of $t_1\in[0,0.04\gamma^{-1}]$ and $t_2\in[0.06\gamma^{-1},0.1\gamma^{-1}]$. The comparison is demonstrated in Fig.~\ref{fig:opt_time_transit}(b), where the error bars associated with the points are too small to be visible. We see a good agreement for a wide range of the raw fidelity, while for higher $F_0$ the agreement is better, which validates the approximate analytical formula.

Finally, we consider another strategy which is to simply discard the earlier generated pair, and then at $t_2$ the state to utilize is just the later generated pair that has undergone a memory decoherence for $t_2-t_1$ time. We compare the fidelities at $t_2$ both with and without EPP performed at the optimal time, and the result is visualized in Fig.~\ref{fig:opt_time_transit}(c). We see that for the specific choice of parameters in this section, no matter what the Pauli channel is, performing EPP at the optimal time demonstrated in Fig.~\ref{fig:opt_time_transit}(a) will always provide higher fidelity at $t_2$, conditioned on successful EPP. This observation justifies the importance of determining the optimal time for different Pauli channels. Otherwise performing the EPP will potentially lead to worse result than simply discarding the older pair. 

\section{Conclusion and Discussion} \label{sec:discussion}
In summary, we examine the properties of EPPs in practical quantum network scenarios by focusing on two aspects: guaranteed improvement through successful purification in comparison to the non-identical input states, and optimal time for EPP when input states are generated successively under quantum memory decoherence. We analytically prove guaranteed improvement of practically important figures of merit for different forms of noisy input states, while the baseline varies with the error pattern of the input states. We also analytically prove parameter-independent optimal time for EPP, which, however, depends on the model of quantum memory decoherence. We further numerically demonstrate the phenomenon of optimal time transition when the quantum memory decoherence model changes, and analytically derive the approximate location for the transition border.

Our results in multiple ways deepen the understanding of EPP properties as well as offer insights in the optimization of EPP usage in realistic quantum networks. We would like to first emphasize that the guaranteed improvement explicitly demonstrates the universality property of the EPP for different classes of input states~\cite{zang2024no}, and the parameter independent optimal time for different memory decoherence models illustrates the impact of EPP universality on the desirable strategy of using EPP in quantum network scenarios. The optimal time of EPP can also potentially guide the optimization of EPP policy for realistic quantum networks~\cite{khatri2021towards,khatri2021policies,khatri2022design} with clear analytical intuition. Moreover, the optimal time transition phenomenon strongly suggests the importance of accurate characterization of noise models~\cite{martinis2015qubit,harper2020efficient,eisert2020quantum,kliesch2021theory} in quantum devices, especially quantum memories in our quantum network setup, as otherwise the optimization of protocol operation policies could lead to completely different solutions. Additionally, our results contribute to the recent analytical studies of quantum network scenarios~\cite{davies2024tools,davies2024entanglement,goodenough2024noise,zang2024analytical,inesta2025entanglement}, and demonstrate a new instance of provably optimal protocol/strategy for quantum networking systems~\cite{khatri2021policies,inesta2023optimal}. In the future, more detailed studies can be done by going beyond Pauli channels, removing the assumption of either $\mathrm{SU}(2)$ or Pauli twirling, and including errors in gate and measurement of the EPP, in more complex quantum network scenarios.

\begin{acknowledgments}
We acknowledge support from the NSF Quantum Leap Challenge Institute for Hybrid Quantum Architectures and Networks (NSF Award 2016136). T.Z. would like to acknowledge the support from the Marshall and Arlene Bennett Family Research Program. X.-A.C. and E.C. are supported by the U.S. Department of Energy Office of Science National Quantum Information Science Research Centers.
\end{acknowledgments}

\bibliography{references}

\onecolumngrid
\appendix

\section{Bell diagonal state dynamics under continuous-time separable Pauli channel}\label{sec:bds_dynamics}
In this appendix we derive analytical expression for how BDS evolves under typical single-qubit error channels on both qubits, which are used in the study of the optimal time. The results may be also applicable to other studies, for instance quantum networking simulation and performance evaluation~\cite{zang2023entanglement,chung2025cross,liu2025hardware}.

\subsection{Continuous-time general Pauli channel}
Pauli channels are defined as $\mathcal{E}_P(\rho) = \sum_{i=0}^3\alpha_iP_i\rho P_i$, where $P_0$ is the identity operator and $P_i$ are Pauli operators for $i=1,2,3$. We first derive continuous time Pauli channel from the channel for a single time step $\Delta t$
\begin{equation}
    \mathcal{E}_{\Delta t}(\rho) = (1-p)\rho + p(aX\rho X + bY\rho Y + cZ\rho Z),
    \label{eqn:pauli_def_discrete}
\end{equation}
where $p$ is the probability of having any Pauli error $\rho$ during time step $\Delta t$, and $a+b+c=1$ correspond to the relative strength of the three Pauli errors. A single application of this time step channel to qubit state $\rho$ can be represented by a $4\times 4$ matrix acting on a 4-dimensional vector which is the linearization of a $2\times 2$ density matrix $|\rho\rrangle=(\rho_{00},\rho_{01},\rho_{10},\rho_{11})^\mathrm{T}$. Explicitly we have
\begin{equation}
    \tilde{\mathcal{E}}_{\Delta t}|\rho\rrangle = 
    \begin{pmatrix}
    1-(a+b)p & 0 & 0 & (a+b)p\\
    0 & 1-(1+c)p & (a-b)p & 0\\
    0 & (a-b)p & 1-(1+c)p & 0\\
    (a+b)p & 0 & 0 & 1-(a+b)p\\
    \end{pmatrix}
    \begin{pmatrix}
    \rho_{00}\\
    \rho_{01}\\
    \rho_{10}\\
    \rho_{11}
    \end{pmatrix}.
\end{equation}
Then $n$ consecutive applications of the time step channel result in
\begin{align}
    \left(\tilde{\mathcal{E}}_{\Delta t}\right)^n|\rho\rrangle =& 
    \begin{pmatrix}
    \frac{1 + (1-2ap-2bp)^n}{2} & 0 & 0 & \frac{1 - (1-2ap-2bp)^n}{2}\\
    0 & \frac{(1-2bp-2cp)^n + (1-2ap-2cp)^n}{2} & \frac{(1-2bp-2cp)^n - (1-2ap-2cp)^n}{2} & 0\\
    0 & \frac{(1-2bp-2cp)^n - (1-2ap-2cp)^n}{2} & \frac{(1-2bp-2cp)^n + (1-2ap-2cp)^n}{2} & 0\\
    \frac{1 - (1-2ap-2bp)^n}{2} & 0 & 0 & \frac{1 + (1-2ap-2bp)^n}{2}\\
    \end{pmatrix}
    \begin{pmatrix}
    \rho_{00}\\
    \rho_{01}\\
    \rho_{10}\\
    \rho_{11}
    \end{pmatrix}\nonumber\\
    =&
    \begin{pmatrix}
    \frac{1 + e^{-2(\gamma_x+\gamma_y)t}}{2} & 0 & 0 & \frac{1 - e^{-2(\gamma_x+\gamma_y)t}}{2}\\
    0 & e^{-2\gamma_zt}\frac{e^{-2\gamma_yt} + e^{-2\gamma_xt}}{2} & e^{-2\gamma_zt}\frac{e^{-2\gamma_yt} - e^{-2\gamma_xt}}{2} & 0\\
    0 & e^{-2\gamma_zt}\frac{e^{-2\gamma_yt} - e^{-2\gamma_xt}}{2} & e^{-2\gamma_zt}\frac{e^{-2\gamma_yt} + e^{-2\gamma_xt}}{2} & 0\\
    \frac{1 - e^{-2(\gamma_x+\gamma_y)t}}{2} & 0 & 0 & \frac{1 + e^{-2(\gamma_x+\gamma_y)t}}{2}\\
    \end{pmatrix}
    \begin{pmatrix}
    \rho_{00}\\
    \rho_{01}\\
    \rho_{10}\\
    \rho_{11}
    \end{pmatrix}\nonumber\\
    =& 
    \begin{pmatrix}
    \frac{1 + e^{-2(\gamma_x+\gamma_y)t}}{2}\rho_{00} + \frac{1 - e^{-2(\gamma_x+\gamma_y)t}}{2}\rho_{11}\\
    e^{-2\gamma_zt}\frac{e^{-2\gamma_yt} + e^{-2\gamma_xt}}{2}\rho_{01} + e^{-2\gamma_zt}\frac{e^{-2\gamma_yt} - e^{-2\gamma_xt}}{2}\rho_{10}\\
    e^{-2\gamma_zt}\frac{e^{-2\gamma_yt} - e^{-2\gamma_xt}}{2}\rho_{01} + e^{-2\gamma_zt}\frac{e^{-2\gamma_yt} + e^{-2\gamma_xt}}{2}\rho_{10}\\
    \frac{1 - e^{-2(\gamma_x+\gamma_y)t}}{2}\rho_{00} + \frac{1 + e^{-2(\gamma_x+\gamma_y)t}}{2}\rho_{11}
    \end{pmatrix}
    = \tilde{\mathcal{E}}(n\Delta t)|\rho\rrangle = \tilde{\mathcal{E}}(t)|\rho\rrangle,
\end{align}
where for the second equality we have used the limit $\lim_{x\rightarrow\infty}(1+1/x)^x=e$, defined error rate $\gamma\Delta t=p$ and $\gamma_x=a\gamma,\gamma_y=b\gamma,\gamma_z=c\gamma$. We can express the resulting qubit state explicitly in a matrix form
\begin{equation}
    \rho(t) =
    \begin{pmatrix}
    \frac{1 + e^{-2(\gamma_x+\gamma_y)t}}{2}\rho_{00} + \frac{1 - e^{-2(\gamma_x+\gamma_y)t}}{2}\rho_{11} & e^{-2\gamma_zt}\frac{e^{-2\gamma_yt} + e^{-2\gamma_xt}}{2}\rho_{01} + e^{-2\gamma_zt}\frac{e^{-2\gamma_yt} - e^{-2\gamma_xt}}{2}\rho_{10}\\
    e^{-2\gamma_zt}\frac{e^{-2\gamma_yt} - e^{-2\gamma_xt}}{2}\rho_{01} + e^{-2\gamma_zt}\frac{e^{-2\gamma_yt} + e^{-2\gamma_xt}}{2}\rho_{10} &
    \frac{1 - e^{-2(\gamma_x+\gamma_y)t}}{2}\rho_{00} + \frac{1 + e^{-2(\gamma_x+\gamma_y)t}}{2}\rho_{11}
    \end{pmatrix}.
    \label{eqn:general_pauli_result}
\end{equation}
From this we can see the contributions from each single Pauli channel
\begin{equation}
\begin{aligned}
    &\rho(t)\\
    =& \frac{1}{4}\left(1 - e^{-2(\gamma_x+\gamma_y)t} + e^{-2(\gamma_y+\gamma_z)t} - e^{-2(\gamma_x+\gamma_z)t}\right)X\rho(0)X + \frac{1}{4}\left(1 - e^{-2(\gamma_x+\gamma_y)t} - e^{-2(\gamma_y+\gamma_z)t} + e^{-2(\gamma_x+\gamma_z)t}\right)Y\rho(0)Y\\
    &+ \frac{1}{4}\left(1 + e^{-2(\gamma_x+\gamma_y)t} - e^{-2(\gamma_y+\gamma_z)t} - e^{-2(\gamma_x+\gamma_z)t}\right)Z\rho(0)Z + \frac{1}{4}\left(1 + e^{-2(\gamma_x+\gamma_y)t} + e^{-2(\gamma_y+\gamma_z)t} + e^{-2(\gamma_x+\gamma_z)t}\right)\rho(0).
\end{aligned}
\end{equation}
That is we have the Pauli channel for duration $t$
\begin{equation}
\begin{aligned}
    \tilde{\mathcal{E}}_P(t) =& \frac{1 + e^{-2(\gamma_x+\gamma_y)t} + e^{-2(\gamma_y+\gamma_z)t} + e^{-2(\gamma_x+\gamma_z)t}}{4}\mathrm{Id} + \frac{1 - e^{-2(\gamma_x+\gamma_y)t} + e^{-2(\gamma_y+\gamma_z)t} - e^{-2(\gamma_x+\gamma_z)t}}{4}\mathrm{X}\\
    &+ \frac{1 - e^{-2(\gamma_x+\gamma_y)t} - e^{-2(\gamma_y+\gamma_z)t}+ e^{-2(\gamma_x+\gamma_z)t}}{4}\mathrm{Y} + \frac{1 + e^{-2(\gamma_x+\gamma_y)t} - e^{-2(\gamma_y+\gamma_z)t} - e^{-2(\gamma_x+\gamma_z)t}}{4}\mathrm{Z}\\
    =& p_I(t)\mathrm{Id} + p_X(t)\mathrm{X} + p_Y(t)\mathrm{Y} + p_Z(t)\mathrm{Z},
    \label{eqn:p_i(t)_def}
\end{aligned}
\end{equation}
where we define $p_i(t),\ i=I,X,Y,Z$, and we use the upright font to represent the channel of applying one Pauli operator to a density matrix.

\subsection{Bell diagonal states (BDS)}
Let us consider general BDS $\rho_\mathrm{BDS} = \lambda_1\Psi^+ + \lambda_2\Psi^- + \lambda_3\Phi^+ + \lambda_4\Phi^-$ with the normalization condition $\sum_{i=1}^3\lambda_i=1$, and $\lambda_1=F$ is the fidelity of $\rho$ w.r.t. $|\psi^+\rangle$. Here we demonstrate how the general BDS will vary over time under general Pauli channels. The channel applied to the bipartite state is a tensor product of two independent Pauli channels on both memories
\begin{equation}
\begin{aligned}
    &\tilde{\mathcal{E}}_{P,AB}(t) = \tilde{\mathcal{E}}_{P,A}(t)\otimes\tilde{\mathcal{E}}_{P,B}(t)\\
    =& \left(p_I^{(A)}(t)\mathrm{Id}_A + p_X^{(A)}(t)\mathrm{X}_A + p_Y^{(A)}(t)\mathrm{Y}_A + p_Z^{(A)}(t)\mathrm{Z}_A\right)\otimes\left(p_I^{(B)}(t)\mathrm{Id}_B + p_X^{(B)}(t)\mathrm{X}_B + p_Y^{(B)}(t)\mathrm{Y}_B + p_Z^{(B)}(t)\mathrm{Z}_B\right)\\
    =& p_I^{(A)}(t)p_I^{(B)}(t)\mathrm{Id}_A\otimes\mathrm{Id}_B + p_X^{(A)}(t)p_X^{(B)}(t)\mathrm{X}_A\otimes\mathrm{X}_B + p_Y^{(A)}(t)p_Y^{(B)}(t)\mathrm{Y}_A\otimes\mathrm{Y}_B + p_Z^{(A)}(t)p_Z^{(B)}(t)\mathrm{Z}_A\otimes\mathrm{Z}_B\\
    &+ p_I^{(A)}(t)p_X^{(B)}(t)\mathrm{Id}_A\otimes\mathrm{X}_B + p_I^{(A)}(t)p_Y^{(B)}(t)\mathrm{Id}_A\otimes\mathrm{Y}_B + p_I^{(A)}(t)p_Z^{(B)}(t)\mathrm{Id}_A\otimes\mathrm{Z}_B + p_X^{(A)}(t)p_I^{(B)}(t)\mathrm{X}_A\otimes\mathrm{Id}_B\\
    &+ p_X^{(A)}(t)p_Y^{(B)}(t)\mathrm{X}_A\otimes\mathrm{Y}_B + p_X^{(A)}(t)p_Z^{(B)}(t)\mathrm{X}_A\otimes\mathrm{Z}_B + p_Y^{(A)}(t)p_I^{(B)}(t)\mathrm{Y}_A\otimes\mathrm{Id}_B + p_Y^{(A)}(t)p_X^{(B)}(t)\mathrm{Y}_A\otimes\mathrm{X}_B\\
    &+ p_Y^{(A)}(t)p_Z^{(B)}(t)\mathrm{Y}_A\otimes\mathrm{Z}_B + p_Z^{(A)}(t)p_I^{(B)}(t)\mathrm{Z}_A\otimes\mathrm{Id}_B + p_Z^{(A)}(t)p_X^{(B)}(t)\mathrm{Z}_A\otimes\mathrm{X}_B + p_Z^{(A)}(t)p_Y^{(B)}(t)\mathrm{Z}_A\otimes\mathrm{Y}_B,
    \label{eqn:Pauli_channel_eff}
\end{aligned}
\end{equation}
with $4\times 4=16$ terms in total. Note that single bilateral Pauli channels ($\mathrm{P}_i\otimes \mathrm{P}_i$) will not affect Bell states, and $\mathrm{P}_i = \mathrm{P}_j\circ\mathrm{P}_k = \mathrm{P}_k\circ\mathrm{P}_j$ for $i\neq j,j\neq k,i\neq k$ where $i,j,k\in\{x,y,z\}$. We then have
\begin{equation}
\begin{aligned}
    \left[\tilde{\mathcal{E}}_{P,AB}(t)\right](\Psi^+_{AB}) =& \left(p_I^{(A)}(t)p_I^{(B)}(t) + p_X^{(A)}(t)p_X^{(B)}(t)  + p_Y^{(A)}(t)p_Y^{(B)}(t) + p_Z^{(A)}(t)p_Z^{(B)}(t)\right)\Psi^+_{AB}\\
    &+ \left(p_I^{(A)}(t)p_X^{(B)}(t) + p_X^{(A)}(t)p_I^{(B)}(t) + p_Y^{(A)}(t)p_Z^{(B)}(t) + p_Z^{(A)}(t)p_Y^{(B)}(t)\right)\Phi^+_{AB}\\
    &+ \left(p_I^{(A)}(t)p_Y^{(B)}(t) + p_Y^{(A)}(t)p_I^{(B)}(t) + p_X^{(A)}(t)p_Z^{(B)}(t) + p_Z^{(A)}(t)p_X^{(B)}(t)\right)\Phi^-_{AB}\\
    &+ \left(p_I^{(A)}(t)p_Z^{(B)}(t) + p_Z^{(A)}(t)p_I^{(B)}(t)  + p_X^{(A)}(t)p_Y^{(B)}(t) + p_Y^{(A)}(t)p_X^{(B)}(t)\right)\Psi^-_{AB}\\
    =& C_I(t)\Psi^+_{AB} + C_X(t)\Phi^+_{AB} + C_Y(t)\Phi^-_{AB} + C_Z(t)\Psi^-_{AB}.
    \label{eqn:Psi+AB}
\end{aligned}
\end{equation}
According to Eqn.~\ref{eqn:Pauli_channel_eff}, we can evaluate the evolution of four Bell components separately similar to Eqn.~\ref{eqn:Psi+AB}. The effect of general Pauli channel on the other 3 Bell states is
\begin{equation}
\begin{aligned}
    \left[\tilde{\mathcal{E}}_{P,AB}(t)\right](\Psi^-_{AB}) =& C_I(t)\Psi^-_{AB} + C_X(t)\Phi^-_{AB} + C_Y(t)\Phi^+_{AB} + C_Z(t)\Psi^+_{AB},\\
    \left[\tilde{\mathcal{E}}_{P,AB}(t)\right](\Phi^+_{AB}) =& C_I(t)\Phi^+_{AB} + C_X(t)\Psi^+_{AB} + C_Y(t)\Psi^-_{AB} + C_Z(t)\Phi^-_{AB},\\
    \left[\tilde{\mathcal{E}}_{P,AB}(t)\right](\Phi^-_{AB}) =& C_I(t)\Phi^-_{AB} + C_X(t)\Psi^-_{AB} + C_Y(t)\Psi^+_{AB} + C_Z(t)\Phi^+_{AB}.\\
\end{aligned}
\end{equation}
Then we obtain the evolution of BDS under such a channel
\begin{equation}
\begin{aligned}
    &\left[\tilde{\mathcal{E}}_{P,AB}(t)\right](\rho_\mathrm{BDS})\\
    =& \lambda_1\left(C_I(t)\Psi^+_{AB} + C_X(t)\Phi^+_{AB} + C_Y(t)\Phi^-_{AB} + C_Z(t)\Psi^-_{AB}\right) + \lambda_2\left(C_I(t)\Psi^-_{AB} + C_X(t)\Phi^-_{AB} + C_Y(t)\Phi^+_{AB} + C_Z(t)\Psi^+_{AB}\right)\\
    &+ \lambda_3\left(C_I(t)\Phi^+_{AB} + C_X(t)\Psi^+_{AB} + C_Y(t)\Psi^-_{AB} + C_Z(t)\Phi^-_{AB}\right) + \lambda_4\left(C_I(t)\Phi^-_{AB} + C_X(t)\Psi^-_{AB} + C_Y(t)\Psi^+_{AB} + C_Z(t)\Phi^+_{AB}\right)\\
    =& \left(\lambda_1C_I(t) + \lambda_2C_Z(t) + \lambda_3C_X(t) + \lambda_4C_Y(t)\right)\Psi^+_{AB} + \left(\lambda_1C_Z(t) + \lambda_2C_I(t) + \lambda_3C_Y(t) + \lambda_4C_X(t)\right)\Psi^-_{AB}\\
    &+ \left(\lambda_1C_X(t) + \lambda_2C_Y(t) + \lambda_3C_I(t) + \lambda_4C_Z(t)\right)\Phi^+_{AB} + \left(\lambda_1C_Y(t) + \lambda_2C_X(t) + \lambda_3C_Z(t) + \lambda_4C_I(t)\right)\Phi^-_{AB}.
\end{aligned}
\end{equation}
The above transformation of four diagonal elements can be viewed from matrix multiplication perspective
\begin{equation}
\begin{aligned}
    \begin{pmatrix}
        \lambda_1'\\
        \lambda_2'\\
        \lambda_3'\\
        \lambda_4'\\
    \end{pmatrix} =& 
    \begin{pmatrix}
        p_I^{(A)}(t) & p_Z^{(A)}(t) & p_X^{(A)}(t) & p_Y^{(A)}(t) \\
        p_Z^{(A)}(t) & p_I^{(A)}(t) & p_Y^{(A)}(t) & p_X^{(A)}(t) \\
        p_X^{(A)}(t) & p_Y^{(A)}(t) & p_I^{(A)}(t) & p_Z^{(A)}(t) \\
        p_Y^{(A)}(t) & p_X^{(A)}(t) & p_Z^{(A)}(t) & p_I^{(A)}(t) \\        
    \end{pmatrix}
    \begin{pmatrix}
        p_I^{(B)}(t) & p_Z^{(B)}(t) & p_X^{(B)}(t) & p_Y^{(B)}(t) \\
        p_Z^{(B)}(t) & p_I^{(B)}(t) & p_Y^{(B)}(t) & p_X^{(B)}(t) \\
        p_X^{(B)}(t) & p_Y^{(B)}(t) & p_I^{(B)}(t) & p_Z^{(B)}(t) \\
        p_Y^{(B)}(t) & p_X^{(B)}(t) & p_Z^{(B)}(t) & p_I^{(B)}(t) \\        
    \end{pmatrix}
    \begin{pmatrix}
        \lambda_1\\
        \lambda_2\\
        \lambda_3\\
        \lambda_4\\
    \end{pmatrix}\\
    =& 
    \begin{pmatrix}
        C_I(t) & C_Z(t) & C_X(t) & C_Y(t) \\
        C_Z(t) & C_I(t) & C_Y(t) & C_X(t) \\
        C_X(t) & C_Y(t) & C_I(t) & C_Z(t) \\
        C_Y(t) & C_X(t) & C_Z(t) & C_I(t) \\        
    \end{pmatrix}
    \begin{pmatrix}
        \lambda_1\\
        \lambda_2\\
        \lambda_3\\
        \lambda_4\\
    \end{pmatrix},
\end{aligned}
\end{equation}
where the two $p(t)$ matrices in the first row commute, which demonstrates the independence of two memory decoherence processes. Note that the coefficients only correspond to errors that happened, and to convert to the exact order of the four Bell states in the main text we only need to apply a single qubit $X$ gate.

\subsection{Comment on parametrization of the bit-flip channel and the depolarizing channel}
The bit-flip channel can be canonically written as
\begin{align}
    \mathcal{E}_{\mathrm{bit-flip}}(\rho) = p\rho + (1-p)\frac{X\rho X + \rho}{2} = \frac{(1+p)}{2}\rho + \frac{(1+p)}{2}X\rho X,
\end{align}
where $(1-p)$ can be interpreted as the probability for the complete bit-flip channel to happen, and $(X\rho X + \rho)/2$ is the steady state of the bit-flip channel. Different from the previous microscopic treatment of general continous-time Pauli channels, we may also define bit-flip rate $\kappa$ with $p = \exp(-\kappa t)$. We use this parametrization in Sec.~\ref{sec:opt_time_rank2}. For the continuous-time Pauli channels characterized by rate $\gamma$, when $\gamma=\gamma_x$ the resulting channel should be equivalent to the bit-flip channel. According to Eqn.~\ref{eqn:p_i(t)_def} the resulting channel should be
\begin{align}
    \mathcal{E}_{\mathrm{bit-flip}}(\rho) = \frac{1+e^{-2\gamma t}}{2}\rho + \frac{1 - e^{-2\gamma t}}{2}X\rho X.
\end{align}
Comparing the definition of $\kappa$ and the above equation, we obtain the connection between the two ways of parametrization for the bit-flip channel: $\kappa_{\mathrm{bit-flip}} = 2\gamma$. This also applies to other 2 single-Pauli channels.

The depolarizing channel (for qubit) can be written as
\begin{align}
    \mathcal{E}_{\mathrm{dp}}(\rho) = p\rho + (1-p)\frac{I}{2} = p\rho + \frac{1-p}{4}(X\rho X + Y\rho Y + Z\rho Z + \rho),
\end{align}
where $(1-p)$ can be interpreted as the probability for the complete depolarizing channel to happen, and the maximally mixed state $I/2$ is the output of the complete depolarizing channel. Similar to the bit-flip channel, we can define the depolarizing rate $\kappa$ with $p = \exp(-\kappa t)$. We use this parametrization in Sec.~\ref{sec:opt_time_werner}. For the continuous-time Pauli channels characterized by rate $\gamma$, when $\gamma_x=\gamma_y=\gamma_z=\gamma/3$ the resulting channel should be equivalent to the depolarizing channel. According to Eqn.~\ref{eqn:p_i(t)_def} the resulting channel should be 
\begin{align}
    \mathcal{E}_{\mathrm{dp}}(\rho) = \frac{1 + 3e^{-4\gamma t/3}}{4}\rho + \frac{1 - e^{-4\gamma t/3}}{4}X\rho X + \frac{1 - e^{-4\gamma t/3}}{4}Y\rho Y + \frac{1 - e^{-4\gamma t/3}}{4}Z\rho Z.
\end{align}
Comparing the definition of $\kappa$ and the above equation, we obtain the connection between the two ways of parametrization for the depolarizing channel: $\kappa_{\mathrm{dp}} = 4\gamma/3$.

\section{Proofs for guaranteed improvement}\label{sec:proof_improv}
In this appendix, we provide detailed proofs of the propositions on guaranteed improvement from Sec~\ref{sec:improv}. 

\subsection{Proof of Proposition~\ref{thm:rank2bds_fidincr}}
\begin{proof}
    We construct the fidelity increase function
    \begin{equation}
        F_\mathrm{incr}^{(2)}(F_1,F_2) = \frac{F_1F_2}{F_1F_2 + (1-F_1)(1-F_2)} - \max\{F_1,F_2\}.
    \end{equation}
    Note the symmetry between $F_1,F_2$ in the defined function, without loss of generality, we can use $F_1$ to replace $\max\{F_1,F_2\}$ and arrive at a modified increase function
    \begin{equation}
        F_\mathrm{incr}^{(2)'}(F_1,F_2) = \frac{F_1F_2 - F_1^2F_2 - F_1(1-F_1)(1-F_2)}{F_1F_2 + (1-F_1)(1-F_2)}.
    \end{equation}
    Obviously, the denominator is positive on $(F_1,F_2)\in[1/2,1]\times[1/2,1]$, so to prove the statement we only need to focus on the numerator $N_\mathrm{incr}^{(2)'}=F_1F_2 - F_1^2F_2 - F_1(1-F_1)(1-F_2)$. For the numerator we have
    \begin{equation}
        N_\mathrm{incr}^{(2)'} = F_1(1-F_1)(2F_2-1) \geq 0,
    \end{equation}
    for $F_1,F_2\geq 1/2$. Hence, $F_\mathrm{incr}^{(2)}(F_1,F_2)\geq 0$ for $F_1,F_2\geq 1/2$.
\end{proof}

\subsection{Proof of Proposition~\ref{thm:werner_fidincr}}
\begin{proof}
    We first prove the statement for $\lambda=1/2$. We construct the fidelity increase function
    \begin{equation}
    \begin{aligned}
        F_\mathrm{incr}^{(W)}(F_1,F_2) =& \frac{F_1F_2 + \frac{(1-F_1)(1-F_2)}{9}}{F_1F_2 + \frac{F_1(1-F_2)+(1-F_1)F_2}{3} + \frac{5(1-F_1)(1-F_2)}{9}} - \frac{F_1 + F_2}{2}\\
        =& \frac{(2-8F_2)F_1^2 - (8F_2^2-24F_2+7)F_1 + 2F_2^2-7F_2+2}{2[5 - 2F_2 + F_1(8F_2-2))]}.
    \end{aligned}
    \end{equation}
    Obviously, the denominator is positive on $(F_1,F_2)\in[1/2,1]\times[1/2,1]$, so to prove the statement we only need to focus on the numerator $N_\mathrm{incr}^{(W)}(F_1,F_2) = (2-8F_2)F_1^2 + (-8F_2^2+24F_2-7)F_1 + (2F_2^2-7F_2+2)$. First observe that $2-8F_2<0$ for $F_2\in[1/2,1]$, so $N_\mathrm{incr}^{(W)}(F_1,F_2)$ will always be a parabola opening downward on the fidelity domain of our interest, when we view $F_1$ as the only argument and $F_2$ as a varying parameter from 1/2 to 1. We know that the minimum of a downward opening parabola on a closed domain will be obtained at the boundary of the domain, so we only need to check the value of $N_\mathrm{incr}^{(W)}(F_1,F_2)$ for $F_1=1/2$ and $F_1=1$. At the boundary we obtain the explicit expressions of the numerator for varying $F_2$
    \begin{equation}
        N_\mathrm{incr}^{(W)}(1/2,F_2) = -2F_2^2 + 3F_2 - 1,~N_\mathrm{incr}^{(W)}(1,F_2) = 3(-2F_2^2 + 3F_2 - 1).
    \end{equation}
    Thus by checking the above parabola of $F_2$ at $F_2=1/2,1$, $N_\mathrm{incr}^{(W)}(1/2,1/2)=N_\mathrm{incr}^{(W)}(1/2,1)=N_\mathrm{incr}^{(W)}(1,1/2)=N_\mathrm{incr}^{(W)}(1,1)=0$, we have
    \begin{equation}
        N_\mathrm{incr}^{(W)}(F_1,F_2)\geq \min\{N_\mathrm{incr}^{(W)}(1/2,1/2),N_\mathrm{incr}^{(W)}(1/2,1), N_\mathrm{incr}^{(W)}(1,1/2),N_\mathrm{incr}^{(W)}(1,1)\} = 0.   
    \end{equation}
    As $F_\mathrm{baseline}(\lambda,F_1,F_2)$ monotonically increases when $\lambda$ increases for all fixed $F_1,F_2$, the above result directly implies that $F^{(W)}(F_1,F_2)\geq F_\mathrm{baseline}(\lambda,F_1,F_2)$ for $\lambda\in[0,1/2]$.

    For $\lambda\in(1/2,1]$, we consider $F_1=1,F_2=1/2$ or $F_1=1/2,F_2=1$, and explicit calculation yields
    \begin{equation}
        F^{(W)}(1,1/2) - F_\mathrm{baseline}(\lambda,1,1/2) = \frac{3}{4}-\frac{3\lambda}{2}<0,
    \end{equation}
    which provides an instance of not guaranteed improvement, thus the proof is completed.
\end{proof}

\subsection{Fidelity threshold for higher baselines}
We compare successful output fidelity of the recurrence EPP given Werner state $F^{(W)}(F_1,F_2)$ and the convex combination of the higher input fidelity and the lower input fidelity $F_\mathrm{baseline}(\lambda,F_1,F_2)$, for $\lambda>1/2$.
\begin{proposition}
    For the successful output fidelity of the recurrence EPP given two input Werner states $\rho_\mathrm{W}(F_1)$ and $\rho_\mathrm{W}(F_2)$ with $F_1,F_2\geq 1/2$, we have $F^{(W)}(F_1,F_2)\geq F_\mathrm{baseline}(\lambda,F_1,F_2)$ for $F_1,F_2\in[\frac{\lambda}{2(1-\lambda)},1]$ and $\lambda\in[1/2,2/3]$.
\end{proposition}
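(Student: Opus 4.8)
The plan is to mirror the proof of Proposition~\ref{thm:werner_fidincr} (the $\lambda=1/2$ case), exploiting that the baseline is piecewise linear and that the output fidelity is a ratio with positive denominator. First I would invoke the symmetry of $F^{(W)}$ under exchanging $F_1$ and $F_2$ to assume without loss of generality that $F_1\geq F_2$, so that $F_\mathrm{baseline}(\lambda,F_1,F_2)=\lambda F_1+(1-\lambda)F_2$ and the relevant region is the triangle $\{\,\tfrac{\lambda}{2(1-\lambda)}\leq F_2\leq F_1\leq 1\,\}$. Writing $F^{(W)}=A/D$ with $A$ the numerator in Proposition~\ref{thm:werner_fidincr} and $D=p_\mathrm{BDS}>0$, the sign of the increase equals the sign of
\begin{equation}
N(F_1,F_2)=A-\big[\lambda F_1+(1-\lambda)F_2\big]\,D .
\end{equation}
Since $A$ and $D$ are bilinear (hence linear in $F_1$ at fixed $F_2$), $N(\cdot,F_2)$ is a quadratic in $F_1$ whose leading coefficient is $-\lambda$ times the coefficient of $F_1$ in $D$, namely $-\lambda\cdot\tfrac{2(4F_2-1)}{9}<0$ for every $F_2>1/4$. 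Thus $N(\cdot,F_2)$ is a downward-opening parabola.

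Because a concave function on a closed interval attains its minimum at an endpoint, for each fixed $F_2$ the minimum of $N$ over the admissible range $F_1\in[F_2,1]$ occurs at $F_1=F_2$ or at $F_1=1$. It therefore suffices to verify $N\geq 0$ on the diagonal edge $F_1=F_2$ and the edge $F_1=1$. On the diagonal the baseline collapses to $F_2$, and a direct computation gives the clean factorization
\begin{equation}
A-F_2 D=\tfrac{1}{9}(1-F_2)(2F_2-1)(4F_2-1),
\end{equation}
which is nonnegative for $F_2\in[1/2,1]$; since $\tfrac{\lambda}{2(1-\lambda)}\geq 1/2$ for all $\lambda\geq 1/2$, this edge is settled on the whole threshold interval.

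For the edge $F_1=1$ I would use $F^{(W)}(1,F_2)=\tfrac{3F_2}{2F_2+1}$, giving
\begin{equation}
F^{(W)}(1,F_2)-\big[\lambda+(1-\lambda)F_2\big]=\frac{-\,q(F_2)}{2F_2+1},\qquad q(F_2)=2(1-\lambda)F_2^2+(\lambda-2)F_2+\lambda .
\end{equation}
The discriminant of $q$ is $(3\lambda-2)^2$, and its roots are exactly $F_2=1$ and $F_2=\tfrac{\lambda}{2(1-\lambda)}$; as $q$ opens upward, $q\leq 0$ — and hence the edge inequality holds — precisely on $[\tfrac{\lambda}{2(1-\lambda)},1]$. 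This identifies the origin of the stated threshold: it is the binding root of the $F_1=1$ edge. Combining the two edge bounds with concavity yields $N\geq 0$ throughout the triangle, completing the argument.

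The main obstacle, and the step deserving the most care, is the clean bookkeeping of the reduction: one must confirm that at fixed $F_2$ the admissible $F_1$-interval is $[F_2,1]$ (so the lower endpoint is the moving diagonal, not the fixed threshold), and that concavity holds uniformly, i.e. $4F_2-1>0$ on the entire region — guaranteed because $\tfrac{\lambda}{2(1-\lambda)}\geq 1/2$ throughout $\lambda\in[1/2,2/3]$. The edge algebra is light once the factorizations are spotted, and the upper cutoff $\lambda\leq 2/3$ enters only through keeping $\tfrac{\lambda}{2(1-\lambda)}\leq 1$, so that the claimed region is nonempty.
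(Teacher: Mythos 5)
Your proof is correct and follows essentially the same route as the paper: reduce the inequality to the sign of the numerator $N=A-[\lambda F_1+(1-\lambda)F_2]D$, observe that $N$ is concave (a downward parabola) in $F_1$ for fixed $F_2$, and verify nonnegativity on the two bounding edges, with the threshold $\tfrac{\lambda}{2(1-\lambda)}$ emerging as the binding root on the $F_1=1$ edge. The only cosmetic difference is that you take the lower endpoint of the $F_1$-interval to be the diagonal $F_1=F_2$ (with the clean factorization $\tfrac{1}{9}(1-F_2)(2F_2-1)(4F_2-1)$) where the paper uses $F_1=1/2$; both choices are valid and the arguments are otherwise equivalent.
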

\begin{proof}
    Given the symmetry between $F_1$ and $F_2$, we can focus on the case where $F_1\geq F_2$, and study the difference
    \begin{align}
        \left.F_{\mathrm{incr},\lambda}^{(W)}\right\vert_{F_1\geq F_2} =& F^{(W)}(F_1,F_2) - [\lambda F_1+(1-\lambda)F_2]\nonumber\\
        =& \frac{1 - (1+5r)F_1 + (5r-6)F_2 + 2rF_1^2 + 2(1-r)F_2^2 + 12F_1F_2 + 8(r-1)F_1F_2^2 - 8rF_1^2F_2}{5 - 2F_1 - 2F_2 + 8F_1F_2},
    \end{align}
    where the denominator is positive for $F_1,F_2\in[1/2,1]$, so we only need to examine the numerator $N_{\mathrm{incr},\lambda}^{(W)}$ for $1\geq F_1\geq F_2\geq 1/2$. First of all, we can show that $N_{\mathrm{incr},\lambda}^{(W)}$ is concave for fixed $F_2$, by taking the second order partial derivative with respect to $F_1$
    \begin{align}
        \frac{\partial^2}{\partial F_1^2}N_{\mathrm{incr},\lambda}^{(W)} = 4\lambda(1-4F_2) < 0.
    \end{align}
    Therefore, the minimal value of $N_{\mathrm{incr},\lambda}^{(W)}$ for fixed $F_2$ can only be obtained on the boundary, i.e. $F_1=1/2$ or $F_1=1$. Notice that at $F_1=1/2$ we have
    \begin{align}
        \left.N_{\mathrm{incr},\lambda}^{(W)}\right\vert_{F_1=1/2} = \frac{1}{2}(2F_2-1)[4\lambda-1-2(1-\lambda)F_2],
    \end{align}
    which is always non-negative for $F_2\in[1/2,1]$. Then we consider $F_1=1$
    \begin{align}
        \left.N_{\mathrm{incr},\lambda}^{(W)}\right\vert_{F_1=1} = 3(1-F_2)[2(1-\lambda)F_2-\lambda],
    \end{align}
    whose positivity is determined by the square bracket term alone. It is then easy to see that for $\lambda\in(2/3,1]$, there is no $F_2\in[1/2,1]$ s.t. $\left.N_{\mathrm{incr},\lambda}^{(W)}\right\vert_{F_1=1}\geq 0$. Meanwhile, as long as $\lambda\in[1/2,2/3]$, we have $\left.N_{\mathrm{incr},\lambda}^{(W)}\right\vert_{F_1=1}\geq 0$ for $F_2\in[\frac{\lambda}{2(1-\lambda)},1]$. Then according to the concavity of $N_{\mathrm{incr},\lambda}^{(W)}$ and the symmetry of $F_{\mathrm{incr},\lambda}^{(W)}$, the statement is proved.
\end{proof}

\subsection{Proof of Proposition~\ref{thm:werner_de_improv}}
\begin{proof}
We want to examine the following function
\begin{equation}
    E_{D}^\mathrm{low-up}(F_1,F_2) = I_{A\rangle B}^{W}(F'_W) - R(\frac{F_1+F_2}{2}).
\end{equation}
According to the symmetry between $F_1$ and $F_2$, we can transform the coordinates into $F_1=(F_1'-F_2')/\sqrt{2},F_2=(F_1'+F_2')/\sqrt{2}$. After this coordinate transformation, the input fidelity region becomes $F_1'\in[1/\sqrt{2},\sqrt{2}],F_2'\in[1/\sqrt{2}-F_1',F_1'-1/\sqrt{2}]$. We first examine the partial derivative w.r.t. $F_2'$ (anti-diagonal coordinate) of $E_{D}^\mathrm{low-up}(F_1',F_2')$
\begin{equation}\label{eqn:dist_ent_F2'deriv}
    \frac{\partial}{\partial F_2'}E_{D}^\mathrm{low-up} = \frac{6(2\sqrt{2}F_1'-7)\log\left[\frac{9(7-2\sqrt{2}F_1')}{F_2'^2 - F_1'(F_1'+\sqrt{2}) + 4}-15\right]}{[4F_2'^2 + 2F_1'(\sqrt{2}-2F_1')-5]^2}F_2'.
\end{equation}
Note that here the base of logarithm does not matter because it only contributes as an additional factor, as the constant terms in the definition of the coherent information and the Rains bound cancel each other. It is obvious that the denominator is non-negative, and the pre-factor of numerator $(2\sqrt{2}F_1'-7)$ is negative. Then we examine the positivity of the logarithmic term in the numerator. By evaluating the argument of the logarithmic term
\begin{align}
    &\frac{\partial}{\partial F_1'}\frac{9(7-2\sqrt{2}F_1')}{F_2'^2 - F_1'(F_1'+\sqrt{2}) + 4} = \frac{18F_1'(7-\sqrt{2}F_1') - 9\sqrt{2}(1+2F_2^2)}{[F_2'^2 - F_1'(\sqrt{2}+F_1') + 4]^2} > 0,\\
    &\frac{\partial}{\partial F_2'}\frac{9(7-2\sqrt{2}F_1')}{F_2'^2 - F_1'(F_1'+\sqrt{2}) + 4} = \frac{-18(7-2\sqrt{2}F_2')}{[F_2'^2 - F_1'(\sqrt{2}+F_1') + 4]^2}F_2'.
\end{align}
Thus the minimum value of the logarithmic term can only be found on the following boundary of the input fidelity regions: $F_1=1/2$ and $F_2=1/2$. By evaluating the value of the argument on the boundary it can be easily shown that the logarithmic term is always positive on the entire input fidelity region. In this way, we have shown that the factor before $F_2'$ on the right hand side (RHS) of Eqn.~\ref{eqn:dist_ent_F2'deriv} is always non-positive, which means that $E_{D}^\mathrm{low-up}$ decreases when the difference between $F_1$ and $F_2$ increases for all fixed $F_1+F_2$. Therefore, we know that for all square region of input fidelities $(F_1,F_2)\in[F,1]\times[F,1]$ where $F\geq 1/2$ the minimum value of $E_{D}^\mathrm{low-up}$ can only be taken on the boundary $F_1=F,\ F_2=F,\ F_1=1,\ F_2=1$. Hence, suppose there exists a threshold for input fidelity $F_\mathrm{th}$ s.t. $E_{D}^\mathrm{low-up}\geq\forall(F_1,F_2)\in[F_\mathrm{th},1]\times[F_\mathrm{th},1]$, $E_{D}^\mathrm{low-up}$ must be non-negative on all four boundaries $F_1=F_\mathrm{th},\ F_2=F_\mathrm{th},\ F_1=1,\ F_2=1$. According to the symmetry between $F_1$ and $F_2$, we can focus on only two boundaries, e.g. $F_1=F_\mathrm{th},\ F_2=1$.

Firstly, we examine the boundary $F_2=1$, where the objective function $E_{D}^\mathrm{low-up}$ becomes
\begin{equation}
\begin{aligned}
    E_{D}^\mathrm{low-up}(F_1,1) =& \frac{3F_1}{2F_1+1}\log\left(\frac{3F_1}{2F_1+1}\right) + \left(1-\frac{3F_1}{2F_1+1}\right)\log\left[\frac{1}{3}\left(1-\frac{3F_1}{2F_1+1}\right)\right]\\
    &- \frac{1-F_1}{2}\log\left(\frac{1-F_1}{2}\right) - \frac{1+F_1}{2}\log\left(\frac{1+F_1}{2}\right),
\end{aligned}
\end{equation}
for which we can examine its convexity/concavity by taking the second-order derivative as 
\begin{equation}
    \frac{d^2}{dF_1^2}E_{D}^\mathrm{low-up}(F_1,1) = \frac{-3 - 8 F_1 + 12 F_1^3 + 8 F_1^4 + 
 12 F_1 (F_1^2 - 1) \log\left(\frac{1 - F_1}{9 F_1}\right)}{F_1 (1 + 2 F_1)^3 (F_1^2 - 1)}.
\end{equation}    
It is obvious that the denominator is always non-positive on $F_1\in[1/2,1]$, and we now only focus on the numerator $N(F_1)$. As $\frac{1-F_1}{9F_1}<\frac{1}{9F_1}<1$ for $F_1\in[1/2,1]$, we have $\log\left(\frac{1-F_1}{9F_1}\right)<\log\left(\frac{1}{9F_1}\right)<0$. Also given that $F_1^2 - 1\leq 0$ for $F_1\in[1/2,1],$ we have $(F_1^2 - 1)\log\left(\frac{1-F_1}{9F_1}\right) > (F_1^2 - 1)\log\left(\frac{1}{9F_1}\right)$. Now we switch to a relaxed version of the numerator
\begin{equation}
    \tilde{N}(F_1)= -3 - 8 F_1 + 12 F_1^3 + 8 F_1^4 + 12 F_1 (1 - F_1^2) \log\left(9 F_1\right),
\end{equation}
s.t. $\tilde{N}(F_1)\leq N(F_1)$ for $F_1\in[1/2,1]$.
Through the evaluation of its derivative we arrive at
\begin{equation}
    \frac{d}{dF_1}\tilde{N}(F_1) = 4 [1 + 6 F_1^2 + 8 F_1^3 + (3 - 9 F_1^2) \log(9 F_1)],
\end{equation}
within which the only term that can be negative on $[1/2,1]$ is $(3 - 9 F_1^2)$. We thus consider a relaxation $\log(9 F_1)\rightarrow \log(9)\geq\log(9 F_1)>0$. After replacing $\log(9 F_1)$, the new function will still be non-negative when $3 - 9 F_1^2\geq 0$ and will be smaller than the original function when $3 - 9 F_1^2 < 0$. Also, after the transformation, the funtion becomes a simple polynomial which can be straightforwardly shown to be positive on $[1/2,1]$. In this way, we have shown that $\tilde{N}(F_1)$ monotonically increases on $[1/2,1]$, which means that it finds its minimum value on the interval at $F_1=1/2$, which is positive. Therefore, we now know that the numerator of $E_{D}^\mathrm{low-up}(F_1,1)$ is positive when its denominator is non-positive, i.e. $E_{D}^\mathrm{low-up}(F_1,1)$ is concave on $[1/2,1]$, and it can thus have at most two zeros on this interval. 

It is then easy to show that $E_{D}^\mathrm{low-up}(1,1)=0$, and we can numerically find the only other root on $[1/2,1)$ as $F_\mathrm{1,root}\approx 0.879$, above which $E_{D}^\mathrm{low-up}(F_1,1)\geq 0$ and below which $E_{D}^\mathrm{low-up}(F_1,1)\leq 0$. Now combining this numerically determined value with the above proved result that the $F_2'=0$ ``ridge'' has the highest value given a fixed $F_1'$, we can easily see that for $(F_1,F_2)\in[F_\mathrm{th,1},1]\times[F_\mathrm{th,1},1]$ it is guaranteed that $E_{D}^\mathrm{low-up}(F_1,F_2)\geq 0$, where $F_\mathrm{th,1} = (1+F_\mathrm{1,root})/2\approx 0.939$ according to a simple geometric argument. 
\end{proof}

\subsection{Proof of Proposition~\ref{thm:general_bds_guarantee}}
Similar to previous proofs, we can define a fidelity improvement function as
\begin{equation}
    F_\mathrm{incr,BDS}^\mathrm{avg}(F_1,F_2) = \frac{F_1F_2 +a^2(1-F_1)(1-F_2)}{[F_1+a(1-F_1)][F_2+a(1-F_2)] + (1-a)^2(1-F_1)(1-F_2)} - \frac{F_1+F_2}{2}.
\end{equation}
In the following we will prove the three statements of Proposition~\ref{thm:general_bds_guarantee} separately.

\subsubsection{Proof of statement 2}
We first prove statement 2, i.e. for $a>1/3$ there exists $(F_1,F_2)\in[1/2,1]\times[1/2,1]$ s.t. $F_\mathrm{incr,BDS}^\mathrm{avg}<0$.
\begin{proof}
It is sufficient to find explicit values of $F_1,F_2$ satisfying the above condition to prove this statement. Inspired by previous results, we examine the boundary of input fidelity region. Specifically, we compute 
\begin{equation}
    F_\mathrm{incr,BDS}^\mathrm{avg}(1,1/2) = \frac{1}{1+a} - \frac{3}{4},
\end{equation}
which monotonically decreases on $a\in[0,1]$, and obviously its zero is at $a=1/3$. Therefore, it is clear that for $a>1/3$ we have $F_\mathrm{incr,BDS}^\mathrm{avg}(1,1/2)<0$.
\end{proof}

\subsubsection{Proof of statement 3}
Then we prove statement 3, i.e. $F_\mathrm{incr,BDS}^\mathrm{avg}(F_1,F_2)\leq 0$ for all $(F_1,F_2)\in[1/2,1]\times[1/2,1]$ if $a\geq 1/2$. 
\begin{proof}
Given the symmetry of $F_\mathrm{incr,BDS}^\mathrm{avg}(F_1,F_2)$ w.r.t. $F_1=F_2$, we perform a natural coordinate transformation 
\begin{equation}
 F_1'=\frac{F_1+F_2}{\sqrt{2}}~(\text{diagonal}),~F_2'=\frac{F_2-F_1}{\sqrt{2}}~(\text{anti-diagonal}),
\end{equation}
after which the input fidelity region becomes $F_1'\in[1/\sqrt{2},\sqrt{2}]$, while $F_2'\in[1/\sqrt{2}-F_1',F_1'-1/\sqrt{2}]$ for $F_1'\in[1/\sqrt{2},3/2\sqrt{2}]$ and $F_2'\in[F_1'-\sqrt{2},\sqrt{2}-F_1']$ for $F_1'\in[3/2\sqrt{2},\sqrt{2}]$. The improvement function can be written as
\begin{equation}
    F_\mathrm{incr,BDS}^\mathrm{avg}(F_1',F_2') = \frac{\sqrt{2}(1-a)^2F_1'^3 - (5a^2-6a+3)F_1'^2 + \sqrt{2}(4a^2-2a+1)F_1' + [(1+a^2) + \sqrt{2}(1-a)^2F_1']F_2'^2 - 2a^2}{2[(1-a)^2F_2'^2 - (1-a)^2F_1'^2 + \sqrt{2}(2a^2 - 3a + 1)F_1' - (2a^2 - 2a + 1)]}.
\end{equation}

We then take the partial derivatives w.r.t. $F_2'$ (anti-diagonal)
\begin{equation}
    \frac{\partial}{\partial F_2'}F_\mathrm{incr,BDS}^\mathrm{avg}(F_1',F_2') = 
    \frac{\sqrt{2}(a^3+a^2-3a+1)F_1' - (2a^3+a^2-2a+1)}{[(1-a)^2F_2'^2 - (1-a)^2F_1'^2 + \sqrt{2}(2a^2 - 3a + 1)F_1' - (2a^2 - 2a + 1)]^2}F_2'
\end{equation}    
The denominator of the RHS is always non-negative. Thus for a fixed $F_1'$ the sign of the partial derivative w.r.t. $F_2'$ is determined by the numerator alone: if the numerator is positive the derivative is negative (positive) for negative (positive) $F_2'$, and if the numerator is negative the derivative is positive (negative) for negative (positive) $F_2'$. In other words, for a fixed $F_1'$ the minimum (maximum) of $F_\mathrm{incr,BDS}^\mathrm{avg}(F_1',F_2')$ is obtained at $(F_1',0)$ if the numerator is positive (negative). Therefore, we examine the numerator as a function of $F_1'$ and $a$
\begin{equation}
    N_{\partial F_2'}(F_1',a) = \sqrt{2}(a^3+a^2-3a+1)F_1' - (2a^3+a^2-2a+1).
\end{equation}

As a routine, we take the partial derivative w.r.t. $a$ of $N_{\partial F_2'}$ and obtain
\begin{equation}
    \frac{\partial}{\partial a}N_{\partial F_2'}(F_1',a) = \sqrt{2} (3 a^2 + 2a - 3) F_1' - (6a^2 + 2a - 2).
\end{equation}
For $a\in[(\sqrt{10}-1)/3,1]$ the partial derivative is upper bounded by $2(3 a^2 + 2a - 3) - (6a^2 + 2a - 2) = -3a^2 - 1$, and for $a\in[0,(\sqrt{10}-1)/3]$ it is upper bounded by $(3 a^2 + 2a - 3) - (6a^2 + 2a - 2) = 2a - 4$. Notably these two upper bounds on $a\in[0,1]$ are both below 0, so $\partial N_{\partial F_2'}/\partial a<0$ for all $F_1'\in[1/\sqrt{2},\sqrt{2}]$ and $a\in[0,1]$. We also observe the following factorization $a^3+a^2-3a+1 = (a-1)(a+1-\sqrt{2})(a+1+\sqrt{2})$. Therefore, $a^3+a^2-3a+1>0$ for $a\in[0,\sqrt{2}-1)$ and $a^3+a^2-3a+1<0$ for $a\in(\sqrt{2}-1,1)$, while $a^3+a^2-3a+1=0$ at $a=\sqrt{2}-1$ (for which $N_{\partial F_2'}\approx -0.485<0$) or $a=1$ (for which $N_{\partial F_2'}=-2<0$). Now we can already see that $N_{\partial F_2'}(F_1',a)<0$ for all $F_1'\in[1/\sqrt{2},\sqrt{2}]$ when $a\in(\sqrt{2}-1,1)$, because under this condition the maximum of $N_{\partial F_2'}(F_1',a)$ must be taken at $F_1'=1/\sqrt{2}$
\begin{align}
    N_{\partial F_2'}(1/\sqrt{2},a) = (a^3+a^2-3a+1) - (2a^3+a^2-2a+1) = -a(a^2+1) \leq 0.
\end{align}

Then we focus on $a\in[0,\sqrt{2}-1)$ so that $a^3+a^2-3a+1 > 0$, and solve $N_{\partial F_2'}(F_1', a)=0$ for $(F_1',a)\in[1/\sqrt{2},\sqrt{2}]\times[0,1]$. The equation is equivalent to 
\begin{equation}
    F_1' = \frac{2a^3+a^2-2a+1}{\sqrt{2}(a^3+a^2-3a+1)},
    \label{eqn:f1'_vs_a}
\end{equation}
We further take the derivative of the RHS of Eqn.~\ref{eqn:f1'_vs_a}
\begin{equation}
    \frac{d}{da}F_1'(a) = \frac{1 + 2a^2 - 8a^3 + a^4}{\sqrt{2}(a^3+a^2-3a+1)^2},
\end{equation}
where $F_1'(a)$ denotes the RHS of Eqn.~\ref{eqn:f1'_vs_a}. The sign of the derivative is only determined by the numerator $(1 + 2a^2 - 8a^3 + a^4)$. The derivative of the numerator w.r.t. $a$ is $4a(a - 3 - 2 \sqrt{2})(a - 3 + 2\sqrt{2})$, which first increases on $[0,3-2\sqrt{2}]$ and then decreases on $[3-2\sqrt{2},\sqrt{2}-1)$. Therefore, the minimum of the numerator on $[0,\sqrt{2}-1)$ can only be obtained at $a=0$ or $a=\sqrt{2}-1$. We then see that the numerator is always positive on $[0,\sqrt{2}-1)$, hence the derivative of $F_1'(a)$ is always positive on $[0,\sqrt{2}-1)$ as well. The minimum of $F_1'(a)$ is then $F_1'(0)=1/\sqrt{2}$. In other words, $N_{\partial F_2'}(1/\sqrt{2},0)=0$.

Then we want to know whether there is $F_1'$ that satisfies Eqn.~\ref{eqn:f1'_vs_a} for $a>0$. As $F_1'(a)$ tends to $\infty$ when $a$ approaches $\sqrt{2}-1$, and $F_1'(a)$ is a monotonic elementary function of $a$ on $[0,\sqrt{2}-1)$, there must exist one and only one $a$ s.t. $F_1'(a)=\sqrt{2}$. In fact, we can directly solve the following equation
\begin{equation}
    \frac{2a^3 + a^2 - 2a + 1}{\sqrt{2}(a^3+a^2-3a+1)} = \sqrt{2} \Rightarrow a = 2\pm\sqrt{3}.
\end{equation}
Only one root $a=2-\sqrt{3}$ is within $[0,\sqrt{2}-1)$. For $a>2-\sqrt{3}$, we will have $F_1'(a)>\sqrt{2}$ so there is no valid $F_1'$ that satisfies Eqn.~\ref{eqn:f1'_vs_a}. On the other hand, and for all $F_1'\in[1/\sqrt{2},\sqrt{2}]$ there exists one and only one $a(F_1')\in[0,2-\sqrt{3}]$ as determined by inverse of Eqn.~\ref{eqn:f1'_vs_a}, s.t. $N_{\partial F_2'}(F_1', a(F_1'))=0$. Note that $a(F_1')$ is also monotonic on $F_1'\in[1/\sqrt{2},\sqrt{2}]$.

In summary, for all $F_1'\in[1/\sqrt{2},\sqrt{2}]$ we have $N_{\partial F_2'}(F_1', a)>0$ when $a<a(F_1')\leq 2-\sqrt{3}$, and $N_{\partial F_2'}(F_1', a)<0$ for all $a\in(2-\sqrt{3},1]$.

Now we evaluate the maximum of $F_\mathrm{incr,BDS}^\mathrm{avg}(F_1',F_2')$. As $1/2>2-\sqrt{3}$ for all $a\geq 1/2$ we have $N_{\partial F_2'}(F_1', a)<0$, therefore the maximum of $F_\mathrm{incr,BDS}^\mathrm{avg}(F_1',F_2')$ is obtained at $F_2'=0$ for a certain $F_1'\in[1/\sqrt{2},\sqrt{2}]$. Hence we only need to examine the maximum of $F_\mathrm{incr,BDS}^\mathrm{avg}(F_1',0)$ on $F_1'\in[1/\sqrt{2},\sqrt{2}]$. Thus we can define a new function
\begin{equation}
    \tilde{F}_\mathrm{incr,BDS}^\mathrm{avg}(F_1',a) = F_\mathrm{incr,BDS}^\mathrm{avg}(F_1',0) = \frac{-2 a^2 + \sqrt{2} (1 - 2 a + 4 a^2) F_1' - (3 - 6a + 5 a^2) F_1'^2 + \sqrt{2} (1 - a)^2 F_1'^3}{2 [-1 + 2a - 2a^2 - \sqrt{2}(1 - a) (2 a - 1) F_1' - (1 - a)^2 F_1'^2]}.
\label{eqn:ridge}
\end{equation}
Our objective is to prove $\tilde{F}_\mathrm{incr,BDS}^\mathrm{avg}(F_1',a)\leq 0$ for $F_1'\in[1/\sqrt{2},\sqrt{2}], a\in[1/2,1]$. We first examine the partial derivative of $\tilde{F}_\mathrm{incr,BDS}^\mathrm{avg}(F_1',a)$ w.r.t. $a$
\begin{equation}
    \frac{\partial}{\partial a}\tilde{F}_\mathrm{incr,BDS}^\mathrm{avg} = \frac{4 (a - a^2) + 2 \sqrt{2} (5 a^2 - 4a) F_1' + 2 (1 + 5a - 9 a^2) F_1'^2 + \sqrt{2} (7 a^2 - 2a -3) F_1'^3 - 2 (a^2 - 1) F_1'^4}{2 [-1 + 2a - 2a^2 - \sqrt{2}(1 - a) (2 a - 1) F_1' - (1 - a)^2 F_1'^2]^2},
\label{eqn:ridge_a_deriv}
\end{equation}
whose sign is only determined by the numerator. 

We can rewrite the numerator as
\begin{equation}
    \tilde{N}(F_1',a) = (2 F_1'^2 - 3 \sqrt{2} F_1'^3 + 2 F_1'^4) + a (4 - 8 \sqrt{2} F_1' + 10 F_1'^2 - 2 \sqrt{2} F_1'^3) + a^2 (-4 + 10 \sqrt{2} F_1' - 18 F_1'^2 + 7 \sqrt{2} F_1'^3 - 2 F_1'^4).
\label{eqn:ridge_a_deriv_numer}
\end{equation}    
The $F_1'$-polynomial coefficient of the $a^2$ term can be proved to be non-negative on $F_1'\in[1/\sqrt{2},\sqrt{2}]$, i.e. for all fixed $F_1'\in[1/\sqrt{2},\sqrt{2}]$ the numerator $\tilde{N}(F_1',a)$ will either be a parabola opening upward or a straight line (note that at $F_1'=1/\sqrt{2},~\sqrt{2}$ the coefficients of both $a^2$ and $a$ are simultaneously zero, thus the straight line corresponds to a constant $\tilde{N}=0$). Now we are interested in evaluating the sign of the numerator, which can be determined by checking the two end points $\tilde{N}(F_1',1/2)$ and $\tilde{N}(F_1',1)$ for all fixed $F_1'$
\begin{equation}
    \tilde{N}(F_1',1/2) = \frac{1}{4}(2 - 3 \sqrt{2} F_1' + 2 F_1'^2) (2 + 3 F_1'^2),~\tilde{N}(F_1',1) = 2 F_1' (\sqrt{2} - 3 F_1' + \sqrt{2} F_1'^2).
\end{equation}
We see that $\tilde{N}(F_1',1/2)\leq 0$ and $\tilde{N}(F_1',1)\leq 0$ for $F_1'\in[1/\sqrt{2},\sqrt{2}]$. Combining previous arguments, we are now sure that $\tilde{F}_\mathrm{incr,BDS}^\mathrm{avg}$ monotonically decreases as $a$ increases. Therefore, the maximum of $\tilde{F}_\mathrm{incr,BDS}^\mathrm{avg}$ can only be obtained for the smallest possible $a$. We then evaluate $\tilde{F}_\mathrm{incr,BDS}^\mathrm{avg}(F_1',1/2)$
\begin{equation}
    \tilde{F}_\mathrm{incr,BDS}^\mathrm{avg}(F_1',1/2) = \frac{5}{2} - \frac{F_1'}{\sqrt{2}} - \frac{4 + \sqrt{2}F_1'}{2 + F_1'^2},
\end{equation}
which is indeed non-positive for $F_1'\in[1/\sqrt{2},\sqrt{2}]$. Hence statement 3 is proved.
\end{proof}

\begin{remark}
For all $a<1/2$ there always exists $(F_1,F_2)\in[1/2,1]\times[1/2,1]$ that achieves positive fidelity improvement w.r.t. $(F_1+F_2)/2$, according to the partial derivative of $F_\mathrm{incr,BDS}^\mathrm{avg}(F_1',F_2')$ w.r.t. $F_1'$ at $F_1'=\sqrt{2},F_2'=0$
\begin{equation}
    \left.\frac{\partial}{\partial F_1'}F_\mathrm{incr,BDS}^\mathrm{avg}\right\vert_{F_1'=\sqrt{2},F_2'=0} = \frac{2a-1}{\sqrt{2}}.
\end{equation}
When $a<1/2$ the above value is negative. Therefore, there exists $(F_1',F_2')$ in the neighborhood of $(\sqrt{2},0)$ (and also in the possible region of $(F_1',F_2')$)  s.t. $F_\mathrm{incr,BDS}^\mathrm{avg}(F_1',F_2')>0$.
\end{remark}

\subsubsection{Proof of statement 1}
Lastly, we prove statement 1, i.e. $F_\mathrm{incr,BDS}^\mathrm{avg}\geq 0$ for all $F_1,F_2\in[1/2,1]$, if $a\leq 1/3$.
\begin{proof}
According to previous results, we separate two different cases, i.e. $a\in[0,2-\sqrt{3}]$ and $a\in[2-\sqrt{3},1/3]$. For $a\in[2-\sqrt{3},1/3]$, we know that $N_{\partial F_2'}(F_1', a)<0$, i.e. $\frac{\partial}{\partial F_2'}F_\mathrm{incr,BDS}^\mathrm{avg} \geq(\leq) 0$ when $F_2'\leq(\geq)0$. Hence, for all fixed $F_1'$, the minimum of $F_\mathrm{incr,BDS}^\mathrm{avg}$ is always obtained on the boundary of possible region of $F_2'$ which corresponds to the boundary of the original $(F_1,F_2)$ region. For $a\in[0,2-\sqrt{3}]$, we know that $N_{\partial F_2'}(F_1', a)$ increases monotonically as $a$ increases. Hence $N_{\partial F_2'}(F_1', a)$ is first negative and then positive as $a$ increases. In other words, for a fixed $F_1'$, $F_\mathrm{incr,BDS}^\mathrm{avg}(F_1',0)$ is maximum when $F_1'<F_1'(a)$ (expression given in Eqn.~\ref{eqn:f1'_vs_a}), and is minimum when $F_1'>F_1'(a)$. Hence, the minimum of $F_\mathrm{incr,BDS}^\mathrm{avg}$ can only be obtained on the boundary of the $(F_1,F_2)$ region, or on the ridge at $F_2'=0\ (F_1=F_2)$. Therefore, we examine the boundary $F_2=1/2$ and $F_1=1$ (as $F_\mathrm{incr,BDS}^\mathrm{avg}(F_1,F_2)$ is symmetric against $F_1=F_2$), and the ridge $F_2'=0\ (F_1=F_2)$. 

On the boundary $F_2=1/2$ we have
\begin{equation}
    F_\mathrm{incr,BDS}^\mathrm{avg}(F_1,1/2) = \frac{(2F_1-1)[(1+a)(1-2a) - 2a(1-a)F_1]}{4[(1-a+2a^2) + 2a(1-a)F_1]}.
\end{equation}
Its partial derivative w.r.t. $a$ is
\begin{equation}
    \frac{\partial}{\partial a}F_\mathrm{incr,BDS}^\mathrm{avg}(F_1,1/2) = -\frac{[(1-a)^2F_1 + a(2-a)](2F_1 - 1)}{[(1-a+2a^2) + 2a(1-a)F_1]^2} \leq 0.
\end{equation}
Hence the minimum of $F_\mathrm{incr,BDS}^\mathrm{avg}(F_1,1/2)$ can only be obtained at $a=2-\sqrt{3}$, which gives
\begin{equation}
    \left.F_\mathrm{incr,BDS}^\mathrm{avg}(F_1,1/2)\right\vert_{a=2-\sqrt{3}} = \frac{-(3\sqrt{3}-5)(4F_1^2 - 8F_1 + 3)}{(52-28\sqrt{3}) + 8(3\sqrt{3}-5)F_1}.
\end{equation}
The above function is concave, so its minimum can only be obtained at the boundary of the $F_1$ domain, i.e.
\begin{equation}
    \left.F_\mathrm{incr,BDS}^\mathrm{avg}(1/2,1/2)\right\vert_{a=2-\sqrt{3}} = 0,~\left.F_\mathrm{incr,BDS}^\mathrm{avg}(1,1/2)\right\vert_{a=2-\sqrt{3}} = \frac{3\sqrt{3}-5}{12-4\sqrt{3}} > 0.
\end{equation}
Thus, $F_\mathrm{incr,BDS}^\mathrm{avg}(F_1,F_2)$ will always be non-negative on the boundary $F_2=1/2$. 

On the boundary $F_1=1$ we have
\begin{equation}
    F_\mathrm{incr,BDS}^\mathrm{avg}(1,F_2) = \frac{F_2}{a(1 - F_2) + F_2} - \frac{1+F_2}{2}.
\end{equation}
Again, its $a$ partial derivative is
\begin{equation}
    \frac{\partial}{\partial a}F_\mathrm{incr,BDS}^\mathrm{avg}(1,F_2) = -\frac{(1-F_2)F_2}{[a(1 - F_2) + F_2]^2} \leq 0.
\end{equation}
Hence the minimum of $F_\mathrm{incr,BDS}^\mathrm{avg}(1,F_2)$ can only be obtained at $a=2-\sqrt{3}$, and with this choice of $a$ we have
\begin{equation}
    \left.F_\mathrm{incr,BDS}^\mathrm{avg}(1,F_2)\right\vert_{a=2-\sqrt{3}} =\frac{F_2}{(2-\sqrt{3})(1 - F_2) + F_2} - \frac{1+F_2}{2}.
\end{equation}
The above function is concave, so its minimum can only be obtained at the boundary of the $F_2$ domain, i.e.
\begin{equation}
    \left.F_\mathrm{incr,BDS}^\mathrm{avg}(1,1/2)\right\vert_{a=2-\sqrt{3}} = \frac{1}{3-\sqrt{3}} - \frac{3}{4} > 0,~\left.F_\mathrm{incr,BDS}^\mathrm{avg}(1,1)\right\vert_{a=2-\sqrt{3}} = 0.
\end{equation}
Thus, $F_\mathrm{incr,BDS}^\mathrm{avg}(F_1,F_2)$ will always be non-negative on the boundary $F_1=1$. 

On the ridge $F_2'=0$, the relevant functions have been obtained, including Eqn.~\ref{eqn:ridge}, and its partial derivative w.r.t. $a$ Eqn.~\ref{eqn:ridge_a_deriv}, whose numerator is in Eqn.~\ref{eqn:ridge_a_deriv_numer}. We are interested in the sign of the numerator, and we first check the two end points $\tilde{N}(F_1',0)$ and $\tilde{N}(F_1',1/3)$
\begin{equation}
    \tilde{N}(F_1',0) = 2F_1'^2 - 3\sqrt{2}F_1'^3 + 2F_1'^4,~\tilde{N}(F_1',1/3) = \frac{2}{9}(4 - 7 \sqrt{2} F_1' + 15 F_1'^2 - 13 \sqrt{2} F_1'^3 + 8 F_1'^4).
\end{equation}
We can see that $\tilde{N}(F_1',0)\leq 0$ and $\tilde{N}(F_1',1/3)\leq 0$ for $F_1'\in[1/\sqrt{2},\sqrt{2}]$, so $\tilde{F}_\mathrm{incr,BDS}^\mathrm{avg}$ monotonically decreases as $a$ increases on $[0,2-\sqrt{3}]$, according to the upward-opening parabola interpretation of the numerator function $\tilde{N}(F_1',a)$. Therefore, the minimum of $\tilde{F}_\mathrm{incr,BDS}^\mathrm{avg}$ can only be obtained at $a=2-\sqrt{3}$:
\begin{equation}
    \tilde{F}_\mathrm{incr,BDS}^\mathrm{avg}(F_1',2-\sqrt{3}) = 1 - \frac{F_1'}{\sqrt{2}} + \frac{\sqrt{2}F_1' - 2}{2F_1'^2 + (\sqrt{6}-3\sqrt{2})F_1' + 4 - \sqrt{3}},
\end{equation}
which can be shown to be non-negative for $F_1'\in[1/\sqrt{2},\sqrt{2}]$. 

Combining all three above cases, the statement is proved. 
\end{proof}

\section{Additional comments for guaranteed improvement}
\subsection{Comment on probabilistic success}
EPPs in general succeed with probability below 1. We have focused on the figures of merit conditioned on success, and compare them with the properties of input states which are treated as baselines to prove guaranteed improvement. In fact, such post-selection is extremely important for probabilistic EPPs in general. Specifically, we can show that for this protocol as an example, the successful output fidelity normalized by the success probability will never be higher than the lower input fidelity.
\begin{proposition}
    for all pair of BDS's with fidelity above 1/2 as inputs to the recurrence EPP, the normalized output fidelity is not higher than the lower input fidelity.
\end{proposition}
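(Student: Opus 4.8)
The plan is to exploit the fact that multiplying the conditional output fidelity $F_\mathrm{BDS}$ by the success probability $p_\mathrm{BDS}$ cancels the shared denominator in Eqn.~\ref{eqn:bds_elems_dejmps}. First I would observe that the normalized output fidelity collapses to a clean bilinear form,
\begin{equation}
    \tilde{F} = p_\mathrm{BDS}\, F_\mathrm{BDS} = \lambda_1\lambda_1' + \lambda_2\lambda_2',
\end{equation}
with no denominator left to bound. Recognizing this cancellation is really the crux of the statement; once it is in hand the rest is elementary.

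Writing $F_1=\lambda_1$ and $F_2=\lambda_1'$ for the two input fidelities and using the manifest symmetry of $\tilde{F}$ under exchanging the two states, I would assume without loss of generality that $F_1\le F_2$, so that the lower input fidelity is $F_1=\lambda_1$. The claim then reduces to showing $\lambda_1\lambda_1' + \lambda_2\lambda_2' \le \lambda_1$, which upon rearranging is equivalent to $\lambda_2\lambda_2' \le \lambda_1(1-\lambda_1')$.

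For the final step I would invoke the normalization $1-\lambda_1' = \lambda_2'+\lambda_3'+\lambda_4' \ge \lambda_2'$ together with the fidelity threshold. Because $\lambda_1\ge 1/2$, the remaining weight satisfies $\lambda_2 \le \lambda_2+\lambda_3+\lambda_4 = 1-\lambda_1 \le 1/2 \le \lambda_1$, so in particular $\lambda_2\le\lambda_1$. Chaining these bounds yields $\lambda_2\lambda_2' \le \lambda_1\lambda_2' \le \lambda_1(\lambda_2'+\lambda_3'+\lambda_4') = \lambda_1(1-\lambda_1')$, which is exactly the required inequality and completes the argument.

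I do not anticipate a genuine obstacle here: the fidelity-above-$1/2$ hypothesis enters only through the single bound $\lambda_2\le\lambda_1$, and no case analysis beyond the trivial symmetric reduction is needed. The only conceptual point worth flagging is that the statement compares the \emph{unconditioned} (normalized) quantity to the lower input fidelity, so it is consistent with the guaranteed-improvement results of Sec.~\ref{sec:improv}, which all concern the \emph{conditional} output and thereby highlight the necessity of post-selection for probabilistic EPPs.
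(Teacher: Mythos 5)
Your proposal is correct and follows essentially the same route as the paper: both hinge on the cancellation $\tilde{F}=p_\mathrm{BDS}F_\mathrm{BDS}=\lambda_1\lambda_1'+\lambda_2\lambda_2'$ and then bound $\lambda_2\lambda_2'$ using the normalization of the residual weights together with the fidelity-above-$1/2$ hypothesis (the paper writes $\lambda_2\lambda_2'\le(1-F_1)(1-F_2)$ and concludes from $(1-F_2)(1-2F_1)\le 0$, while you chain $\lambda_2\le 1-\lambda_1\le\lambda_1$; these are trivially equivalent). The only cosmetic difference is that you fix the lower fidelity by a WLOG symmetry reduction, whereas the paper proves $\tilde{F}\le F_1$ and $\tilde{F}\le F_2$ separately.
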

\begin{proof}
    According to Eqn.~\ref{eqn:bds_elems_dejmps}, for the first BDS characterized by diagonal elements $\vec{\lambda}=(\lambda_1,\lambda_2,\lambda_3,\lambda_4)$ and the second by $\vec{\lambda}'=(\lambda_1',\lambda_2',\lambda_3',\lambda_4')$, the normalized output fidelity is $\tilde{F}=\lambda_1\lambda_1' + \lambda_2\lambda_2'=F_1F_2 + \lambda_2\lambda_2'$, where we use tilde to denote quantity normalized by success probability, and we have renamed the first element of the diagonal-element vector by fidelity. Then consider the following difference
    \begin{equation}
        \tilde{F}-F_1 = \lambda_2\lambda_2' - (1-F_2)F_1 \leq (1-F_1)(1-F_2) - (1-F_2)F_1 = (1-F_2)(1-2F_1) \leq 0.
    \end{equation}
    Similarly we can also show that $\tilde{F}-F_2\leq 0$. Therefore, we have proved that $\tilde{F}\leq\min\{F_1,F_2\}$.
\end{proof}

\subsection{Other guaranteed improvement}
In the main text, we were focused on entanglement measures of average input state (AIS) as the baseline to evaluate the improvement from successful EPP for Werner input states. Here, we present another guaranteed improvement with different baselines, i.e. logarithmic negativity w.r.t. the average of the two input states' logarithmic negativities, for Werner state.
\begin{proposition}\label{thm:avg_ln_improv}
    Let $F^{(W)}(F_1,F_2)$ be the successful output fidelity of the recurrence EPP given two input Werner states $\rho_\mathrm{W}(F_1)$ and $\rho_\mathrm{W}(F_2)$ with $F_1,F_2\geq 1/2$, then the logarithmic negativity of output state is $E_N(F^{(W)}(F_1,F_2))$, and we have $E_N(F^{(W)}(F_1,F_2))\geq [E_N(F_1)+E_N(F_2)]/2$.
\end{proposition}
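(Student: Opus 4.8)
The plan is to exploit the explicit form $E_N(F) = 1 + \log_2 F$ for Bell diagonal states from Eqn.~\ref{eqn:log_neg}, which immediately collapses the claimed inequality to a statement purely about fidelities. Writing out $E_N(F^{(W)}) \geq [E_N(F_1) + E_N(F_2)]/2$ and cancelling the common constant $1$, the inequality becomes $\log_2 F^{(W)} \geq \tfrac{1}{2}(\log_2 F_1 + \log_2 F_2)$, i.e. it is equivalent to
\begin{equation}
F^{(W)}(F_1,F_2) \geq \sqrt{F_1 F_2}.
\end{equation}
So the entire statement reduces to showing that the successful output fidelity dominates the \emph{geometric} mean of the two input fidelities, a stronger-looking companion to the arithmetic-mean bound of Prop.~\ref{thm:werner_fidincr}.

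Next I would pass to the symmetric coordinates $s = F_1 + F_2$ and $p = F_1 F_2$. Using the Werner output-fidelity expression from the proof of Prop.~\ref{thm:werner_fidincr}, a short computation gives the compact form $F^{(W)} = (10p - s + 1)/(8p - 2s + 5)$, whose denominator is positive on the whole domain (it is $9$ times the success probability). Clearing this positive denominator reduces the target inequality to
\begin{equation}
10p - s + 1 - \sqrt{p}\,(8p - 2s + 5) \geq 0.
\end{equation}

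The crucial observation is that, as a function of $s$ with $p$ held fixed, the left-hand side is \emph{affine} with slope $2\sqrt{p} - 1$, which is nonnegative because $F_1, F_2 \geq 1/2$ forces $p \geq 1/4$ and hence $\sqrt{p} \geq 1/2$. Therefore, for each fixed $p$ the expression is minimized at the smallest admissible $s$; by the AM--GM inequality this minimum is attained on the diagonal $F_1 = F_2 = \sqrt{p}$, where $s = 2\sqrt{p}$ (and $\sqrt{p} \in [1/2,1]$ keeps this point inside the domain). Substituting $s = 2\sqrt{p}$ and setting $w = \sqrt{p} \in [1/2,1]$ turns the whole problem into a single-variable cubic, which factors as
\begin{equation}
-8w^3 + 14w^2 - 7w + 1 = (1-w)(4w-1)(2w-1),
\end{equation}
and every factor is nonnegative for $w \in [1/2,1]$; this closes the argument.

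The main obstacle I anticipate is the square root $\sqrt{F_1 F_2}$, which would otherwise obstruct any clean polynomial manipulation in $(F_1,F_2)$. The resolution above sidesteps it entirely: writing things in $(s,p)$ reveals that the only $s$-dependence is linear, so the square root never has to be removed by squaring, and the AM--GM reduction to the diagonal leaves a one-variable polynomial in $w = \sqrt{p}$ that factors explicitly. I would finally double-check the boundary behaviour---equality holds at $w = 1/2$ and $w = 1$ (i.e. $F_1 = F_2 = 1/2$ and $F_1 = F_2 = 1$), consistent with $F^{(W)} = \sqrt{F_1 F_2}$ being tight there---and confirm that no other corner of the domain violates the bound.
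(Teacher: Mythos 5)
Your proof is correct, but it takes a genuinely different and more laborious route than the paper's. The paper's argument is two lines: $E_N(F)=1+\log_2 F$ is concave and increasing in $F$, so $[E_N(F_1)+E_N(F_2)]/2 \le E_N\bigl((F_1+F_2)/2\bigr) \le E_N\bigl(F^{(W)}(F_1,F_2)\bigr)$, where the second inequality is exactly Prop.~\ref{thm:werner_fidincr}. Note that your reduction target $F^{(W)}\ge\sqrt{F_1F_2}$ is in fact a \emph{weaker} statement than the arithmetic-mean bound of Prop.~\ref{thm:werner_fidincr} (by AM--GM, $\sqrt{F_1F_2}\le (F_1+F_2)/2$), not a ``stronger-looking companion,'' so you could have closed the argument immediately by citing that proposition together with AM--GM rather than re-deriving the bound from the explicit rational form of $F^{(W)}$. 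That said, your direct computation is sound: the formula $F^{(W)}=(10p-s+1)/(8p-2s+5)$ checks out (the denominator is $9$ times the success probability), the observation that the left-hand side is affine in $s$ with slope $2\sqrt{p}-1\ge 0$ correctly localizes the minimum on the diagonal $F_1=F_2=\sqrt{p}$, and the factorization $-8w^3+14w^2-7w+1=(1-w)(4w-1)(2w-1)$ is correct and manifestly nonnegative on $w\in[1/2,1]$. What your version buys is a self-contained argument that does not invoke Prop.~\ref{thm:werner_fidincr} as a black box and that exhibits the exact equality cases $F_1=F_2\in\{1/2,1\}$; what it costs is redundancy, since everything after the reduction to $F^{(W)}\ge\sqrt{F_1F_2}$ is already implied by results proved earlier in the paper.
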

\begin{proof}
   We define the increase function as
    \begin{equation}
        E_{N,\mathrm{incr}}^{(W)} = E_N(F^{(W)}(F_1,F_2)) - \frac{E_N(F_1) + E_N(F_2)}{2}.
    \end{equation}
    According to Eqn.~\ref{eqn:log_neg} we have $\frac{d^2}{dF^2}E_N(F)\propto -\frac{1}{F^2} < 0$
    which means that the logarithmic negativity as a function of BDS fidelity is concave. Given the concavity, we have that
    \begin{equation}
        \frac{E_N(F_1) + E_N(F_2)}{2} \leq E_N(\frac{F_1+F_2}{2}).
    \end{equation}
    Lastly recall Prop.~\ref{thm:werner_fidincr} and we complete the proof.
\end{proof}
This improvement is due to the underlying entanglement measure function's concavity. It can be shown that the convex upper and lower bounds of distillable entanglement considered in this work are not guaranteed to be improved w.r.t. the average of the input states' upper and lower bounds.

\section{Proofs for optimal time}\label{sec:proof_time}
In this appendix, we provide detailed proofs of the propositions on optimal time in Sec~\ref{sec:opt_time}. 

\subsection{Proof of Proposition~\ref{thm:rank2_fid_opt_time}}
\begin{proof}
    Firstly, we have the fidelity dynamics of the bit-flipped Bell state under memory bit-flip channel as
    \begin{equation}
        F_\mathrm{bit-flip}(F_0, t) = F_0e^{-2\kappa t} + \frac{1-e^{-2\kappa t}}{2},
    \end{equation}
    where $F_0$ is the raw fidelity. According to the scenario, we can give an analytical expression of the fidelity at $t=t_2$ upon successful purification
    \begin{equation}
        F_\mathrm{succ}^{(2)}(t_2) = F_\mathrm{bit-flip}(F_\mathrm{succ}^{(2)}(t),t_2-t),
    \end{equation}
    where we use superscript (2) to denote the rank-2 BDS, and we have
    \begin{equation}
    \begin{aligned}
        F_\mathrm{succ}^{(2)}(t) =& \frac{F_\mathrm{bit-flip}(F_0, t)F_\mathrm{bit-flip}(F_0, t-t_1)}{F_\mathrm{bit-flip}(F_0, t)F_\mathrm{bit-flip}(F_0, t-t_1) + [1-F_\mathrm{bit-flip}(F_0, t)][1-F_\mathrm{bit-flip}(F_0, t-t_1)]}\\
        =& \frac{1 - (1-2F_0)[e^{-2\kappa t} + e^{-2\kappa(t-t_1)}] + (1-2F_0)^2e^{-2\kappa(2t-t_1)}}{2 + 2(1-2F_0)^2e^{-2\kappa(2t-t_1)}},
    \end{aligned}
    \end{equation}
    and thus after successful purification the fidelity at $t_2$ is
    \begin{equation}
        F_\mathrm{succ}^{(2)}(t_2) = \frac{1 - (1-2F_0)[e^{-2\kappa t_2} + e^{-2\kappa(t_2-t_1)}] + (1-2F_0)^2e^{-2\kappa(2t-t_1)}}{2 + 2(1-2F_0)^2e^{-2\kappa(2t-t_1)}}.
    \end{equation}
    Then it is straightforward to evaluate the derivative of $[F_\mathrm{succ}^{(2)}(t_2)](t)$ w.r.t. $t$, whose analytical expression is
    \begin{equation}\label{eqn:fid_derivative}
        \frac{d}{dt}[F_\mathrm{succ}^{(2)}(t_2)](t) = \frac{2\kappa(2F_0-1)^3(1+e^{2\kappa t_1})e^{2\kappa(2t+t_1-t_2)}}{[e^{4\kappa t} + (1-2F_0)^2e^{2\kappa t_1}]^2}.
    \end{equation}    
    Given that meaningful entanglement generation should have $F_0>1/2$, we should always have $2F_0-1>0$, and this means that $\frac{d}{dt}[F_\mathrm{succ}^{(2)}(t_2)](t)$ is always positive, i.e. $F_\mathrm{succ}^{(2)}(t_2)$ increases monotonically with increasing $t$. 
\end{proof}

\subsection{Proof of Proposition~\ref{thm:rank2_normfid_opt_time}}
\begin{proof}
    We first prove the statement for fidelity. The explicit analytical expression for normalized fidelity at $t_2$ is
    \begin{equation}
        \tilde{F}_\mathrm{succ}^{(2)}(t_2) = \frac{1 - (1-2F_0)(e^{-2\kappa t_2} + e^{-2\kappa(t_2-t_1)}) + (1-2F_0)^2e^{-2\kappa(2t-t_1)}}{4}.
    \end{equation}
    And the derivative w.r.t. $t$ is
    \begin{equation}
        \frac{d}{dt}[\tilde{F}_\mathrm{succ}^{(2)}(t_2)](t) = -\kappa(1-2F_0)^2e^{-2\kappa(2t-t_1)} < 0,
        \label{eqn:fid_norm_derivative}
    \end{equation}
    which means that the later we perform the EPP, the lower normalized fidelity we can get due to decrease in the EPP success probability over time.

    Next we move on to concurrence. Note that although both concurrence and negativity (which is half the concurrence) are linear in fidelity ($\propto 2F-1$), under the definition of normalized quantity the optimal time for normalized fidelity does not naturally apply to normalized concurrence (negativity) because $-p_\mathrm{succ}(t)$ increases when $t$ increases. The expression of normalized concurrence at $t_2$ is 
    \begin{equation}
        \tilde{\mathcal{C}}_\mathrm{succ}^{(2)}(t_2) = 1 + \frac{(2F_0-1)\left[e^{-2\kappa t_2}+e^{-2\kappa(t_2-t_1)}+2e^{-2\kappa(2t-t_1)}(2F_0-1)\right]}{2}.
    \end{equation}
    Then it is straightforward to evaluate the derivative of $\tilde{\mathcal{C}}_\mathrm{succ}^{(2)}(t_2)$ w.r.t. $t$ as 
    \begin{equation}
        \frac{d}{dt}[\tilde{\mathcal{C}}_\mathrm{succ}^{(2)}(t_2)](t) = -4\kappa(2F_0-1)^2e^{-2\kappa(2t-t_1)} < 0.
    \end{equation}
    This monotonicity result concludes the proof.
\end{proof}

\subsection{Proof of Proposition~\ref{thm:norm_DELN_rank2_opt_time}}
\begin{proof}
    We first prove the statement for normalized distillable entanglement. The analytical expression of normalized distillable entanglement at $t_2$ can be written as 
    \begin{equation}
    \begin{aligned}
        \tilde{E}_D =& E_D(F'(t_2))[F(F_0, t)F(F_0, t-t_1) + (1-F(F_0, t))(1-F(F_0, t-t_1))]\\
        =& \frac{e^{-4\kappa t}}{4\ln 2}\left[\ln 2A + (B+C)\ln\left(\frac{B+C}{A}\right) + (B-C)\ln\left(1-\frac{B+C}{A}\right)\right],
    \end{aligned}
    \end{equation}        
    where $E_D$ is the distillable entanglement for a bit-flipped Bell state, and we have defined the following quantities
    \begin{equation}
        A = 2\left(e^{4\kappa t} + (2F_0-1)^2e^{2\kappa t_1}\right),\ B = e^{4\kappa t} + (2F_0-1)^2e^{2\kappa t_1},\ C = (2F_0-1)\left(e^{4\kappa t - 2\kappa t_2} + e^{4\kappa t + 2\kappa t_1 - 2\kappa t_2}\right).
    \end{equation}
    Then it is also straightforward to get its derivative w.r.t. $t$
    \begin{equation}
        \frac{d}{dt}\tilde{E}_D = -\frac{\kappa(2F_0-1)^2}{\ln 2}e^{-2\kappa(2t-t_1)}\ln\left[\frac{4(B+C)}{A}\left(1-\frac{B+C}{A}\right)\right].
    \end{equation}        
    It is obvious that the factor before the logarithmic function is always negative, so we examine the logarithmic function itself to understand the behavior of normalized distillable entanglement w.r.t. the EPP time $t$. Explicitly,
    \begin{equation}
    \begin{aligned}
        &\frac{4(B+C)}{A}\left(1-\frac{B+C}{A}\right)\\
        =& \frac{e^{-4\kappa t_2}\left[e^{4\kappa(2t+t_2)} + (2F_0-1)^2\left(2e^{2\kappa(2t+t_1+2t_2)}-2e^{2\kappa(4t+t_1)}-e^{4\kappa(2t+t_1)}-e^{8\kappa t}\right) + (2F_0-1)^4e^{4\kappa(t_1+t_2)}\right]}{e^{8\kappa t} + 2(2F_0-1)^2e^{2\kappa(2t+t_1)} + (2F_0-1)^4e^{4\kappa t_1}}.
    \end{aligned}
    \end{equation}
    Let $u= e^{4\kappa t}$, and we would like to evaluate if the above expression is above or below 1, so we first solve $u$ which makes the above expression equal to 1, i.e.
    \begin{equation}
    \begin{aligned}
        & u^2 + (2F_0-1)^2\left(2e^{2\kappa t_1}u-2e^{2\kappa(t_1-2t_2)}u^2-e^{4\kappa(t_1-t_2)}u^2-e^{-4\kappa t_2}u^2\right) + (2F_0-1)^4e^{4\kappa t_1}\\
        &= u^2 + 2(2F_0-1)^2e^{4\kappa t_1}u + (2F_0-1)^4e^{4\kappa t_1}\\
        \Rightarrow& \left(2e^{2\kappa(t_1-2t_2)} + e^{4\kappa(t_1-t_2)} + e^{-4\kappa t_2}\right)u^2 = \left(1 + e^{2\kappa t_1}\right)^2e^{-4\kappa t_2}u^2 = 0.
        \label{eqn:root_eqn}
    \end{aligned}
    \end{equation}
    The above equation requires $u=e^{4\kappa t}=0$, which does not have a solution with a positive $t$, thus the expression within logarithmic function will never be 1. According to Eqn.~\ref{eqn:root_eqn} we can see that the numerator is always smaller than the denominator, which suggests that the expression itself is smaller than 1, giving the logarithmic function a negative value. Therefore, we now know that the entire time derivative of the normalized distillable entanglement is always positive, and this leads to the interpretation that if we want to obtain the highest normalized distillable entanglement by the end time, we may prefer to perform the EPP just before the end time. Note, however, an interesting, and apparently contradictory, comparison with Eqn.~\ref{eqn:fid_norm_derivative}, which suggests that if the metric of interest is the normalized fidelity it is desirable to perform the EPP by the time the second pair is generated.

    Then we prove the statement for normalized logarithmic negativity. We can again directly obtain the expression of normalized entanglement negativity
    \begin{equation}
        \tilde{E}_N(t) = \frac{1+(2F_0-1)^2e^{-2\kappa(2t-t_1)}}{2}\log_2\left[\frac{1+(2F_0-1)\left(e^{-2\kappa t_2} + e^{-2\kappa(t_2-t_1)}\right) + (2F_0-1)^2e^{-2\kappa(2t-t_1)}}{1+(2F_0-1)^2e^{-2\kappa(2t-t_1)}}\right].
    \end{equation}
    It is also straightforward to obtain its derivative w.r.t. the time for the EPP $t$
    \begin{equation}
    \begin{aligned}
        \frac{d}{dt}\tilde{E}_N =& \frac{2\kappa(2F_0-1)^2e^{-2\kappa(2t-t_1)}}{\ln 2}\left[\frac{(2F_0-1)\left(e^{-2\kappa t_2} + e^{-2\kappa(t_2-t_1)}\right)}{1+(2F_0-1)\left(e^{-2\kappa t_2} + e^{-2\kappa(t_2-t_1)}\right) + (2F_0-1)^2e^{-2\kappa(2t-t_1)}} \right.\\
        &-\left.\ln\left(\frac{1+(2F_0-1)\left(e^{-2\kappa t_2} + e^{-2\kappa(t_2-t_1)}\right) + (2F_0-1)^2e^{-2\kappa(2t-t_1)}}{1+(2F_0-1)^2e^{-2\kappa(2t-t_1)}}\right) \right]\\
        =& \frac{2\kappa(2F_0-1)^2e^{-2\kappa(2t-t_1)}}{\ln 2}\left[\frac{A}{A+B(t)} - \ln\left(\frac{A+B(t)}{B(t)}\right)\right],
        \label{eqn:EN_norm_derivative}
    \end{aligned}
    \end{equation}
    where
    \begin{equation}
        A= (2F_0-1)\left(e^{-2\kappa t_2} + e^{-2\kappa(t_2-t_1)}\right),\ B(t)= 1+(2F_0-1)^2e^{-2\kappa(2t-t_1)}.
    \end{equation}
    By evaluating the $t$ derivative of the term in the square brackets on the last line of Eqn.~\ref{eqn:EN_norm_derivative}, we see that its derivative is always negative. And for the lowest possible $t=t_1$ the bracket term has value
    \begin{equation}
    \begin{aligned}
        \frac{A}{A+B(t_1)} - \ln\left(\frac{A+B(t_1)}{B(t_1)}\right) =& \frac{(2F_0-1)\left(e^{-2\kappa t_2} + e^{-2\kappa(t_2-t_1)}\right)}{(2F_0-1)\left(e^{-2\kappa t_2} + e^{-2\kappa(t_2-t_1)}\right) + 1+(2F_0-1)^2e^{-2\kappa t_1}}\\
        &- \ln\left(\frac{(2F_0-1)\left(e^{-2\kappa t_2} + e^{-2\kappa(t_2-t_1)}\right) + 1+(2F_0-1)^2e^{-2\kappa t_1}}{1+(2F_0-1)^2e^{-2\kappa t_1}}\right).
    \end{aligned}
    \end{equation}
    By further evaluating the above term's partial derivatives w.r.t. $t_1$ and $t_2$, we see that when $t_1$ decreases and when $t_2$ increases its value increases. Therefore, the highest possible value it can take should correspond to the lowest possible $t_1$, i.e. 0, and highest possible $t_2$, i.e. $+\infty$, gives value 0. Then we know that the bracket term is always negative, which leads to negative $t$ derivative of the normalized entanglement negativity. 
\end{proof}

\subsection{Proof of Proposition~\ref{thm:werner_fid_opt_time}}
\begin{proof}
We have the following expression of output state's fidelity at time $t_2$
\begin{equation}
    F'(t_2) = \frac{9 + 3(4F_0-1)(e^{-2\kappa t_2} + e^{-2\kappa(t_2-t_1)}) + (4e^{-2\kappa(t+t_2-t_1)} + e^{-2\kappa(2t-t_1)})(4F_0-1)^2}{36 + 4e^{-2\kappa(2t-t_1)}(4F_0-1)^2},
\end{equation}
and its $t$ derivative is
\begin{equation}
    \frac{d}{dt}\left[F'(t_2)\right](t) = \kappa(4F_0-1)^2\frac{(4F_0-1)^2e^{-2\kappa(2t+t_2-2t_1)} + 3(4F_0-1)(e^{-2\kappa(2t+t_2-t_1)} + e^{-2\kappa(2t+t_2-2t_1)}) - 18e^{-2\kappa(t+t_2-t_1)}}{[9 + e^{-2\kappa(2t-t_1)}(4F_0-1)^2]^2}.
\end{equation}
The root for the time derivative being zero is easy to calculate, according to the equation
\begin{equation}
    \left((4F_0-1)^2e^{-2\kappa(t_2-2t_1)} + 3(4F_0-1)(e^{-2\kappa(t_2-t_1)} + e^{-2\kappa(t_2-2t_1)})\right)\left[e^{-2\kappa t}\right]^2 - 18e^{-2\kappa(t_2-t_1)}\left[e^{-2\kappa t}\right] = 0,
    \label{eqn:derivative_numerator}
\end{equation}
which gives one single finite positive root determined by (the other root corresponds to $e^{-2\kappa t}=0$, i.e. $t\rightarrow\infty$)
\begin{equation}
    e^{-2\kappa t} = \frac{18e^{-2\kappa(t_2-t_1)}}{(4F_0-1)^2e^{-2\kappa(t_2-2t_1)} + 3(4F_0-1)(e^{-2\kappa(t_2-t_1)} + e^{-2\kappa(t_2-2t_1)})}.
\end{equation}    
That is, we have the extremum point of the $t$ derivative
\begin{equation}
    t^* = \frac{1}{2\kappa}\ln\left[\frac{(4F_0-1)^2e^{2\kappa t_1} + 3(4F_0-1)(1 + e^{2\kappa t_1})}{18}\right],
\end{equation}
which is dependent on $t_1$ only. According to Eqn.~\ref{eqn:derivative_numerator} we know that for $0<t<t^*$ the $t$ derivative of $F'(t_2)$ is positive, and for $t>t^*$ the $t$ derivative of $F'(t_2)$ is negative. Therefore $t^*$ is the optimal time to perform the EPP given the error model of depolarizing channel.

We further compare this time with $t_1$ by taking the difference
\begin{equation}
    e^{2\kappa t^*} - e^{2\kappa t_1} = \frac{(16F_0^2+4F_0-20)e^{2\kappa t_1} + 3(4F_0-1)}{18}.
\end{equation}
Let $e^{2\kappa t^*} - e^{2\kappa t_1} = 0$ we have
\begin{equation}
    t_1^* = \frac{1}{2\kappa}\ln\left[\frac{3(4F_0-1)}{20 - 4F_0 - 16F_0^2}\right],
\end{equation}
which has a positive solution for $\frac{3\sqrt{3}-2}{4}<F_0<1,~\mathrm{s.t.}~ \frac{4F_0-1}{20 - 4F_0 - 16F_0^2}>1$. Note that for $F_0=1$ the denominator $20 - 4F_0 - 16F_0^2=0$, while it is fine to consider $F_0<1$ as in practice there will be no perfect Bell pair. Moreover, the $t_1$ derivative of $e^{2\kappa t^*} - e^{2\kappa t_1}$ is
\begin{equation}
    \frac{d}{dt_1}\left[e^{2\kappa t^*} - e^{2\kappa t_1}\right](t_1) = \frac{8\kappa(4F_0^2+F_0-5)e^{2\kappa t_1}}{18},
\end{equation}
which will always be negative for $0<F_0<1$, i.e. the difference between $e^{2\kappa t^*}$ and $e^{2\kappa t_1}$ decreases as $t_1$ increases. We thus know that for $0<t_1<t_1^*$ we have $t^*>t_1$, and for $t_1>t_1^*$ we have $t^*<t_1$. 

In summary, to obtain the highest successful output fidelity at $t_2$, if the second entangled pair is successfully generated later than $t_1^*$, then the best choice is to perform the EPP immediately, while if the successful generation time of the second pair is earlier than $t_1^*$, the best choice is to perform the EPP at time $t^*$.
\end{proof}

\subsection{Proof of Proposition~\ref{thm:werner_normfid_opt_time}}
\begin{proof}
    We first examine the fidelity. The fidelity dynamics of the Werner state under a memory depolarizing channel is
    \begin{equation}
        F_\mathrm{Werner}(F_0,t) = F_0 e^{-2\kappa t} + \frac{1-e^{-2\kappa t}}{4}.
    \end{equation}
    Then the expression of normalized fidelity at $t_2$ is
    \begin{equation}
        \tilde{F}_\mathrm{succ}^{(W)}(t_2) = \frac{9 + (4F_0-1)\left[3\left(e^{-2\kappa(t_2-t_1)}+e^{-2\kappa t_2}\right) + (4F_0-1)\left(4e^{-2\kappa(t-t_1+t_2)} + e^{-2\kappa(2t-t_1)}\right)\right]}{72},
    \end{equation}
    where we use superscript (W) to denote the Werner state. The derivative w.r.t. $t$ is
    \begin{equation}\label{eqn:fid_norm_derivative_bbpssw}
        \frac{d}{dt}[\tilde{F}_\mathrm{succ}^{(W)}(t_2)](t) = -\frac{\kappa(4F_0-1)^2}{18}\left(2e^{-2\kappa(t-t_1+t_2)} + e^{-2\kappa(2t-t_1)}\right) < 0,
    \end{equation}
    which suggests the normalized fidelity at the end always decreases when purification is performed at later times.

    Then we consider the concurrence and the negativity naturally follows. The normalized concurrence at $t_2$ is 
    \begin{equation}
        \tilde{\mathcal{C}}_\mathrm{succ}^{(W)}(t_2) = \frac{3}{4} + \frac{e^{-2\kappa(t+t_2-t_1)}}{36}\left[\left(4+3e^{2\kappa(t_2-t)}\right)(4F_0-1) + 3e^{2\kappa t}\left(1+e^{-2\kappa t_1}\right)\right](4F_0-1),
    \end{equation}
    and the derivative w.r.t. $t$ is
    \begin{equation}
        \frac{d}{dt}[\tilde{\mathcal{C}}_\mathrm{succ}^{(W)}(t_2)](t) = -\frac{\kappa(4F_0-1)^2}{9}\left(2+3e^{2\kappa(t_2-t)}\right)e^{-2\kappa(t+t_2-t_1)} < 0.
    \end{equation}
    This $t$ derivative is always negative and thus this completes the proof for normalized concurrence.
\end{proof}

\section{Robustness of optimal time}\label{sec:robust_opt_time}
In this appendix we demonstrate the robustness of parameter-independent optimal time derived in the main text. We consider the optimal time for fidelity at $t_2$ conditioned on successful entanglement purification as an example. The key point is that, although in general cases the optimal time can be explicitly dependent on timing parameters such as $t_1,t_2$, and system parameters such as the raw entanglement fidelity of each Bell pair and the decoherence rates for different quantum memories, we can show that the optimal time is independent of these parameters in some practical regime. 

Specifically, we assume the following scenario: First a bit-flipped Bell pair with raw fidelity $F_1$ is generated at time 0, and at a later time $t_1\geq 0$ a second bit-flipped Bell pair with raw fidelity $F_2$ is generated. Between the time origin and time $t_1$ the first Bell pair decoheres. Again we only want to utilize \textit{one} Bell pair at a later time $t_2\geq t_1$. Without loss of generality, we measure the old Bell pair in EPP. We also assume that all 4 quantum memories undergo a bit-flip channel, and their decoherence rates are denoted by $\kappa_{1(2),1(2)}$, where the first subscript labels the Bell pair and the second subscript labels the quantum memory for storing one Bell pair. We consider the general case where $F_1$ can be different from $F_2$ and $\kappa_{1(2),1(2)}$ can be non-identical.
\begin{proposition}
    Under the assumed quantum network scenario, with bit-flipped Bell state under memory bit-flip channel, in order to obtain highest fidelity Bell state to utilize conditioned on successful purification, the recurrence EPP should always be performed at the latest possible time, if the utilization time $t_2$ satisfies:
    \begin{equation}
        t_2\leq\max\{(\kappa_{11}+\kappa_{12})^{-1},(\kappa_{21}+\kappa_{22})^{-1}\},
    \end{equation}
     and the quantum memory decoherence rates satisfy:
     \begin{equation}
         \frac{\kappa_{21}+\kappa_{22}}{\kappa_{11}+\kappa_{12}}\geq \frac{e^2 - (2F_2-1)^2}{e^2 + (2F_2-1)^2 + 2e^{-1}(2F_1-1)(2F_2-1)}.
     \end{equation}
\end{proposition}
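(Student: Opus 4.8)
The plan is to reduce the statement, exactly as in the proof of Proposition~\ref{thm:rank2_fid_opt_time}, to the sign of a single time derivative, and then to show that the two hypotheses are precisely what is needed to keep that derivative non-negative in the one non-trivial regime. First I would record the reduced dynamics: since each bit-flip channel multiplies the Bell-coefficient contrast by $e^{-\kappa_{ij}t}$, the two memories holding pair $i$ act jointly as a single effective bit-flip channel of rate $K_i:=\kappa_{i1}+\kappa_{i2}$, so the $i$-th pair has fidelity $\tfrac12\big(1+(2F_i-1)e^{-K_it}\big)$ (with the second pair's clock started at $t_1$). Because the input is rank-2, the successful-output fidelity from Eqn.~\ref{eqn:bds_elems_dejmps} is the symmetric expression $F_1(t)F_2(t)/[F_1(t)F_2(t)+(1-F_1(t))(1-F_2(t))]$ already used in the proof of Proposition~\ref{thm:rank2bds_fidincr}, and since the older pair is measured the surviving purified state decoheres afterwards at the newer pair's rate $K_2$ from $t$ to $t_2$.

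Writing $a=2F_1-1$, $b=2F_2-1$, $x=e^{-K_1t}$, $y=e^{-K_2(t-t_1)}$, I would then compute $\tfrac{d}{dt}F^{\mathrm{succ}}(t_2)$ in the manner of Eqn.~\ref{eqn:fid_derivative}. The key simplification is that $e^{-K_2(t-t_1)}e^{-K_2(t_2-t)}=e^{-K_2(t_2-t_1)}$ is independent of $t$, which collapses the derivative to a manifestly positive prefactor times the bracket
\[
\tilde H(t)=2abK_2\,xy+b^2(K_1+K_2)\,y^2-(K_1-K_2).
\]
If $K_1\le K_2$ every term is non-negative and we are done, so the entire difficulty lies in the regime $K_1>K_2$, in which the faster-decohering pair is the one measured. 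Since $x$ and $y$ are decreasing in $t$ while $y$ increases with $t_1$, the minimum of $\tilde H$ over the admissible $(t,t_1)$ is attained at $t=t_2,\ t_1=0$, reducing the claim to the single inequality $2abK_2e^{-(K_1+K_2)t_2}+b^2(K_1+K_2)e^{-2K_2t_2}\ge K_1-K_2$.

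From here I would invoke the hypotheses. In the regime $K_1>K_2$ the bound in the first condition reads $t_2\le K_2^{-1}$, i.e. $K_2t_2\le1$, which controls the second term through $e^{-2K_2t_2}\ge e^{-2}$. The main obstacle is the cross term $e^{-(K_1+K_2)t_2}$: the first condition does not bound $K_1t_2$, so a naive estimate would force one to discard this positive term and obtain only the weaker threshold $K_2/K_1\ge(e^2-b^2)/(e^2+b^2)$. The resolution, and the reason the stated threshold carries the extra $2e^{-1}ab$, is that the second condition is self-improving: a one-line check shows its right-hand side is always at least $1/2$ (equivalently $e^2\ge 3b^2+2e^{-1}ab$, which holds since $a,b\le1$), hence $K_2/K_1\ge1/2$, whence $(K_1+K_2)t_2=(1+K_1/K_2)\,K_2t_2\le3$ and $e^{-(K_1+K_2)t_2}\ge e^{-3}$. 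Substituting $e^{-3}$ and $e^{-2}$ into the reduced inequality, dividing by $K_1$ and multiplying by $e^2$, turns it into $r(e^2+b^2+2e^{-1}ab)\ge e^2-b^2$ with $r=K_2/K_1$, which is exactly the second condition; this forces $\tilde H\ge0$, so the latest time is optimal. The delicate point to get right is thus the interplay of the two hypotheses in bounding the cross term: the second condition must first be used to guarantee $r\ge1/2$ (and hence the $e^{-3}$ estimate) before it reappears as the precise sufficient inequality.
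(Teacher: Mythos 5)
Your reduction is exactly the paper's: you arrive at the same expression for $F_\mathrm{succ}(t_2)$, and your bracket $\tilde H(t)=2abK_2xy+b^2(K_1+K_2)y^2-(K_1-K_2)$ is precisely $y^2$ times the quantity whose sign the paper analyzes, so up to that point the two proofs coincide. Where you genuinely diverge is the endgame, and your version is the more careful one. The paper bounds the ratio $\bigl(e^{2K_2(t-t_1)}-b^2\bigr)/\bigl(e^{2K_2(t-t_1)}+b^2+2ab\,e^{-K_1t+K_2(t-t_1)}\bigr)$ by substituting $e^{2K_2t_2}\le e^2$ \emph{and} $e^{-K_1t_2}\ge e^{-1}$, which requires both $K_1t_2\le 1$ and $K_2t_2\le 1$; the stated hypothesis $t_2\le\max\{K_1^{-1},K_2^{-1}\}$ only delivers one of these (in the nontrivial regime $K_1>K_2$ it gives $K_2t_2\le1$ but says nothing about $K_1t_2$). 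Your observation that the second hypothesis is self-improving --- its right-hand side is at least $1/2$, hence $K_1/K_2\le 2$, hence $(K_1+K_2)t_2\le 3$ and $e^{-(K_1+K_2)t_2}\ge e^{-3}$ --- supplies exactly the missing control on the cross term, and your final inequality $2abK_2e^{-3}+b^2(K_1+K_2)e^{-2}\ge K_1-K_2$ rearranges to precisely the stated threshold, so the proposition is recovered with no change to its hypotheses. In short: same skeleton, but your handling of the cross term closes a gap that the paper's chain of inequalities leaves open under the $\max$ (rather than $\min$) formulation of the timing condition. One cosmetic caveat: you should state explicitly that $a,b\ge 0$ (i.e.\ $F_1,F_2\ge 1/2$) is being used both to drop the $K_1\le K_2$ case and to locate the minimum of $\tilde H$ at $t=t_2$, $t_1=0$.
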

\begin{proof}
    The entanglement fidelity at $t_2$ conditioned on successful entanglement purification at $t$ is
    \begin{equation}
        F_\mathrm{succ}(t_2) = \frac{\left[
        \begin{aligned}
            &e^{-(\kappa_{21}+\kappa_{22})t_1} + e^{-(\kappa_{11}+\kappa_{12})t-(\kappa_{21}+\kappa_{22})(t_1+t_2-t)}(2F_1-1) +\\
            &e^{-(\kappa_{21}+\kappa_{22})t_2}(2F_2-1) + e^{-(\kappa_{11}+\kappa_{12}+\kappa_{21}+\kappa_{22})t}(2F_1-1)(2F_2-1)
        \end{aligned}
        \right]}{2[e^{-(\kappa_{21}+\kappa_{22})t_1} + e^{-(\kappa_{11}+\kappa_{12}+\kappa_{21}+\kappa_{22})t}(2F_1-1)(2F_2-1)]}.
    \end{equation}
    Then we can also evaluate its derivative with respect to $t$, and discover that the denominator of the derivative is always non-negative. Therefore, we will focus on the time-dependent part of the derivative's numerator, where we have removed a common time-dependent factor which is always positive
    \begin{equation}
    \begin{aligned}
        &(\kappa_{21}+\kappa_{22})\left[(2F_2-1)^2 + e^{2(\kappa_{21}+\kappa_{22})(t-t_1)} + 2e^{-(\kappa_{11}+\kappa_{12})t+(\kappa_{21}+\kappa_{22})(t-t_1)}(2F_1-1)(2F_2-1)\right]\\
        &+ (\kappa_{11}+\kappa_{12})\left[(2F_2-1)^2 - e^{2(\kappa_{21}+\kappa_{22})(t-t_1)}\right],        
    \end{aligned}
    \end{equation}
    whose sign will determine the monotonicity of $F_\mathrm{succ}(t_2)$ with varying $t$. Now we start to include the practical assumption on utilization time $t_2\leq\max\{(\kappa_{11}+\kappa_{12})^{-1},(\kappa_{21}+\kappa_{22})^{-1}\}$, which is justified by the fact that in practice we will not store entangled states forever, but instead we will re-initialize quantum memories (cut-off) if they idle for too long time, as $(\kappa_{i1}+\kappa_{i2})^{-1}$ is the characteristic time scale of the fidelity decay for the $i$-th entangled pair. Now we use the assumption to upper bound the following quantity
    \begin{equation}
    \begin{aligned}
        &\frac{e^{2(\kappa_{21}+\kappa_{22})(t-t_1)} - (2F_2-1)^2}{e^{2(\kappa_{21}+\kappa_{22})(t-t_1)} + (2F_2-1)^2 + 2e^{-(\kappa_{11}+\kappa_{12})t+(\kappa_{21}+\kappa_{22})(t-t_1)}(2F_1-1)(2F_2-1)}\\
        \leq& \frac{e^{2(\kappa_{21}+\kappa_{22})t_2} - (2F_2-1)^2}{e^{2(\kappa_{21}+\kappa_{22})t_2} + (2F_2-1)^2 + 2e^{-(\kappa_{11}+\kappa_{12})t_2}(2F_1-1)(2F_2-1)}\\
        \leq& \frac{e^2 - (2F_2-1)^2}{e^2 + (2F_2-1)^2 + 2e^{-1}(2F_1-1)(2F_2-1)}.
    \end{aligned}
    \end{equation}
    It is then obvious that if
    \begin{equation}
        \frac{\kappa_{21}+\kappa_{22}}{\kappa_{11}+\kappa_{12}}\geq \frac{e^2 - (2F_2-1)^2}{e^2 + (2F_2-1)^2 + 2e^{-1}(2F_1-1)(2F_2-1)},
    \end{equation}
    the numerator of $t$ derivative of $F_\mathrm{succ}(t_2)$ will be positive, which means that the optimal time for performing the recurrence EPP is as late as possible.
\end{proof}
This proposition provides a sufficient condition for the optimal time to be still the latest possible time, when raw fidelities and memory decoherence rates can be different. We note that the upper bound on the RHS is never above 1 for all possible $F_1,F_2\in[1/2,1]$, because it can be shown that the highest value for it is taken at $F_1=F_2=1/2$, which gives value 1. Therefore, for reasonable values of raw fidelity strictly above 1/2, the sufficient condition suggests that the ratio $(\kappa_{21}+\kappa_{22})/(\kappa_{11}+\kappa_{12})$ can be both smaller than or larger than 1. This can be easily satisfied in reality if the decoherence rates of standardized quantum memories do not differ too much, which means that the parameter-independent optimal time is practically robust.

\section{Approximate location of optimal time transition border}\label{sec:approx_border}
As mentioned in the main text, the crossover between the two extreme strategies, i.e. the latest possible time and the earliest possible time, may be the result of sign change in the coefficient of $t$-linear part in the $t$ derivative of fidelity at $t_2$. Therefore, we are able to, at least approximately, determine the location of the border through the evaluation of the coefficient of the linear part in terms of relative strengths $a$ and $b$. We demonstrate this explicitly below.

In fact, we can obtain the expression of fidelity at $t_2$ conditioned on successful purification analytically
\begin{equation}
    F'(t) = \frac{\left[
    \begin{aligned}
    &e^{4(2\gamma - \gamma_x - \gamma_y)t_1 - 8(\gamma - \gamma_x - \gamma_y)t - 8\gamma t_2}\\
    \times&\left[9e^{8\gamma(t-t_1+t_2)} + 3(4F_0-1)e^{8\gamma t}\left(e^{-4(2\gamma - \gamma_x - \gamma_y)(t_1-t_2)} + e^{-8\gamma t_1 + 4(2\gamma - \gamma_x - \gamma_y)t_2}\right) \right.\\
    &+ (4F_0-1)^2\left(e^{4(\gamma+\gamma_x)(t-t_1+t_2)} + e^{4(\gamma+\gamma_x)t-4(\gamma+\gamma_y)(t_1-t_2)} + e^{4(\gamma+\gamma_x)(t-t_1)+4(\gamma+\gamma_y)t_2}\right.\\ 
    &+\left.\left.e^{4(\gamma - \gamma_x + 2\gamma_y)t + 4(\gamma+\gamma_x)t_2 - 4(\gamma+\gamma_y)t_1} + e^{8(\gamma - \gamma_x - \gamma_y)t + 8\gamma t_2 - 4(2\gamma - \gamma_x - \gamma_y)t_1} \right)\right]
    \end{aligned}\right]}{36e^{4(\gamma_x+\gamma_y)(2t-t_1)} + 4(4F_0-1)^2}.
\end{equation}
However, the complicated form forbids us from exactly solving for the value of $t$ such that $F'(t)=0$, which should in principle indicate the location where the optimal time transition happens.

Now we perform analytical approximation focusing on the leading order term of $t$. The first and most straightforward approach is to derive the linear term in the Taylor expansion of $F'(t)$ w.r.t. $t$. We can further simplify our treatment by performing linear approximation to the exponential functions involved in the expression for small exponents, i.e. $e^x\approx 1+x$ for $|x|\ll 1$. Thus in the second approximation approach, following the Taylor expansion of $F'(t)$, we can substitute the exponential functions in the linear term coefficient with their linear approximation. With one more step further, we can perform linear approximation of the exponential functions before the Taylor expansion, as the third approximation approach. It is clear that the level of approximation is increasing from the first approach to the third approach, and the accuracy is more and more dependent on the smallness of the exponents. It can be verified with specific choices of $(t_1, t_2, F_0)$ parameters that, while the estimated transition border location via the third approach can significantly deviate from the true result for larger $t_1$ and $t_2$, the third approach can obtain quite a good approximation for smaller $t_1$ and $t_2$. Therefore, the third approach, while bold, still captures the essence of the transition border location.

Now we derive analytical expression of the location of the border using linear term coefficient from the third approximation approach which gives the simplest form of the linear term coefficient among the three approaches
\begin{equation}
    c_t = \frac{\left[
    \begin{aligned}
    & 6(8F_0^2 + 1)[1 - 8t_2 + 4t_1(2 - w)](8F_0^2 - 4F_0 + 5 - 18t_1w)\\
    &+ 4[10 - w + 8F_0(2F_0 - 1)(1 - w) + 36(t_1 - 2t_2)w]\\
    &\times [6(t_1 - t_2) - 1 + 2(F_0 + (1-5F_0)(t_1 - 2t_2)w + F_0^2(4t_1(6 + w) - 8t_2(3 + w) - 5))]
    \end{aligned}
    \right]}{\left(8F_0^2 - 4F_0 + 5 - 18t_1w\right)^2},
    \label{eqn:lin_coeff_approx}
\end{equation}
where we have defined $w= x+y$. As this approximation is valid for small $t_1$ and $t_2$, which will thus be much smaller than $F_0$ and other $O(1)$ terms, we can perform further approximation
\begin{equation}
    c_t \approx \frac{6(8F_0^2 + 1)(8F_0^2 - 4F_0 + 5) - 4[10 - w + 8(2F_0^2 - F_0)(1 - w)](10F_0^2 - 2F_0 + 1)}{\left(8F_0^2 - 4F_0 + 5\right)^2},
\end{equation}    
from which we derive the root for $c_t=0$ as
\begin{equation}
    w(F_0) = x + y = \frac{8F_0^2 -4F_0 + 5}{20F_0^2 - 4F_0 + 2}.
    \label{eqn:anlytical_approx_border}
\end{equation}
The above equation defines a family of straight lines with slope $k=-1$, compatible with our observation from numerical results demonstrated in the main text. As an example, by substituting $F_0=0.95$ in the expression above we arrive at $w(0.95)\approx 0.518$ which is surprisingly close to 0.52 discovered numerically, even though this approximation of border location is obtained by approximations which are only valid for small $t_1$ and $t_2$.

\end{document}